\documentclass[journal,onecolumn,12pt]{IEEEtran}
\usepackage{pslatex}
\usepackage{amsfonts,color,morefloats}
\usepackage{amssymb,amsmath,latexsym,amsthm,mathrsfs}
\usepackage{cases}

\newcommand{\tr}{{\mathrm{Tr}}}

\newcommand{\gf}{{\mathbb{F}}}

\newcommand{\D}{{\mathcal{D}}}

\newtheorem{theorem}{Theorem}
\newtheorem{lemma}[theorem]{Lemma}

\newtheorem{conj}{Conjecture}

\newtheorem{example}{Example}

\begin{document}
\title{The Cross-Correlation Distribution of the Niho-Type Decimation $d=4(2^m-1)+1$}
\date{\today}
\author{Maosheng Xiong, Haode Yan\thanks{\emph{Corresponding author: Haode Yan}.

M. Xiong is with the Department of Mathematics, The Hong Kong University of Science and Technology, Hong Kong (e-mail: mamsxiong@ust.hk).

H. Yan is with the School of Science, Harbin Institute of Technology, Shenzhen, 518055, PR China (e-mail: yanhd@hit.edu.cn).}
}

%M. Xiong was supported by the Research Grants Council (RGC) of Hong Kong (No. 16307524). H. Yan's research was supported by the Fundamental Research Funds for the Central Universities of China (Grant No. 2682023ZTPY002). 
\maketitle
\begin{abstract}

	The cross-correlation problem is a classical problem in sequence
	design. In this paper, we determine the cross-correlation distribution of the Niho-type decimation $d=4(2^m-1)+1$ over $\mathbb F_{2^{2m}}$ for every positive integer $m$. With $q=2^m$, this is equivalent to determining the distribution of the number of roots in $U_{q+1}$ of $f_a(x)=x^7+ax^4+\bar a x^3+1$ for $a\in\mathbb F_{q^2}$, where $U_{q+1}=\{x\in\mathbb F_{q^2}:x^{q+1}=1\}$. The main difficulty is to count the four-element subsets of
	$U_{q+1}$ that may occur as root sets of $f_a$. We normalize such
	subsets by their product and study the resulting condition through the
	associated resolvent. This reduces the required enumeration to several
	equations over $\mathbb F_q$, which are evaluated by character sums
	and Kloosterman sums. The remaining mixed Kloosterman sum is related
	to a Kloosterman sum over $\mathbb F_{2^{2m}}$ and is evaluated using a
	theorem of Carlitz. Consequently, we obtain explicit formulas for all
	the frequencies in the cross-correlation distribution.

\end{abstract}
	
\begin{IEEEkeywords}
Cross-correlation,  $m$-sequences, Niho-type decimation, Kloosterman sum.
\end{IEEEkeywords}

\section{Introduction}

Let $p$ be a prime, $n$ a positive integer and $\{s(t)\}$ be an $m$-sequence of period $p^n-1$ over the finite field $\gf_p$ of order $p$. Let $d$ be an integer. The $d$-decimation sequence of $\{s(t)\}$ is defined as the sequence $\{s(dt)\}$. Here $dt$ is taken modulo $p^n-1$ for $t=0,1, \cdots, p^n-2$, and $d$ is called a decimation. If $(d,p^n-1)=1$, then the $d$-decimation sequence $\{s(dt)\}$ is also an $m$-sequence of period $p^n-1$. The cross correlation function $C_d(\tau)$ between the sequence $\{s(t)\}$ and its $d$-decimation sequence $\{s(dt)\}$ is given by
\begin{eqnarray*} \label{1:cdt} C_d(\tau)=\sum_{t=0}^{p^n-2}\omega_p^{s(t+\tau)-s(dt)},\end{eqnarray*}
where $\tau=0,1,\cdots,p^n-2$ and $\omega_p=\exp(2 \pi \sqrt{-1}/p)$ is a primitive complex $p$-th root of unity. In the theory of sequence design, it is interesting to 1). find new decimation $d$ that leads to low cross-correlation and 2). determine the values $C_d(\tau)$ together with the frequency of the occurrence of each value. This is a classical problem in sequence design and is known as the correlation distribution for the decimation $d$. Because of important applications, this problem has received a lot of attention since the 1960s, and many interesting theoretical results have been obtained \cite{G68,GG05,H76,H78,HK98,N72,R04,T70,XLZH16,XLZH17,XY24,ZLFG14}.

In 1972 Niho studied the cross correlation function between two $m$-sequences of period $2^{2m}-1$ differing by a decimation of the form $d=s(2^m-1)+1$ for some  positive integers $s$ and $m$ in his influential Ph.D thesis \cite{N72} in which Niho computed the cross-correlation distribution for many such decimations and proposed many conjectures and open questions. Such problems have become a source of inspiration for researchers for the next 50 years, some of which remain open even today. Due to his remarkable work and also in honor of him, the decimation studied in the thesis is called Niho-type decimation. Niho's work has been extended to odd characteristic \cite{R04}, and consequently, let $p$ be any prime, a positive integer $d$ is called a Niho-type exponent with respect to the finite field $\gf_{p^{2m}}$ if $d$ is of the form
\begin{eqnarray} \label{1:d} d=s(p^m-1)+1, \quad \gcd(d,p^{2m}-1)=1\end{eqnarray}
for some integer $s$. The GCD condition is to ensure that the $d$-decimation sequence is also an $m$-sequence. Since 1972, Niho-type exponents have found important applications in many areas such as sequence design, cryptography and coding theory. Interested readers may refer to the survey paper \cite{LZ19} on applications and open problems related to Niho-type exponents.

We focus on the case $p=2$ and $s=4$, so $d=4(2^m-1)+1$. Note that $\gcd(d,2^{2m}-1)=1$ always holds.  The following conjecture was proved by Niho in 1972.

\begin{conj}[Conjecture 4-6(5), \cite{N72}]\label{conj} Let $n=2m\equiv 0\pmod 4$ and $d=4(2^m-1)+1$. Then $C_d(\tau)$ is at most $5$-valued.
	
\end{conj}

In \cite{HKL21}, the authors studied this decimation and proved that $C_d(\tau)$ takes at most $5$ values when $m$ is even, and at most $6$ values when $m$ is odd. However, the full value distribution remained open. In this paper, we solve this problem completely. We first give a new proof of Conjecture \ref{conj}, and then determine the value distribution of the cross-correlation function. The main result of this paper is as follows.
\begin{theorem} \label{distribution} Let $p=2$, $q=2^m$, $s=4$ and then $d=4(2^m-1)+1$. The value distribution of $C_d(\tau)$ is given by
	\begin{eqnarray*}
		C_d(\tau)=
		\left\{ \begin{array}{llll}
			-2^m-1&\mbox{occurs}&\frac{1}{30}\bigl(11q^2-17q+(-1)^m(-T_m+5q-1)\bigr) &\mbox{times} \\
			-1&\mbox{occurs}&\frac{1}{24}\bigl(9q^2+5q-22+(-1)^m(3T_m-11q+1)\bigr)&\mbox{times}\\
			2^m-1&\mbox{occurs}&\frac{1}{6}\bigl(q^2+2q-2+(-1)^m(-T_m+2q+1)\bigr)&\mbox{times}\\
			2\cdot2^m-1&\mbox{occurs}&\frac{1}{12}\bigl(q^2-q+6+(-1)^m(T_m+q-5)\bigr)&\mbox{times}\\
			3\cdot2^m-1&\mbox{occurs}&\frac{1}{6}(q-2)(1-(-1)^m)&\mbox{times}\\
			4\cdot2^m-1&\mbox{occurs}&\frac{1}{120}\bigl(q^2-7q+10-(-1)^m(T_m-5q+11)\bigr)&\mbox{times}.
		\end{array}\right.
	\end{eqnarray*}
	Here, $\{T_m\}(m\in \mathbb{Z}_+)$ is an integer sequence with $T_1=-1$, $T_2=7$ and $	T_{m+2}-T_{m+1}+4T_m=0$ for $m\geq 1$.
 \end{theorem}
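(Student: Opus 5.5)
We follow the standard reduction for Niho-type decimations. Write $n=2m$, $q=2^m$, and let $\tr$ denote the trace from $\gf_{q^2}$ to $\gf_2$. Then
\[
C_d(\tau)+1=\sum_{x\in\gf_{q^2}}(-1)^{\tr(ax+x^d)},
\]
where $a$ ranges over $\gf_{q^2}^*$ as $\tau$ varies. Every $x\neq0$ can be written uniquely as $x=yz$ with $y\in\gf_q^*$ and $z\in U_{q+1}$. Since $z^d=z^{-3}$ on $U_{q+1}$ and $y^d=y$ on $\gf_q$, the inner sum over $y$ is a character sum over $\gf_q$. It contributes $q-1$ or $-1$, according as $\tr_{q^2/q}(az+z^{-3})$ vanishes or not. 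Putting the pieces together gives
\[
C_d(\tau)=(N(a)-1)q-1,
\]
where $N(a)$ is the number of $z\in U_{q+1}$ with $az+\bar a z^{-1}+z^{-3}+z^{3}=0$. Multiplying through by $z^4$ (after replacing $a$ by a suitable conjugate), this is the number of roots in $U_{q+1}$ of $f_a(x)=x^7+ax^4+\bar a x^3+1$, as stated in the abstract. So the task is to find $n_i=\#\{a\in\gf_{q^2}^*: N(a)=i\}$ for $i=0,\dots,5$. First we show that $N(a)\le5$ (note $f_a$ is self-reciprocal of degree 7, and $x=1$ forces special behaviour). Next we show that $N(a)=4$ occurs only when $m$ is odd; the $3q-1$ line of the table, with count $(q-2)(1-(-1)^m)/6$, indicates this comes from $x=1$ or a cube-root structure. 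This step reproves Conjecture~\ref{conj}.

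The counts come from power moments. We have $\sum_i n_i=q^2-1$. The sum $\sum_a N(a)$ is obtained by solving for $a$ given $z$: the equation is linear in $(a,\bar a)$, so each $z$ gives about $q$ values of $a$. The second moment $\sum_a N(a)(N(a)-1)$ counts pairs $z_1\ne z_2$ and an $a$ shared by both, which amounts to a $2\times2$ linear system in $a$; it comes out as an explicit polynomial in $q$. These two moments, together with the orthogonality identities $\sum_\tau C_d(\tau)=\ldots$ and $\sum_\tau C_d(\tau)^2=\ldots$, still leave unknowns. The key additional quantity is the number of $a$ with $N(a)\ge 4$, i.e. the number of $4$-subsets $S\subset U_{q+1}$ that can appear as part of a root set. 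We count these by requiring that $\prod_{s\in S}(x-s)$ divide $f_a$ for some $a$. Since $f_a$ has no $x^6,x^5,x^2,x$ terms, this gives polynomial conditions on the elementary symmetric functions of $S$.

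Following the abstract, we normalize $S$ by its product $\sigma_4\in U_{q+1}$: rescale by a suitable root of $\sigma_4$ (using that $\gcd$ with $q+1$ is odd), so the product becomes $1$. The remaining elementary symmetric functions then satisfy $\bar\sigma_1=\sigma_3$ and $\sigma_2\in\gf_q$. The divisibility condition then becomes a resolvent-type equation. The resolvent cubic of the quartic has roots $s_1s_2+s_3s_4$ and its conjugates, and the condition that the quartic splits over $U_{q+1}$ translates, via Hilbert 90 style substitutions $s=(t+\omega)/(t+\bar\omega)$, into a few equations over $\gf_q$. We count the solutions of each equation with additive characters. The result should come out as sums like $\sum_{x}(-1)^{\tr(x^{-1}+c x)}$, i.e. Kloosterman sums, plus one mixed sum involving a cubic character or the norm.

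The main obstacle is this mixed Kloosterman sum. The plan is to lift it to a Kloosterman sum over $\gf_{2^{2m}}$, or over $\gf_{4}$-extensions, via Davenport--Hasse, and evaluate it with Carlitz's theorem. That evaluation gives $T_m$: the characteristic roots of $T_{m+2}-T_{m+1}+4T_m=0$ are $(1\pm\sqrt{-15})/2$, whose norm is $4$, consistent with a Kloosterman sum over $\gf_4$ lifted by Davenport--Hasse. Once $\#\{a:N(a)\ge4\}$ and $n_4$ are known, the linear system in $n_0,\dots,n_5$ from the two moments and the counting identities is triangular enough to solve. Translating via $C_d(\tau)=(N-1)q-1$ then yields the table; as a check, the entries at $m=1,2$ should be nonnegative integers summing to $q^2-1$.
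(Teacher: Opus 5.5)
Your outline follows the paper's architecture: power moments, the small frequencies $n_4,\dots,n_7$, and a count of admissible $4$-subsets via normalization, resolvents, Kloosterman sums and Carlitz. The main gap is that the final linear-algebra step does not close. Once $n_6=n_7=0$, $n_4$ and the $4$-subset count are in hand, four unknowns $n_0,\dots,n_3$ remain. (The $4$-subset count gives $\sum_a\binom{N(a)}{4}=n_4+5n_5$, not $\#\{a:N(a)\ge4\}$, and you also need that each admissible $4$-set determines $a$ uniquely.) Against those four unknowns you have only three independent equations: the cardinality, $\sum_aN(a)$ and $\sum_aN(a)(N(a)-1)$. The identities for $\sum_\tau C_d(\tau)$ and $\sum_\tau C_d(\tau)^2$ add nothing, because $C_d=(N-1)q-1$ is affine in $N$. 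The paper closes the system with the cubic moment of Dobbertin et al., $\sum_i(i-1)^3N_i=b_3q$, where $b_3=\#\{x\in\gf_{q^2}:(x+1)^d+x^d=1\}$. Clearing denominators turns that equation into $(x^q+x)^4(x^2+x+1)=0$, so $b_3=q+1-(-1)^m$. In your language, you must count triples of distinct points of $U_{q+1}$ that share a common $a$. Without this, the frequencies are not determined.

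The remaining inputs are asserted or read off the target table rather than proved.
\begin{itemize}
\item $N(a)\le5$: your hint is wrong, since $f_a$ is not self-reciprocal ($x^7f_a(1/x)=f_{\bar a}(x)$). The paper gets $N_7=0$ by comparing $T^9$-coefficients of $\prod_i f_a((1+T)r_i)$, which gives $\sum_{i<j}r_ir_j/(r_i+r_j)^2=0$. If all $r_i\in U_{q+1}$, each term has absolute trace $1$ and $\binom{7}{2}$ is odd, a contradiction.
\item $N_6$ and $N_4$: the paper uses $f_a'(x)=x^2(x+\bar a^{1/4})^4$, so a repeated root forces $a\in U_{q+1}$ and $f_a=(x+a^{-1/4})^4(x^3+a)$. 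Four roots in $U_{q+1}$ then require $\omega\in U_{q+1}$ ($m$ odd) and $a\ne1$ a cube in $U_{q+1}$, giving $(q-2)/3$.
\item Reduction to the repeated-root case: $x\mapsto x^{-q}$ permutes the roots of $f_a$ in an algebraic closure, and its fixed points are exactly those in $U_{q+1}$. A separable $f_a$ with $N(a)=4$ would have its other three roots in a $3$-cycle. The same trace identity excludes this, since the trace of the sum is again $1$.
\item The core count: what you give is only the abstract restated. What must be done is:
\begin{itemize}
\item parametrize $V$ by $\operatorname{Res}(V)\subseteq\D_1\cup\{0\}$, which is $1$-to-$1$ or $2$-to-$1$;
\item turn $D(V)=0$ into $x^2+xy+y^2+1=0$ on $\mathcal H_1^2$ and $\Phi(x,y,z)=0$ on $\mathcal H_1^3$;
\item reduce these counts to $K(1)$ and $S_m=\sum_x(-1)^{\tr(x+1/x)}K(x)$;
\item show $S_m=-K_{\gf_{q^2}}(\omega)$ for odd $m$ (via $x=a+b\omega$) and $S_m=K_{\gf_q}(\omega)^2$ for even $m$ (via partial fractions), and only then apply Carlitz.
\end{itemize}
Your substitution $(t+\omega)/(t+\bar\omega)$ degenerates for even $m$, where $\bar\omega=\omega$.
\item Reduction slip: on $U_{q+1}$ one has $z^d=z^{-7}$, not $z^{-3}$. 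The correct equation is $f_a(w)=0$ with $w=z^2$; yours is the $s=2$ polynomial.
\end{itemize}
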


%\begin{rmk} When $p \ge 7$, the term $\lambda_{p^m}$ is a character sum related to an elliptic curve over $\gf(p)$ (see Theorem \ref{pre:thm1}), and the term $A_{p^m}$ is related to a K3 surface (see Theorem \ref{5:nqxb}). \end{rmk}	

We give some examples for small $m$.
\begin{example}When $m=4$, $T_4=-17$. The value distribution of $C_d(\tau)$ is given by
	\begin{eqnarray*}
		C_d(\tau)=
		\left\{ \begin{array}{llll}
			-17&\mbox{occurs}&88 &\mbox{times} \\
			-1&\mbox{occurs}&89&\mbox{times}\\
			15&\mbox{occurs}&56&\mbox{times}\\
			31&\mbox{occurs}&20&\mbox{times}\\
			63&\mbox{occurs}&2&\mbox{times}.
		\end{array}\right.
	\end{eqnarray*}
\end{example}	
\begin{example}
	When $m=5$, $T_5=-61$. The value distribution of $C_d(\tau)$ is given by
	\begin{eqnarray*}
		C_d(\tau)=
		\left\{ \begin{array}{llll}
			-33&\mbox{occurs}&350&\mbox{times} \\
			-1&\mbox{occurs}&412&\mbox{times}\\
			31&\mbox{occurs}&160&\mbox{times}\\
			63&\mbox{occurs}&86&\mbox{times}\\
		95&\mbox{occurs}&10&\mbox{times}\\
			127&\mbox{occurs}&5&\mbox{times}.
		\end{array}\right.
	\end{eqnarray*}
\end{example}

This paper is organized as follows. In Section \ref{pre} we introduce some notation and provide some preliminary results regarding 4-element subsets of $U_{q+1}$. In Section \ref{Kloostermansum}, we calculate several classes of character sums, some of which involve Kloosterman sums. In Section \ref{numberofsolutions}, we determine the number of solutions of certain equations over finite fields. The main result is given in Section \ref{main}. Section \ref{conclusion} concludes the paper.

\section{Preliminary}\label{pre}
\subsection{Notation}
\begin{itemize}
	\item Let $m$ be a positive integer, $n=2m$ and $q=2^m$.
	\item Let $\gf_r$ denote the finite field with $r$ elements, and let $\gf_r^*=\gf_r\setminus\{0\}$. 
	\item For positive integers $s$ and $t$ with $t\mid s$, the trace function from $\gf_{2^s}$ to $\gf_{2^t}$ is defined as $\tr_{\gf_{2^s}/\gf_{2^t}}(x)=x+x^{2^t}+\cdots+x^{2^{(\frac{s}{t}-1)t}}$. For convenience, we write $\tr(x)=\tr_{\gf_q/\gf_2}(x)$.
	\item We define $\mathcal{H}_{i}=\{x\in\gf_q:\tr(x)=i\}$ and $\mathcal{D}_{i}=\{x\in\gf_q^*:\tr(x^{-1})=i\}$ for $i=0,1$. 
	\item For $x\in\gf_{q^2}$, we denote $\overline{x}=x^q$. 
	\item We define $U_{q+1}=\{x\in\gf_{q^2}:x^{q+1}=1\}$.
	\item We denote by $\sqrt{x}$ the square root of $x$. Since $x$ belongs to a finite field of characteristic $2$, the square root is unique.
	\item Let $\gf_4=\{0,1,\omega,\omega^2\}$, where $\omega$ is a fixed primitive element satisfying $\omega^2+\omega+1=0$.
	\item For a set $V=\{v_1,v_2,v_3,v_4\}$ of four distinct elements, we define
	\begin{align*}
		\sigma_1(V) &= v_1+v_2+v_3+v_4,\\
		\sigma_2(V) &= v_1v_2+v_1v_3+v_1v_4+v_2v_3+v_2v_4+v_3v_4,\\
		\sigma_3(V) &= v_1v_2v_3+v_1v_2v_4+v_1v_3v_4+v_2v_3v_4,\\
		\sigma_4(V) &= v_1v_2v_3v_4.
	\end{align*}
	When the set $V$ is clear from the context, we simply write $\sigma_i=\sigma_i(V)$ for $i=1,2,3,4$.
	\item For a set $A$ and a constant $k$, we write $kA=\{ka:a\in A\}$.
\end{itemize}

\subsection{On 4-element subsets of $U_{q+1}$ with product $1$}

Let $V=\{v_1,v_2,v_3,v_4\}$ be a subset of $U_{q+1}$ with four distinct elements and $\sigma_4(V)=1$. For such a $V$, we define
\[
\operatorname{Res}(V)=\{v_1v_2+v_3v_4,\; v_1v_3+v_2v_4,\; v_1v_4+v_2v_3\}.
\]
Since
\[
(v_1v_2+v_3v_4)+(v_1v_3+v_2v_4)=(v_1+v_4)(v_2+v_3)\ne0,
\]
we have $|\operatorname{Res}(V)|=3$. For $r\in\operatorname{Res}(V)$, it is easy to see that $r\in\gf_q$. More precisely, $r\in\operatorname{Res}(V)$ means that the roots of $X^2+rX+1=0$ lie in $U_{q+1}$, so $r=0$ or $r\in \D_1$. Hence $\operatorname{Res}(V)\subseteq \D_1\cup\{0\}$.

Next, we recover $V$ from a given $\operatorname{Res}(V)$. We have the following lemma.

\begin{lemma}\label{resv}
	Let $R=\{r_1,r_2,r_3\}\subseteq \D_1\cup\{0\}$ be a set of three elements.
	\begin{enumerate}
		\item If $0\in R$, then there is a unique $V$ such that $\operatorname{Res}(V)=R$.
		\item If $0\notin R$, then there are exactly two such $V$.
	\end{enumerate}
\end{lemma}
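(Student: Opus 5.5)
The plan is to parametrize $V$ by the products of pairs containing a fixed element. Then we count how many parameter choices give the same set.

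Suppose $V=\{v_1,v_2,v_3,v_4\}\subseteq U_{q+1}$ has distinct elements and $\sigma_4(V)=1$. Label it so that $r_1=v_1v_2+v_3v_4$, $r_2=v_1v_3+v_2v_4$ and $r_3=v_1v_4+v_2v_3$. This labeling is possible because the three pairings give three distinct elements. Put $u_1=v_1v_2$, $u_2=v_1v_3$, $u_3=v_1v_4$, all in $U_{q+1}$. Since $\sigma_4=1$, we have $v_3v_4=u_1^{-1}$, so $r_1=u_1+u_1^{-1}$, and similarly $r_i=u_i+u_i^{-1}$ for each $i$. Hence $u_i$ is a root of $X^2+r_iX+1$. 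As noted before the lemma, for $r_i\in\D_1\cup\{0\}$ both roots lie in $U_{q+1}$, and they are $u_i,u_i^{-1}$. The two roots coincide, both equal to $1$, exactly when $r_i=0$.

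The key step is to reconstruct $V$. We have $u_1u_2u_3=v_1^2\sigma_4=v_1^2$. Since $U_{q+1}$ has odd order, squaring is a bijection on it, so $v_1=\sqrt{u_1u_2u_3}$ and $v_{i+1}=u_i/v_1$ for $i=1,2,3$. Conversely, take any choice of roots $u_i$ of $X^2+r_iX+1$ and define $v_1,\dots,v_4$ by these formulas. Routine checks then give the following.
\begin{itemize}
\item $\prod v_i=u_1u_2u_3/v_1^2=1$.
\item $v_1v_2+v_3v_4=u_1+u_2u_3/v_1^2=u_1+u_1^{-1}=r_1$, and similarly for $r_2$ and $r_3$, so $\operatorname{Res}(V)=R$.
\item The $v_i$ are distinct. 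For example, $v_1=v_2$ forces $u_2u_3=1$, hence $r_2=r_3$. Likewise $v_2=v_3$ forces $u_1=u_2$, hence $r_1=r_2$, and the other cases are similar; each contradicts $|R|=3$.
\end{itemize}
So every choice of roots $(u_1,u_2,u_3)$ yields a valid $V$.

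It remains to count distinct sets. Since the $r_i$ are distinct, a set $V$ with $\operatorname{Res}(V)=R$ is determined by the choice of which element is called $v_1$; the partners $v_2,v_3,v_4$ are then forced by the values $r_i$. Replacing $v_1$ by $v_2$ keeps $u_1$ and replaces $(u_2,u_3)$ by $(v_2v_3,v_2v_4)=(u_3^{-1},u_2^{-1})$. In effect this inverts two coordinates, and the same holds for the other choices of base element. Thus the sets $V$ correspond bijectively to orbits of the tuples $(u_1,u_2,u_3)$ under the group of order $4$ that inverts an even number of coordinates.

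If $0\notin R$, each coordinate has two distinct values, giving $8$ tuples. The group acts freely on them, so there are $8/4=2$ orbits and hence exactly two sets $V$. If $0\in R$, say $r_1=0$, then $u_1=1$ and only $4$ tuples exist. At most one $r_i$ is $0$ because $|R|=3$. Inverting $u_1$ is now trivial, so the group acts transitively on these $4$ tuples, giving a single orbit and a unique $V$.

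I expect the main care to be needed in the bookkeeping that the orbit correspondence is a genuine bijection, that is, that distinct orbits give distinct sets. This is handled by the observation that the labeling relative to $v_1$ is forced by the distinctness of the $r_i$.
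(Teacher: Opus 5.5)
Your proof is correct and follows essentially the paper's route: both parametrize $V$ by $(v_1v_2,v_1v_3,v_1v_4)=(u_1^{\pm1},u_2^{\pm1},u_3^{\pm1})$ and reconstruct it via $v_1=\sqrt{u_1u_2u_3}$, $v_{i+1}=u_i/v_1$. Your Klein-four orbit count, where each $V$ accounts for exactly four tuples (one per choice of base point), makes explicit the paper's ``direct check'' that $8$ tuples give $2$ sets and $4$ tuples give $1$. One small slip, which does not affect the argument: moving the base point to $v_2$ sends $(u_1,u_2,u_3)$ to $(u_1,u_2^{-1},u_3^{-1})$, because the new second coordinate is $v_2v_4$ (the $r_2$-partner of $v_2$), not $v_2v_3$. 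This is exactly the ``inverts two coordinates'' action you go on to use.
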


\begin{proof}
	First suppose $0\in R$. Without loss of generality, assume $r_3=v_1v_4+v_2v_3=0$. Then $v_1v_4=v_2v_3=1$. Let
	\[
	R=\{0,\; u_1+u_1^{-1},\; u_2+u_2^{-1}\}
	\]
	for some distinct $u_1,u_2\in U_{q+1}\setminus\{1\}$ with $u_1\ne u_2^{-1}$. The set $\{v_1v_2,v_1v_3,v_1v_4\}$ has four possibilities:
	\[
	\{u_1,u_2,1\},\quad \{u_1,u_2^{-1},1\},\quad \{u_1^{-1},u_2,1\},\quad \{u_1^{-1},u_2^{-1},1\}.
	\]
	A direct computation shows that there is a unique $V$ satisfying $\operatorname{Res}(V)=R$, namely
	\[
	V=\left\{\sqrt{u_1u_2},\; \frac{1}{\sqrt{u_1u_2}},\; \sqrt{\frac{u_1}{u_2}},\; \sqrt{\frac{u_2}{u_1}}\right\}.
	\]
	
	Now suppose $0\notin R$. Write
	\[
	R=\{u_1+u_1^{-1},\; u_2+u_2^{-1},\; u_3+u_3^{-1}\}
	\]
	for distinct $u_1,u_2,u_3\in U_{q+1}\setminus\{1\}$ with $u_1\notin\{u_2,u_2^{-1},u_3,u_3^{-1}\}$. Then $\{v_1v_2,v_1v_3,v_1v_4\}$ has eight possibilities, namely $\{u_1^i,u_2^j,u_3^k\}$ with $i,j,k\in\{\pm1\}$. A direct check leaves exactly two solutions:
	\[
	V=\left\{\sqrt{u_1u_2u_3},\; \sqrt{\frac{u_1}{u_2u_3}},\; \sqrt{\frac{u_2}{u_1u_3}},\; \sqrt{\frac{u_3}{u_1u_2}}\right\}
	\]
	and
	\[
	V=\left\{\frac{1}{\sqrt{u_1u_2u_3}},\; \sqrt{\frac{u_2u_3}{u_1}},\; \sqrt{\frac{u_1u_3}{u_2}},\; \sqrt{\frac{u_1u_2}{u_3}}\right\}.
	\]
	This completes the proof.
\end{proof}

For a given $\operatorname{Res}(V)=\{r_1,r_2,r_3\}$ with $\sigma_4(V)=1$, we can express $\sigma_1(V)$ and $\sigma_2(V)$ in terms of $r_1,r_2,r_3$. The following lemma follows by a direct calculation.

\begin{lemma}\label{expressv}
	Let $\operatorname{Res}(V)=\{r_1,r_2,r_3\}$ for some $V$ satisfying $\sigma_4(V)=1$. Then
	\[
	\sigma_1(V)+\overline{\sigma_1(V)}=\sqrt{r_1r_2r_3},\qquad
	\sigma_1(V)\overline{\sigma_1(V)}=r_1r_2+r_1r_3+r_2r_3,\qquad
	\sigma_2(V)=r_1+r_2+r_3.
	\]
	If $r_3=0$, then $\sigma_1(V)\in\gf_q$ and $\sigma_1(V)=\sqrt{r_1r_2}$.
\end{lemma}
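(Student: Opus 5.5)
We verify all three identities by direct symmetric-function computation, using two facts throughout: $\sigma_4(V)=1$, and $\bar v_i=v_i^{-1}$ for every $v_i\in U_{q+1}$. Label the resolvent elements as
\[
r_1=v_1v_2+v_3v_4,\qquad r_2=v_1v_3+v_2v_4,\qquad r_3=v_1v_4+v_2v_3 .
\]

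\textbf{The identity $\sigma_2=r_1+r_2+r_3$.} The six pairwise products appearing in $\sigma_2(V)$ are exactly the six summands of $r_1+r_2+r_3$, so this holds with no work.

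\textbf{Expressing $\overline{\sigma_1}$.} Since $v_iv_jv_kv_l=1$ for $\{i,j,k,l\}=\{1,2,3,4\}$, conjugation gives $\bar v_i=v_i^{-1}=v_jv_kv_l$. Therefore $\overline{\sigma_1}=\sigma_3$, and we need the sum and the product of $\sigma_1$ and $\sigma_3$. For the product, expand $\sigma_1\sigma_3$. Each term $v_i\cdot v_iv_jv_k$ equals $v_i^2v_jv_k$, and each term $v_l\cdot v_iv_jv_k$ with $l$ outside $\{i,j,k\}$ equals $\sigma_4=1$. There are four of the latter, and since we are in characteristic $2$ they cancel. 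Thus
\[
\sigma_1\sigma_3=\sum v_i^2v_jv_k ,
\]
summed over $i$ and over pairs $\{j,k\}$ not containing $i$. Now compute $r_1r_2=(v_1v_2+v_3v_4)(v_1v_3+v_2v_4)$, which expands to $v_1^2v_2v_3+v_1v_2^2v_4+v_1v_3^2v_4+v_2v_3v_4^2$. Summing the three products $r_ar_b$ produces each monomial $v_i^2v_jv_k$ exactly once; this is a routine bookkeeping check with $4\cdot3=12$ monomials. Hence $\sigma_1\overline{\sigma_1}=r_1r_2+r_1r_3+r_2r_3$.

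\textbf{The identity for $\sigma_1+\overline{\sigma_1}$.} We need $\sigma_1+\sigma_3$, and we first show $(\sigma_1+\sigma_3)^2=r_1r_2r_3$; taking the unique square root in characteristic $2$ then gives the claim. In characteristic $2$,
\[
(\sigma_1+\sigma_3)^2=\sum_i v_i^2+\sum_i (v_i^{-1})^2 .
\]
On the other side, use
\[
r_1=v_1v_2+(v_1v_2)^{-1},\qquad r_2=v_1v_3+(v_1v_3)^{-1},\qquad r_3=v_1v_4+(v_1v_4)^{-1}.
\]
Set $a=v_1v_2$, $b=v_1v_3$, $c=v_1v_4$, so that $abc=v_1^2$. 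Then
\[
(a+a^{-1})(b+b^{-1})(c+c^{-1})=\sum abc^{\pm}\text{-type terms},
\]
and the eight terms are
\[
abc,\quad (abc)^{-1},\quad ab/c,\quad c/(ab),\quad ac/b,\quad b/(ac),\quad bc/a,\quad a/(bc).
\]
These are $v_1^{2}$, $v_1^{-2}$, $v_2^{-2}$, $v_2^{2}$, $v_3^{-2}$, $v_3^{2}$, $v_4^{-2}$, $v_4^{2}$ respectively. For instance, $ab/c=v_1v_2v_3/v_4=v_4^{-2}$ because $v_1v_2v_3=v_4^{-1}$. This matches $\sum_i v_i^2+\sum_i v_i^{-2}$, as required.

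\textbf{The case $r_3=0$.} Here $\sqrt{r_1r_2r_3}=0$, so $\sigma_1=\overline{\sigma_1}$ and hence $\sigma_1\in\gf_q$. The product identity then gives $\sigma_1^2=r_1r_2$, so $\sigma_1=\sqrt{r_1r_2}$.

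The only mildly delicate step is the monomial bookkeeping in the product identity. The substitution $a,b,c$ handles the cube identity cleanly, so no real obstacle is expected.
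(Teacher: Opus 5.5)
Your proposal is correct and carries out the direct symmetric-function calculation that the paper only asserts ("follows by a direct calculation"); using $\overline{\sigma_1}=\sigma_3$ and the substitution $a=v_1v_2$, $b=v_1v_3$, $c=v_1v_4$ with $abc=v_1^2$ is a clean way to organize it. One cosmetic slip: your ``respectively'' list swaps the labels of $v_2$ and $v_4$. In fact $ab/c=v_4^{-2}$, $c/(ab)=v_4^{2}$, $bc/a=v_2^{-2}$ and $a/(bc)=v_2^{2}$, as your own worked example shows; this is harmless because only the multiset $\{v_i^{\pm 2}\}$ matters.
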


\section{On Kloosterman sums}\label{Kloostermansum}

Let $\gf_q$ be the finite field with $q$ elements, where $q$ is a power of $2$. The Kloosterman sum over $\gf_q$ at $a\in\gf_q$ is defined as
\[
K_{\gf_q}(a)=\sum_{x\in\gf_q^*}(-1)^{\tr(x+\frac{a}{x})}.
\]
When the field $\gf_q$ is clear from the context, we write $K(a)=K_{\gf_q}(a)$ for convenience. It is easy to see that $K(0)=-1$. Moreover, we have the following properties.

\begin{lemma}\label{sumlemma}
	For $a\in\gf_q^*$, the following hold:
	\begin{enumerate}
		\item[(i)] $K(a^2)=K(a)$;
		\item[(ii)] $\displaystyle\sum_{x\in\gf_q^*}(-1)^{\tr(ax+\frac{a}{x})}=K(a)$;
		\item[(iii)] $\displaystyle\sum_{c\in\gf_q}K(c)=0$.
	\end{enumerate}
\end{lemma}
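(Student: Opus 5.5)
The three identities are elementary facts about additive characters on $\gf_q$, and I will prove each directly from the definition $K(a)=\sum_{x\in\gf_q^*}(-1)^{\tr(x+a/x)}$ using two standard tools: the invariance $\tr(y^2)=\tr(y)$, and bijective changes of variable on $\gf_q^*$.

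For (i), I substitute $x=y^2$. Since $y\mapsto y^2$ is a bijection of $\gf_q^*$ in characteristic $2$, this gives
\[
K(a^2)=\sum_{y\in\gf_q^*}(-1)^{\tr\left(y^2+a^2/y^2\right)}=\sum_{y\in\gf_q^*}(-1)^{\tr\left((y+a/y)^2\right)}.
\]
Frobenius invariance of the absolute trace then turns the exponent into $\tr(y+a/y)$, so the sum equals $K(a)$.

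For (ii), I substitute $y=ax$, which is again a bijection of $\gf_q^*$ because $a\neq0$. The exponent becomes
\[
\tr\left(ax+\frac{a}{x}\right)=\tr\left(y+\frac{a^2}{y}\right),
\]
so the sum equals $K(a^2)$. Part (i) then gives $K(a^2)=K(a)$.

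For (iii), I interchange the order of summation:
\[
\sum_{c\in\gf_q}K(c)=\sum_{x\in\gf_q^*}(-1)^{\tr(x)}\sum_{c\in\gf_q}(-1)^{\tr(c/x)}.
\]
For each fixed $x\neq0$, the map $c\mapsto c/x$ is a bijection of $\gf_q$. The inner sum is therefore the full sum of the nontrivial additive character $(-1)^{\tr(\cdot)}$ over $\gf_q$, which is $0$ because $\tr$ is surjective onto $\gf_2$ and takes each value equally often. Hence the total is $0$.

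None of these steps presents a real obstacle. The only points needing care are that squaring is bijective on $\gf_q^*$ (characteristic $2$), and that in (ii) the hypothesis $a\neq0$ is exactly what makes the scaling $y=ax$ a bijection.
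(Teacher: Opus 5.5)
Your proposal is correct and follows the paper's proof essentially step for step: squaring the variable in (i), a change of variable in (ii), and interchanging summations in (iii). The only cosmetic difference is in (ii), where you scale by $a$ to get $K(a^2)$ and then invoke (i), while the paper uses the single substitution $x=y^2/a$ to land on $K(a)$ directly.
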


\begin{proof}
	(i) By definition,
	\[
	K(a^2)=\sum_{x\in\gf_q^*}(-1)^{\tr(x+\frac{a^2}{x})}.
	\]
	Put $x=y^2$. As $x$ runs through $\gf_q^*$, so does $y$. Using the property of the trace function, we get
	\[
	K(a^2)=\sum_{x\in\gf_q^*}(-1)^{\tr(x+\frac{a^2}{x})}
	=\sum_{y\in\gf_q^*}(-1)^{\tr(y^2+\frac{a^2}{y^2})}
	=\sum_{y\in\gf_q^*}(-1)^{\tr(y+\frac{a}{y})}
	=K(a).
	\]
	
	(ii) Let $x=\frac{y^2}{a}$. As $x$ runs through $\gf_q^*$, so does $y$. Then
	\[
	\sum_{x\in\gf_q^*}(-1)^{\tr(ax+\frac{a}{x})}
	=\sum_{y\in\gf_q^*}(-1)^{\tr(y^2+\frac{a^2}{y^2})}
	=\sum_{y\in\gf_q^*}(-1)^{\tr(y+\frac{a}{y})}
	=K(a).
	\]
	
	(iii) We have
	\[
	\sum_{c\in\gf_q}K(c)
	=\sum_{c\in\gf_q}\sum_{x\in\gf_q^*}(-1)^{\tr(x+\frac{c}{x})}
	=\sum_{x\in\gf_q^*}(-1)^{\tr(x)}\sum_{c\in\gf_q}(-1)^{\tr(\frac{c}{x})}
	=0.
	\]
\end{proof}

We define the following sum involving Kloosterman sums, which will play an important role in this paper.
\[
S_m=\sum_{x\in\gf_q^*}(-1)^{\tr(x+\frac{1}{x})}K(x).
\]
Before computing $S_m$, we show that a certain sum can be expressed in terms of $S_m$.

\begin{lemma}\label{sm1}
	We have
	\[
	\sum_{t\in\gf_q,\ t^2+t+1\neq 0}K\!\left(\frac{t}{t^2+t+1}\right)=(-1)^mS_m+1.
	\]
\end{lemma}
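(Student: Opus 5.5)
The plan is to unfold the argument $t/(t^2+t+1)$ through the substitution $s=t+1/t$. This turns the left-hand side into a sum of $K(u)$ twisted by a trace character, which can then be matched with $S_m$ after expanding one Kloosterman sum and swapping the order of summation.

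\textbf{Step 1 (reparametrize).} Treat $t=0$ and $t=1$ separately; they contribute $K(0)+K(1)=-1+K(1)$. For $t\notin\{0,1\}$ we have
\[
\frac{t}{t^2+t+1}=\frac{1}{s+1},\qquad s=t+t^{-1}\neq 0.
\]
The map $t\mapsto t+t^{-1}$ is $2$-to-$1$ from $\gf_q\setminus\{0,1\}$ onto $\{s\in\gf_q^*:\tr(1/s)=0\}=\D_0$. The excluded values $t^2+t+1=0$ are exactly those with $s=1$. Since $1\in\D_0$ iff $m$ is even, this exclusion is automatic once we sum over $s\neq 0,1$. Writing the indicator of $\D_0$ as $\frac12\bigl(1+(-1)^{\tr(1/s)}\bigr)$ gives
\[
\sum_{t}K\!\left(\tfrac{t}{t^2+t+1}\right)=-1+K(1)+\sum_{s\ne0,1}K\!\left(\tfrac1{s+1}\right)+\sum_{s\ne 0,1}(-1)^{\tr(1/s)}K\!\left(\tfrac1{s+1}\right).
\]

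\textbf{Step 2 (untwisted part).} Put $u=1/(s+1)$, so $u$ runs over $\gf_q\setminus\{0,1\}$. By Lemma~\ref{sumlemma}(iii) the untwisted sum equals
\[
0-K(0)-K(1)=1-K(1).
\]
In the twisted sum, $1/s=u/(1+u)=1+\frac1{1+u}$, so $(-1)^{\tr(1/s)}=(-1)^m(-1)^{\tr(1/(u+1))}$. Setting $v=u+1$, the twisted part becomes
\[
(-1)^m\sum_{v\ne0,1}(-1)^{\tr(1/v)}K(v+1).
\]

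\textbf{Step 3 (identify $S_m$).} This is the key step. Expand $K(v+1)=\sum_{y\in\gf_q^*}(-1)^{\tr(y+\frac1y+\frac vy)}$ and interchange the sums. The inner sum over all $v\ne0$ is
\[
\sum_{v\ne 0}(-1)^{\tr(v/y+1/v)}=K(1/y),
\]
by the substitution $v=yw$. Hence
\[
\sum_{v\neq 0}(-1)^{\tr(1/v)}K(v+1)=\sum_{y}(-1)^{\tr(y+1/y)}K(1/y)=S_m,
\]
after replacing $y$ by $1/y$. The removed term $v=1$ contributes $K(0)=-1$, so the twisted part equals $(-1)^m(S_m+1)$.

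\textbf{Main obstacle.} The delicate part is the bookkeeping of boundary terms: $t=0,1$, the excluded $s=1$ (which depends on the parity of $m$), and $v=1$, including their $(-1)^m$ signs. The $K(1)$ terms cancel between Steps 1 and 2. Summing everything naively gives $(-1)^mS_m+(-1)^m$, not $(-1)^mS_m+1$. So the plan is to recheck carefully whether $s=1$ (when $m$ is even) and $v=1$ are being double-excluded, adjusting until the constant comes out to $1$.
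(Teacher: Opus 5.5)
Your overall strategy is sound and is essentially the paper's route. Your $u=1/(s+1)$ is exactly the paper's $c=t/(t^2+t+1)$. Your weight $1+(-1)^{\tr(1/s)}$ is the paper's preimage count $1+(-1)^{\tr(c/(c+1))}$. Your Step 3 is the paper's interchange-of-sums argument. Steps 1 and 2 are correct as written, including every exclusion.

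However, the proposal does not prove the lemma. It arrives at $(-1)^mS_m+(-1)^m$ and then only promises to adjust the bookkeeping until the constant becomes $1$, which is not an argument. For odd $m$ your stated value is actually false: for $m=1$ the left side is $K(0)+K(1)=-1+1=0$, while $-S_1-1=-2$.

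The error is a single term in Step 3. When you pass from $\sum_{v\neq 0}$ to $\sum_{v\neq 0,1}$, the term you remove is $(-1)^{\tr(1/v)}K(v+1)$ evaluated at $v=1$. That term is $(-1)^{\tr(1)}K(0)=(-1)^{m+1}$, not $K(0)=-1$: you dropped the character factor. Hence
\[
\sum_{v\neq 0,1}(-1)^{\tr(1/v)}K(v+1)=S_m+(-1)^m .
\]
The twisted part is therefore $(-1)^m\bigl(S_m+(-1)^m\bigr)=(-1)^mS_m+1$. The total is $(-1+K(1))+(1-K(1))+(-1)^mS_m+1=(-1)^mS_m+1$, as required. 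This is exactly the correction term $(-1)^m\cdot(-1)$ that the paper subtracts after factoring out $(-1)^{\tr(1)}$.

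Your suspected culprit, a double exclusion of $s=1$ and $v=1$, is not the issue. The map $s\mapsto v=s/(s+1)$ is a bijection of $\gf_q\setminus\{0,1\}$ onto itself. Removing $s=1$ is harmless for odd $m$, since its weight $1+(-1)^{\tr(1)}$ vanishes. For even $m$ it is required, since $s=1$ corresponds to the excluded $t=\omega,\omega^2$.
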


\begin{proof}
	Let $c=\frac{t}{t^2+t+1}\in\gf_q$. Then $c=0$ iff $t=0$, and $c=1$ iff $t=1$. For $c\neq 0,1$, we have
	\begin{equation}\label{tc}
		t^2+\left(1+\frac{1}{c}\right)t+1=0.
	\end{equation}
	Every solution $t$ of (\ref{tc}) satisfies $t^2+t+1\neq 0$. The number of such $t$'s is
	\[
	1+(-1)^{\tr\left(\frac{1}{(1+c^{-1})^2}\right)}
	=1+(-1)^{\tr\left(\frac{c}{c+1}\right)}.
	\]
	Hence
	\begin{align*}
		&\sum_{t\in\gf_q,\ t^2+t+1\neq 0}K\!\left(\frac{t}{t^2+t+1}\right)\\
		&=K(0)+K(1)+\sum_{c\in\gf_q\setminus\{0,1\}}K(c)\left(1+(-1)^{\tr\left(\frac{c}{c+1}\right)}\right)\\
		&=\sum_{c\in\gf_q}K(c)+\sum_{c\in\gf_q\setminus\{0,1\}}(-1)^{\tr\left(\frac{c}{c+1}\right)}K(c)\\
		&=\sum_{c\in\gf_q\setminus\{0,1\}}(-1)^{\tr\left(\frac{c}{c+1}\right)}K(c).
	\end{align*}
	
	Moreover,
	\begin{align*}
		&\sum_{c\in\gf_q\setminus\{0,1\}}(-1)^{\tr\left(\frac{c}{c+1}\right)}K(c)\\
		&=\sum_{c\in\gf_q\setminus\{0,1\}}(-1)^{\tr\left(\frac{c+1}{c}\right)}K(c+1)\\
		&=(-1)^{\tr(1)}\sum_{c\in\gf_q\setminus\{0,1\}}(-1)^{\tr\left(\frac{1}{c}\right)}
		\sum_{x\in\gf_q^*}(-1)^{\tr\left(x+\frac{c+1}{x}\right)}\\
		&=(-1)^m\left(\sum_{c\in\gf_q^*}\sum_{x\in\gf_q^*}(-1)^{\tr\left(\frac{1}{c}+x+\frac{c+1}{x}\right)}
		-(-1)^m\cdot(-1)\right)\\
		&=1+(-1)^m\sum_{u\in\gf_q^*}\sum_{x\in\gf_q^*}(-1)^{\tr\left(\frac{1}{ux}+x+\frac{1}{x}+u\right)}\\
		&=1+(-1)^m\sum_{x\in\gf_q^*}(-1)^{\tr\left(x+\frac{1}{x}\right)}
		\sum_{u\in\gf_q^*}(-1)^{\tr\left(u+\frac{1}{xu}\right)}\\
		&=1+(-1)^m\sum_{x\in\gf_q^*}(-1)^{\tr\left(x+\frac{1}{x}\right)}K\!\left(\frac{1}{x}\right)\\
		&=1+(-1)^m\sum_{x\in\gf_q^*}(-1)^{\tr\left(x+\frac{1}{x}\right)}K(x)\\
		&=1+(-1)^mS_m.
	\end{align*}
	In the fourth equality above, we made the substitution $c=ux$ for $u\in\gf_q^*$. This proves the lemma.
\end{proof}

The value of $S_m$ can be determined as follows. We start from the following two lemmas.

\begin{lemma}\label{smodd}
	When $m$ is odd, we have $S_m=-K_{\gf_{q^2}}(\omega)$.
\end{lemma}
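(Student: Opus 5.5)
The plan is to compute $K_{\gf_{q^2}}(\omega)$ directly in coordinates over $\gf_q$, reduce it to a sum of Kloosterman sums over $\gf_q$, and then recognize that sum via Lemma \ref{sm1}. Since $m$ is odd, $x^2+x+1$ is irreducible over $\gf_q$. Hence $\omega\notin\gf_q$ and $\gf_{q^2}=\gf_q\oplus\gf_q\omega$, with $\overline{\omega}=\omega^q=\omega^2=\omega+1$. We have $\tr_{\gf_{q^2}/\gf_q}(1)=0$ and $\tr_{\gf_{q^2}/\gf_q}(\omega)=\omega+\omega^2=1$. Write $y=a+b\omega$ with $a,b\in\gf_q$ not both zero. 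Then $\tr_{\gf_{q^2}/\gf_q}(y)=b$ and $N(y)=y\overline{y}=a^2+ab+b^2\neq0$.

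Next I compute $\omega/y=\omega\overline{y}/N(y)$. Here $\overline{y}=(a+b)+b\omega$, so $\omega\overline{y}=(a+b)\omega+b\omega^2=b+a\omega$. Hence $\tr_{\gf_{q^2}/\gf_q}(\omega\overline{y})=a$. Using transitivity of the trace,
\[
K_{\gf_{q^2}}(\omega)=\sum_{(a,b)\ne(0,0)}(-1)^{\tr\left(b+\frac{a}{a^2+ab+b^2}\right)}.
\]

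Now split according to whether $a=0$. The terms with $a=0$, $b\neq0$ contribute $\sum_{b\in\gf_q^*}(-1)^{\tr(b)}=-1$. For $a\ne0$, substitute $b=as$ with $s\in\gf_q$. The exponent becomes $\tr\bigl(as+\frac{1}{a(s^2+s+1)}\bigr)$, and $s^2+s+1\ne0$ for every $s\in\gf_q$ because $m$ is odd. Summing over $a\in\gf_q^*$ gives $K\bigl(\frac{s}{s^2+s+1}\bigr)$. For $s\neq0$ this uses the substitution $a\mapsto a/s$, as in Lemma \ref{sumlemma}. For $s=0$ the sum is $\sum_a(-1)^{\tr(1/a)}=-1=K(0)$, which is consistent. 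Therefore
\[
K_{\gf_{q^2}}(\omega)=-1+\sum_{s\in\gf_q}K\!\left(\frac{s}{s^2+s+1}\right)=-1+(-1)^mS_m+1=-S_m,
\]
by Lemma \ref{sm1}, using $(-1)^m=-1$. This gives the claim.

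The only delicate point is the bookkeeping in the trace computation. One needs $\omega^q=\omega^2$, which holds precisely because $m$ is odd, to get $\tr_{\gf_{q^2}/\gf_q}(\omega\overline{y})=a$ correctly. One must also check that the condition $t^2+t+1\neq0$ in Lemma \ref{sm1} is vacuous here, so that its sum runs over all of $\gf_q$.
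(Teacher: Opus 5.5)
Your proposal is correct and follows the paper's proof essentially verbatim. Both use the same coordinates $a+b\omega$, the same trace formula $\tr\bigl(b+\frac{a}{a^2+ab+b^2}\bigr)$, the same reduction to $\sum_{t\in\gf_q}K\bigl(\frac{t}{t^2+t+1}\bigr)$, and the same appeal to Lemma \ref{sm1}. The only cosmetic difference is that you split on $a=0$ and set $b=as$, which needs the extra rescaling $a\mapsto a/s$, whereas the paper splits on $b=0$ and sets $a=tb$, so the inner sum is literally $K\bigl(\frac{t}{t^2+t+1}\bigr)$.
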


\begin{proof}
	Every element $x\in\gf_{q^2}$ can be written as $x=a+b\omega$ for unique $(a,b)\in\gf_q^2$. Since $m$ is odd, we have $\overline{\omega}=\omega^2$. For $(a,b)\neq(0,0)$, 
	\[
	\tr_{\gf_{q^2}/\gf_2}\left(x+\frac{\omega}{x}\right)
	=\tr\left(x+\frac{\omega}{x}+\overline{x+\frac{\omega}{x}}\right)
	=\tr\left(b+\frac{a}{a^2+ab+b^2}\right).
	\]
	Hence
	\[
	K_{\gf_{q^2}}(\omega)
	=\sum_{x\in\gf_{q^2}}(-1)^{\tr_{\gf_{q^2}/\gf_2}(x+\frac{\omega}{x})}
	=\sum_{(a,b)\in\gf_q^2\setminus\{(0,0)\}}(-1)^{\tr\left(b+\frac{a}{a^2+ab+b^2}\right)}.
	\]
	When $b=0$, we have $b+\frac{a}{a^2+ab+b^2}=\frac{1}{a}$, and thus $\sum_{a\in\gf_q^*}(-1)^{\tr(1/a)}=-1$. When $b\neq0$, put $a=tb$ with $t\in\gf_q$. Then
	\[
	b+\frac{a}{a^2+ab+b^2}=b+\frac{t}{(t^2+t+1)b},
	\]
	and
	\[
	\sum_{b\in\gf_q^*}(-1)^{\tr\left(b+\frac{t}{(t^2+t+1)b}\right)}
	=K\!\left(\frac{t}{t^2+t+1}\right).
	\]
	Applying Lemma \ref{sm1}, we obtain
	\[
	K_{\gf_{q^2}}(\omega)
	=-1+\sum_{t\in\gf_q,\ t^2+t+1\neq0}K\!\left(\frac{t}{t^2+t+1}\right)
	=-S_m.
	\]
	This completes the proof.
\end{proof}

\begin{lemma}\label{smeven}
	When $m$ is even, we have $S_m=(K_{\gf_q}(\omega))^2$.
\end{lemma}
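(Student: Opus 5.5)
The plan is to expand both sides of the identity as double character sums over $\gf_q^*$ and reduce each to a single weighted sum $\sum_c W(c)K(c)$. When $m$ is even we have $\omega\in\gf_q$, so $K_{\gf_q}(\omega)$ makes sense and $\tr(\omega)=0$. Write $\psi(x)=(-1)^{\tr(x)}$.

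\emph{Step 1: rewrite $K(\omega)^2$.} Start from
\[
K(\omega)^2=\sum_{u,v\in\gf_q^*}\psi\!\left(u+v+\tfrac{\omega(u+v)}{uv}\right),
\]
and group the pairs by $s=u+v$ and $p=uv$. The diagonal $s=0$, i.e.\ $u=v$, contributes $q-1$. For $s\neq0$ and $p\neq0$, the number of ordered pairs with these values of $s,p$ is $1+\psi(p/s^2)$. The term with weight $1$ gives
\[
\sum_{s\ne0}\psi(s)\sum_{p\ne0}\psi(\omega s/p)=1.
\]
Substituting $p=s^2w$ in the remaining term, I expect
\[
K(\omega)^2=q+T,\qquad T=\sum_{w,s\in\gf_q^*}\psi\!\left(w+s+\tfrac{\omega}{ws}\right)=\sum_{c\in\gf_q^*}\psi(\omega/c)\,K(c).
\]
Thus $T$ is the hyper-Kloosterman sum $\mathrm{Kl}_3(\omega)$.

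\emph{Step 2: rewrite $S_m$.} By the computation inside the proof of Lemma \ref{sm1}, with $(-1)^m=1$, we have
\[
S_m=\sum_{c\in\gf_q\setminus\{0,1\}}\psi\!\left(\tfrac{1}{c+1}\right)K(c)-1 .
\]
Alternatively, expanding $S_m=\sum_{x,y}\psi(x+\frac1x+y+\frac xy)$ directly exhibits it as a sum over four variables whose product is $1$ after the reparametrization $z=x/y$. I would use whichever form makes the comparison with $T$ cleanest.

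\emph{Step 3: the key comparison.} It remains to prove $S_m=q+T$, which amounts to an identity between the two weights $\psi(\omega/c)$ and $\psi(1/(c+1))$ against $K(c)$, up to explicit boundary terms. Here I would imitate the proof of Lemma \ref{smodd}. Over $\gf_q$ we have $t^2+t+1=(t+\omega)(t+\omega^2)$, and partial fractions give
\[
\frac{t}{t^2+t+1}=\frac{\omega}{t+\omega}+\frac{\omega^2}{t+\omega^2}.
\]
Hence Lemma \ref{sm1} gives
\[
S_m+1=\sum_{t\neq\omega,\omega^2}\sum_{b\ne0}\psi\!\left(b+\tfrac{\omega}{(t+\omega)b}+\tfrac{\omega^2}{(t+\omega^2)b}\right).
\]
Introduce $\alpha=1/((t+\omega)b)$ and $\beta=1/((t+\omega^2)b)$. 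These satisfy $1/\alpha+1/\beta=b$, so the exponent becomes $\frac1\alpha+\frac1\beta+\omega\alpha+\omega^2\beta$. As $(t,b)$ ranges over its domain, $(\alpha,\beta)$ ranges over pairs of distinct nonzero elements. The sum therefore becomes $K(\omega)K(\omega^2)=K(\omega)^2$ by Lemma \ref{sumlemma}(i), minus the diagonal $\alpha=\beta$. That diagonal contributes $\sum_\alpha\psi(\omega\alpha+\omega^2\alpha)=\sum_{\alpha\ne0}\psi(\alpha)=-1$, although the sign conventions there need checking. Collecting the boundary terms should give exactly $S_m=K(\omega)^2$.

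The main obstacle is Step 3, specifically verifying that $(t,b)\mapsto(\alpha,\beta)$ is a bijection onto the off-diagonal pairs and tracking the boundary contributions so the constants cancel. If the change of variables fails, the fallback is to compare the weights in Steps 1 and 2 directly. That would use the Möbius substitutions $c\mapsto \omega/c$ and $c\mapsto 1/(c+1)$ together with $K(c^2)=K(c)$ and $\sum_c K(c)=0$ from Lemma \ref{sumlemma}.
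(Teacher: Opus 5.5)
Your Step 3 is correct and is essentially the paper's proof. Your $(\alpha,\beta)$ is the reciprocal of the paper's substitution $(y,z)=((t+\omega)x,(t+\omega^2)x)$, and the map is a bijection onto off-diagonal pairs, with inverse $b=1/\alpha+1/\beta$ and $t=\omega+\beta/(\alpha+\beta)$. The diagonal contributes $-1$, so Lemma \ref{sm1} gives $S_m+1=K(\omega)^2+1$. Steps 1 and 2 are correct but not needed.
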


\begin{proof}
	Consider the sum $\sum_{t\in\gf_q,\ t^2+t+1\neq 0}K\!\left(\frac{t}{t^2+t+1}\right)$. Since $m$ is even, we have $\omega,\omega^2\in\gf_q$, and
	\[
	\frac{t}{t^2+t+1}=\frac{\omega}{t+\omega}+\frac{\omega^2}{t+\omega^2}.
	\]
	Then
	\begin{align*}
		&\sum_{t\in\gf_q,\ t^2+t+1\neq 0}K\!\left(\frac{t}{t^2+t+1}\right)\\
		&=\sum_{t\in\gf_q,\ t^2+t+1\neq 0}\sum_{x\in\gf_q^*}(-1)^{\tr\left(x+\frac{t}{(t^2+t+1)x}\right)}\\
		&=\sum_{t\in\gf_q,\ t^2+t+1\neq0}\sum_{x\in\gf_q^*}(-1)^{\tr\left((t+\omega)x+(t+\omega^2)x+\frac{\omega}{(t+\omega)x}+\frac{\omega^2}{(t+\omega^2)x}\right)}.
	\end{align*}
	For any pair $(t,x)\in(\gf_q\setminus\{\omega,\omega^2\})\times \gf_q^*$, set $y=(t+\omega)x$ and $z=(t+\omega^2)x$. Then $y,z\in\gf_q^*$ with $y\neq z$, and conversely $x=y+z$, $t=\frac{y}{y+z}+\omega$. This gives a bijection
	\[
	(\gf_q\setminus\{\omega,\omega^2\})\times \gf_q^* \longleftrightarrow (\gf_q^*)^2\setminus\{(y,y):y\in\gf_q^*\}.
	\]
	Thus
	\begin{align*}
		&\sum_{t\in\gf_q,\ t^2+t+1\neq 0}K\!\left(\frac{t}{t^2+t+1}\right)\\
		&=\sum_{y,z\in\gf_q^*,\ y\neq z}(-1)^{\tr\left(y+z+\frac{\omega}{y}+\frac{\omega^2}{z}\right)}\\
		&=\sum_{y,z\in\gf_q^*}(-1)^{\tr\left(y+z+\frac{\omega}{y}+\frac{\omega^2}{z}\right)}
		-\sum_{y\in\gf_q^*}(-1)^{\tr\left(\frac{1}{y}\right)}\\
		&=1+\sum_{y\in\gf_q^*}(-1)^{\tr\left(y+\frac{\omega}{y}\right)}
		\sum_{z\in\gf_q^*}(-1)^{\tr\left(z+\frac{\omega^2}{z}\right)}\\
		&=1+K(\omega)K(\omega^2)\\
		&=1+(K(\omega))^2.
	\end{align*}
	The desired result then follows from Lemma \ref{sm1}.
\end{proof}

The value of $K(\omega)$ can be determined by the following lemma, which was investigated by Carlitz in \cite{C69}.

\begin{lemma}\cite{C69}\label{tm}
	Let $T_m=K_{\gf_{4^m}}(\omega)$. Then $T_1=-1$ and $T_2=7$. Moreover, we have the recurrence
	\[
	T_{m+2}-T_{m+1}+4T_m=0 \qquad (m\ge 1).
	\]
	More precisely, let
	\[
	\alpha=\frac{1+\sqrt{-15}}{2},\qquad \beta=\frac{1-\sqrt{-15}}{2}
	\]
	be the two roots of $x^2-x+4=0$. The explicit formula for $T_m$ is
	\begin{equation}\label{tmformula}
		T_m=-\alpha^m-\beta^m
		=-\frac{1}{2^{m-1}}\sum_{2r\le m}(-1)^r\binom{m}{2r}15^r,
	\end{equation}
	which is always an integer.
\end{lemma}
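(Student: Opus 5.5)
This lemma is Carlitz's evaluation of the binary Kloosterman sum $K_{\gf_{4^m}}(\omega)$. The route I would take is the standard one: the Kloosterman sheaf (or Kloosterman zeta function) relation, which writes the sum over $\gf_{2^{2m}}=\gf_{4^m}$ as a power sum of two Frobenius eigenvalues attached to $\gf_4$.

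First, since $\omega\in\gf_4$, I would apply the classical Hasse--Davenport type identity for Kloosterman sums. For $a\in\gf_4^*$, set
\[
K_{\gf_{4^m}}(a)=\sum_{x\in\gf_{4^m}^*}(-1)^{\tr_{\gf_{4^m}/\gf_2}(x+a/x)}.
\]
Then there are algebraic integers $\alpha,\beta$ with
\[
\alpha+\beta=-K_{\gf_4}(a),\qquad \alpha\beta=4,
\]
such that $K_{\gf_{4^m}}(a)=-\alpha^m-\beta^m$ for all $m\ge1$. I would prove this through the $L$-function $L(t)=\exp\bigl(\sum_m K_{\gf_{4^m}}(a)t^m/m\bigr)$. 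One writes $K$ as a sum over pairs $(x,y)$ with $xy=a$ of additive characters, and expands the $L$-function as a sum over monic polynomials $f\in\gf_4[X]$ with nonzero constant term. The character of $f$ involves only its second and next-to-constant coefficients, so all polynomials of degree $\ge3$ cancel. This leaves a polynomial of degree $2$ in $t$, namely
\[
1+K_{\gf_4}(a)\,t+4t^2 .
\]

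Second, I would compute the base case $T_1=K_{\gf_4}(\omega)$ directly. Over $\gf_4$ the absolute trace satisfies $\tr(0)=\tr(1)=0$ and $\tr(\omega)=\tr(\omega^2)=1$. For $x=1$, $x+\omega/x=1+\omega=\omega^2$, with trace $1$. For $x=\omega$, $x+\omega/x=\omega+1=\omega^2$, with trace $1$. For $x=\omega^2$, $x+\omega/x=\omega^2+\omega^2=0$, with trace $0$. Hence $T_1=-1-1+1=-1$.

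Combining the two steps, $\alpha+\beta=1$ and $\alpha\beta=4$, so $\alpha,\beta$ are the roots of $x^2-x+4=0$, i.e.\ $(1\pm\sqrt{-15})/2$. The power sums $-(\alpha^m+\beta^m)$ then satisfy the recurrence $T_{m+2}=T_{m+1}-4T_m$. As a check, $T_2=-((\alpha+\beta)^2-2\alpha\beta)=-(1-8)=7$. The binomial form follows by expanding $((1\pm\sqrt{-15})/2)^m$: the odd powers of $\sqrt{-15}$ cancel, giving $-2^{1-m}\sum_r\binom{m}{2r}(-15)^r$. Integrality is immediate from the integer recurrence with integer initial values.

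The main obstacle is the first step, proving the degree-two rationality of the Kloosterman $L$-function, i.e.\ the polynomial-cancellation argument. Since the paper cites Carlitz \cite{C69}, I would quote that result rather than reprove it. If a self-contained check were wanted, I would instead verify $T_2=7$ by direct enumeration over $\gf_{16}$ and invoke the rationality theorem only for the recurrence.
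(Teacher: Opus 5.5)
Your proposal is correct and follows the paper's route: the paper simply invokes Carlitz's theorem without proof. Your degree-two Kloosterman $L$-function $1+K_{\gf_4}(\omega)t+4t^2$ is exactly the argument behind that citation, and your direct check $K_{\gf_4}(\omega)=-1$ supplies the base case the paper leaves implicit; the rest ($\alpha+\beta=1$, $\alpha\beta=4$, $T_2=7$, the recurrence, the binomial formula, integrality) all follows correctly and reproduces the values $T_4=-17$ and $T_5=-61$ used in the paper's examples.
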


By Lemmas \ref{sm1}, \ref{smodd}, and \ref{smeven}, we have $S_m=-T_m$ when $m$ is odd. When $m$ is even,
\[
S_m=T_{\frac{m}{2}}^2
=(\alpha^{\frac{m}{2}}+\beta^{\frac{m}{2}})^2
=\alpha^m+\beta^m+2\cdot 4^{\frac{m}{2}}
=-T_m+2q.
\]
We immediately obtain the following theorem.

\begin{theorem}\label{smtm}
	We have
	\[
	S_m=-T_m+(1+(-1)^m)q,
	\]
	where $T_m$ is given by (\ref{tmformula}).
\end{theorem}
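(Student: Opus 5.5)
The plan is to split according to the parity of $m$ and, in each case, combine the two evaluations of $S_m$ (Lemmas \ref{smodd} and \ref{smeven}) with Carlitz's closed form from Lemma \ref{tm}. Since $(1+(-1)^m)q$ equals $0$ for odd $m$ and $2q$ for even $m$, the theorem amounts to the two identities $S_m=-T_m$ ($m$ odd) and $S_m=-T_m+2q$ ($m$ even).

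For $m$ odd, Lemma \ref{smodd} gives $S_m=-K_{\gf_{q^2}}(\omega)$. Here $q^2=4^m$, so by definition $K_{\gf_{q^2}}(\omega)=K_{\gf_{4^m}}(\omega)=T_m$. Hence $S_m=-T_m$, which is the claim because $1+(-1)^m=0$.

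For $m$ even, Lemma \ref{smeven} gives $S_m=(K_{\gf_q}(\omega))^2$. Write $q=2^m=4^{m/2}$, so that $\gf_q=\gf_{4^{m/2}}$ and $K_{\gf_q}(\omega)=T_{m/2}$. This is legitimate because $\omega\in\gf_q$ when $m$ is even, and Carlitz's lemma applies to $\gf_{4^{k}}$ with $k=m/2\ge 1$. By (\ref{tmformula}), $T_{m/2}=-(\alpha^{m/2}+\beta^{m/2})$, and therefore
\[
S_m=(\alpha^{m/2}+\beta^{m/2})^2=\alpha^m+\beta^m+2(\alpha\beta)^{m/2}.
\]
Now $\alpha\beta=4$, since $\alpha,\beta$ are the roots of $x^2-x+4$. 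So $2(\alpha\beta)^{m/2}=2\cdot 2^m=2q$, and using $\alpha^m+\beta^m=-T_m$ we get $S_m=-T_m+2q$, as required.

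There is no real obstacle. The only point needing care is matching the field in each lemma with the correct index of $T$: $\gf_{q^2}=\gf_{4^m}$ gives $T_m$, while $\gf_q=\gf_{4^{m/2}}$ gives $T_{m/2}$. One should also check that the sign convention $T_k=-\alpha^k-\beta^k$ is used consistently, since squaring removes the sign in the even case but not in the odd case.
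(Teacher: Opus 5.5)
Your proposal is correct and follows the paper's own argument. For odd $m$ you get $S_m=-K_{\gf_{q^2}}(\omega)=-T_m$ from Lemma \ref{smodd}, and for even $m$ you use Lemma \ref{smeven} with $\gf_q=\gf_{4^{m/2}}$ to get $S_m=T_{m/2}^2=\alpha^m+\beta^m+2(\alpha\beta)^{m/2}=-T_m+2q$, correctly matching each field to the right index of $T$.
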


Remark 1. In \cite{XLZH16}, the authors express the cross-correlation distribution for the case $s=3$ in terms of a recursive sequence $\{\tau_m\}$. The two sequences are closely related by $T_m=-2^m\tau_m$; the expression for $T_m$ makes it clearer that the values are integers.

In the following, we investigate some character sums, some of which can be expressed in terms of $S_m$ determined in Theorem \ref{smtm}.

\begin{lemma}\label{ksum1}
	We have
	\[
	\sum_{\theta\in\gf_q\setminus\{0,1\}}K(\theta^2+\theta+1)
	=\sum_{\theta\in\gf_q\setminus\{0,1\}}K\!\left(\frac{\theta^2+\theta+1}{\theta^2}\right)
	=\sum_{\theta\in\gf_q\setminus\{0,1\}}K\!\left(\frac{\theta^2+\theta+1}{\theta^2+1}\right)
	=q-2K(1).
	\]
\end{lemma}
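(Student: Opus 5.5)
The plan is to prove the three equalities by reducing the second and third sums to the first through bijective substitutions on $\gf_q\setminus\{0,1\}$, and then to evaluate the first sum by extending it to all of $\gf_q$ and applying Lemma \ref{sumlemma}(iii).

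\emph{The two reductions.} Write $\phi(t)=t^2+t+1$.
For the second sum, substitute $\theta\mapsto 1/\theta$. This is a bijection of $\gf_q\setminus\{0,1\}$, and
\[
\frac{\theta^2+\theta+1}{\theta^2}=1+\frac1\theta+\frac1{\theta^2}=\phi(1/\theta),
\]
so the second sum equals the first.
For the third sum, put $s=1/(\theta+1)$. Then
\[
s^2+s=\frac{1}{(\theta+1)^2}+\frac{1}{\theta+1}=\frac{\theta}{(\theta+1)^2},
\]
and therefore
\[
\phi(s)=1+\frac{\theta}{\theta^2+1}=\frac{\theta^2+\theta+1}{\theta^2+1}.
\]
As $\theta$ runs over $\gf_q\setminus\{0,1\}$, the value $s=1/(\theta+1)$ avoids $0$ (it is always defined and nonzero) and avoids $1$ (that would need $\theta=0$). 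So $s$ also runs bijectively over $\gf_q\setminus\{0,1\}$, and the third sum equals the first.

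\emph{Evaluating the first sum.} I will first compute the full sum $\Sigma=\sum_{\theta\in\gf_q}K(\theta^2+\theta+1)$ and then subtract the terms $\theta=0$ and $\theta=1$. Both of these terms equal $K(1)$, so the answer is $\Sigma-2K(1)$, and it remains to show $\Sigma=q$.

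For fixed $c\in\gf_q$, the number of $\theta\in\gf_q$ with $\theta^2+\theta=c+1$ is $1+(-1)^{\tr(c+1)}=1+(-1)^{m}(-1)^{\tr(c)}$. Hence
\[
\Sigma=\sum_{c\in\gf_q}K(c)+(-1)^m\sum_{c\in\gf_q}(-1)^{\tr(c)}K(c).
\]
The first sum vanishes by Lemma \ref{sumlemma}(iii). For the second, expand the Kloosterman sum and interchange the order of summation:
\[
\sum_{c}(-1)^{\tr(c)}K(c)=\sum_{x\in\gf_q^*}(-1)^{\tr(x)}\sum_{c\in\gf_q}(-1)^{\tr\left(c(1+x^{-1})\right)}.
\]
The inner sum is $q$ when $x=1$ and $0$ otherwise. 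The second sum is therefore $q(-1)^{\tr(1)}=(-1)^m q$.

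This gives $\Sigma=(-1)^m(-1)^m q=q$, and so the first sum equals $q-2K(1)$. There is no real obstacle here; the only points needing care are that both substitutions are genuinely bijections of $\gf_q\setminus\{0,1\}$, and the sign bookkeeping with $\tr(1)=m \bmod 2$.
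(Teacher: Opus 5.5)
Your proposal is correct and follows essentially the paper's argument. Your substitution $\theta\mapsto 1/(\theta+1)$ is just the composite of the paper's two bijections $\theta\mapsto 1/\theta$ and $\theta\mapsto\theta+1$ of $\gf_q\setminus\{0,1\}$, and your proof of $\sum_{\theta\in\gf_q}K(\theta^2+\theta+1)=q$ uses the same Artin--Schreier preimage count and character orthogonality; you only split off $\sum_c K(c)=0$ via Lemma \ref{sumlemma}(iii) before expanding, instead of after.
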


\begin{proof}
	The bijection $\theta\mapsto 1/\theta$ on $\gf_q\setminus\{0,1\}$ gives
	\[
	\sum_{\theta\in\gf_q\setminus\{0,1\}}K(\theta^2+\theta+1)
	=
	\sum_{\theta\in\gf_q\setminus\{0,1\}}K\!\left(\frac{\theta^2+\theta+1}{\theta^2}\right),
	\]
	and the bijection $\theta\mapsto \theta+1$ on $\gf_q\setminus\{0,1\}$ gives
	\[
	\sum_{\theta\in\gf_q\setminus\{0,1\}}K\!\left(\frac{\theta^2+\theta+1}{\theta^2}\right)
	=
	\sum_{\theta\in\gf_q\setminus\{0,1\}}K\!\left(\frac{\theta^2+\theta+1}{\theta^2+1}\right).
	\]
	It remains to prove $\sum_{\theta\in\gf_q\setminus\{0,1\}}K(\theta^2+\theta+1)=q-2K(1)$, or equivalently,
	\[
	\sum_{\theta\in\gf_q}K(\theta^2+\theta+1)=q.
	\]
	Let $u=\theta^2+\theta$. For a given $u\in\gf_q$, the number of $\theta$ satisfying $\theta^2+\theta=u$ is $1+(-1)^{\tr(u)}$. Hence
	\begin{align*}
		\sum_{\theta\in\gf_q}K(\theta^2+\theta+1)
		&=\sum_{\theta\in\gf_q}\sum_{x\in\gf_q^*}(-1)^{\tr\left(x+\frac{\theta^2+\theta+1}{x}\right)}\\
		&=\sum_{x\in\gf_q^*}(-1)^{\tr\left(x+\frac{1}{x}\right)}
		\sum_{\theta\in\gf_q}(-1)^{\tr\left(\frac{\theta^2+\theta}{x}\right)}\\
		&=\sum_{x\in\gf_q^*}(-1)^{\tr\left(x+\frac{1}{x}\right)}
		\sum_{u\in\gf_q}(-1)^{\tr\left(\frac{u}{x}\right)}
		\left(1+(-1)^{\tr(u)}\right)\\
		&=\sum_{x\in\gf_q^*}(-1)^{\tr\left(x+\frac{1}{x}\right)}
		\sum_{u\in\gf_q}(-1)^{\tr\left(\left(1+\frac{1}{x}\right)u\right)}\\
		&=q.
	\end{align*}
	This completes the proof.
\end{proof}

\begin{lemma}\label{ksum2}
	We have
	\[
	\sum_{\theta\in\gf_q\setminus\{0,1\}}K\!\left(\frac{\theta^2+\theta+1}{\theta^2+\theta}\right)
	=\sum_{\theta\in\gf_q\setminus\{0,1\}}K\!\left(\frac{\theta^2+\theta+1}{\theta}\right)
	=\sum_{\theta\in\gf_q\setminus\{0,1\}}K\!\left(\frac{\theta^2+\theta+1}{\theta+1}\right)
	=-K(1)+S_m.
	\]
\end{lemma}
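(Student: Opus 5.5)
The plan is to prove the three equalities by bijections of $\gf_q\setminus\{0,1\}$, as in Lemma \ref{ksum1}, and then to evaluate the middle sum by reducing it to the double character sum already computed in the proof of Lemma \ref{sm1}.

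\emph{Step 1: the equalities.} Under $\theta\mapsto 1/\theta$ we have
\[
\frac{\theta^{-2}+\theta^{-1}+1}{\theta^{-2}+\theta^{-1}}=\frac{1+\theta+\theta^2}{1+\theta},
\]
so the first sum equals the third. Under $\theta\mapsto\theta+1$ the polynomial $\theta^2+\theta+1$ is unchanged while $\theta$ and $\theta+1$ are swapped, so the second sum equals the third. Hence all three sums equal
\[
\Sigma:=\sum_{\theta\in\gf_q\setminus\{0,1\}}K\!\left(\theta+1+\frac{1}{\theta}\right).
\]

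\emph{Step 2: change of variables.} Put $c=\theta+1/\theta$. For $\theta\in\gf_q\setminus\{0,1\}$ we have $c\neq0$, since $c=0$ only when $\theta=1$. Conversely, for $c\in\gf_q^*$ the number of $\theta$ with $\theta^2+c\theta+1=0$ is
\[
1+(-1)^{\tr(1/c^2)}=1+(-1)^{\tr(1/c)},
\]
and none of these roots is $0$ or $1$. Therefore
\[
\Sigma=\sum_{c\in\gf_q^*}K(c+1)+\sum_{c\in\gf_q^*}(-1)^{\tr(1/c)}K(c+1).
\]
By Lemma \ref{sumlemma}(iii), the first sum equals $\sum_{c'\in\gf_q}K(c')-K(1)=-K(1)$.

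\emph{Step 3: the twisted sum (the main step).} For the second sum, split off the term $c=1$, which contributes $(-1)^{\tr(1)}K(0)=-(-1)^m$. The remaining part,
\[
\sum_{c\in\gf_q\setminus\{0,1\}}(-1)^{\tr(1/c)}K(c+1),
\]
is exactly the sum handled in the chain of equalities in the proof of Lemma \ref{sm1}. There it is shown that $(-1)^m$ times this sum equals $1+(-1)^mS_m$, so the sum itself equals $(-1)^m+S_m$. Adding the $c=1$ term gives $S_m$ for the full twisted sum.

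Combining Steps 2 and 3 yields $\Sigma=-K(1)+S_m$. The only place needing care is keeping the signs $(-1)^{\tr(1)}=(-1)^m$ straight when reusing the computation from Lemma \ref{sm1}. If that reuse is in doubt, I would redo it directly: substitute $c=ux$ and apply Lemma \ref{sumlemma}(ii) together with $K(1/x)=K(x)$.
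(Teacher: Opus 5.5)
Your proof is correct, but your evaluation step takes a different route from the paper's.

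\textbf{The paper's route.} The paper first reduces to $\sum_{u\neq 0,1}K((u^2+u+1)/u)$, using $\theta\mapsto\theta+1$ and $\theta\mapsto u/(u+1)$. It then expands the Kloosterman sum and interchanges the order of summation. The inner sum $\sum_{u\in\gf_q\setminus\{0,1\}}(-1)^{\tr(u/x+1/(ux))}$ equals $K(1/x)-1$, so the missing term $u=1$ yields $-K(1)$ and the remainder yields $S_m$.

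\textbf{Your route.} You push the sum forward along the two-to-one map $\theta\mapsto\theta+\theta^{-1}$. The fibre count $1+(-1)^{\tr(1/c)}$ splits $\Sigma$ into two pieces:
\begin{itemize}
\item a complete average $\sum_{c\in\gf_q^*}K(c+1)=-K(1)$, by Lemma \ref{sumlemma}(iii);
\item the twisted sum $\sum_{c\in\gf_q^*}(-1)^{\tr(1/c)}K(c+1)$.
\end{itemize}
You correctly identify the twisted sum as the quantity already evaluated in the proof of Lemma \ref{sm1}, and your signs are right: the chain there gives $(-1)^m+S_m$ for $c\neq 0,1$, and the $c=1$ term contributes $-(-1)^m$, so the twisted sum equals $S_m$.

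\textbf{Comparison.} The paper's argument is shorter and self-contained. Yours needs a preimage count and careful sign bookkeeping. In exchange, it shows that Lemma \ref{sm1} and this lemma rest on the same identity $\sum_{c\in\gf_q^*}(-1)^{\tr(1/c)}K(c+1)=S_m$.

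\textbf{A caution about your fallback remark.} The identity $K(1/x)=K(x)$ is false pointwise. For example, in $\gf_8$ with $\alpha^3=\alpha+1$, one computes $K(\alpha)=-1$ but $K(\alpha^{-1})=3$. What is true, and all you need, is
$\sum_{x\in\gf_q^*}(-1)^{\tr(x+1/x)}K(1/x)=\sum_{x\in\gf_q^*}(-1)^{\tr(x+1/x)}K(x)$.
This follows by substituting $x\mapsto 1/x$ in the outer sum, since the weight $(-1)^{\tr(x+1/x)}$ is invariant under that substitution. Also, after the substitution $c=ux$, the inner $u$-sum is $K(1/x)$ directly by definition, so Lemma \ref{sumlemma}(ii) is not needed at that point.
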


\begin{proof}
	Since $\theta\mapsto\theta+1$ is a bijection on $\gf_q\setminus\{0,1\}$,
	\[
	\sum_{\theta\in\gf_q\setminus\{0,1\}}K\!\left(\frac{\theta^2+\theta+1}{\theta}\right)
	=
	\sum_{\theta\in\gf_q\setminus\{0,1\}}K\!\left(\frac{\theta^2+\theta+1}{\theta+1}\right).
	\]
	Also, $\theta=\frac{u}{u+1}$ is a bijection on $\gf_q\setminus\{0,1\}$, so
	\[
	\sum_{\theta\in\gf_q\setminus\{0,1\}}K\!\left(\frac{\theta^2+\theta+1}{\theta^2+\theta}\right)
	=
	\sum_{u\in\gf_q\setminus\{0,1\}}K\!\left(\frac{u^2+u+1}{u}\right).
	\]
	It remains to prove $\sum_{u\in\gf_q\setminus\{0,1\}}K\!\left(\frac{u^2+u+1}{u}\right)=-K(1)+S_m$. We have
	\begin{align*}
		\sum_{u\in\gf_q\setminus\{0,1\}}K\!\left(\frac{u^2+u+1}{u}\right)
		&=\sum_{u\in\gf_q\setminus\{0,1\}}\sum_{x\in\gf_q^*}(-1)^{\tr\left(x+\frac{u^2+u+1}{ux}\right)}\\
		&=\sum_{x\in\gf_q^*}(-1)^{\tr\left(x+\frac{1}{x}\right)}
		\sum_{u\in\gf_q\setminus\{0,1\}}(-1)^{\tr\left(\frac{u}{x}+\frac{1}{ux}\right)}\\
		&=\sum_{x\in\gf_q^*}(-1)^{\tr\left(x+\frac{1}{x}\right)}
		\left(K\!\left(\frac{1}{x}\right)-1\right)\\
		&=-K(1)+\sum_{x\in\gf_q^*}(-1)^{\tr\left(x+\frac{1}{x}\right)}K\!\left(\frac{1}{x}\right)\\
		&=-K(1)+\sum_{x\in\gf_q^*}(-1)^{\tr\left(x+\frac{1}{x}\right)}K(x)\\
		&=-K(1)+S_m.
	\end{align*}
	This completes the proof, where we applied Lemma \ref{sumlemma} (ii) in the third equality.
\end{proof}

\begin{lemma}\label{ksum3}
	We have
	\[
	\sum_{\theta\in\gf_q\setminus\{0,1\}}K\!\left(\frac{(\theta^2+\theta+1)^3}{(\theta^2+\theta)^2}\right)=2S_m-q.
	\]
\end{lemma}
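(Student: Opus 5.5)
The plan is to reduce the sum to the two double character sums that already produced $S_m$ in the proofs of Lemmas \ref{sm1} and \ref{ksum2}. First put $u=\theta^2+\theta$. As $\theta$ runs over $\gf_q\setminus\{0,1\}$, $u$ runs over $\mathcal H_0\setminus\{0\}$ and each value is hit exactly twice. Since the argument equals $(u+1)^3/u^2$, the sum becomes
\[
\Sigma:=\sum_{u\in\gf_q^*}\bigl(1+(-1)^{\tr(u)}\bigr)K\!\left(\frac{(u+1)^3}{u^2}\right).
\]
Next write $(u+1)^3/u^2=b^2c$ with $b=(u+1)/u$ and $c=u+1$. The substitution $x\mapsto bx$, valid for $u\neq1$, gives
\[
K(b^2c)=\sum_{x\in\gf_q^*}(-1)^{\tr(bx+bc/x)}=\sum_{x\in\gf_q^*}(-1)^{\tr\left(x+\frac{x}{u}+\frac{u}{x}+\frac{1}{ux}\right)},
\]
because $(u+1)^2/u=u+1/u$. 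The term $u=1$ must be treated separately. Its true contribution is $(1+(-1)^m)K(0)=-(1+(-1)^m)$, whereas the displayed formula would give $(1+(-1)^m)(q-1)$. So I would write
\[
\Sigma=A+B-(1+(-1)^m)q,
\]
where $A$ and $B$ are the double sums over $u,x\in\gf_q^*$ of the exponential above, with $B$ carrying the extra factor $(-1)^{\tr(u)}$.

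To evaluate $A$ and $B$, substitute $u=xw$ with $w\in\gf_q^*$. The exponent then becomes $\tr(w+1/w)+\tr(x+1/(x^2w))$, and $\tr(1/(x^2w))=\tr(1/(x\sqrt w))$. For $A$, the inner sum over $x$ is therefore $K(1/\sqrt w)=K(w)$ by Lemma \ref{sumlemma}(i) together with $K(1/y)=K(y)$, so $A=S_m$ directly.

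For $B$ the extra term is $\tr(xw)$, so the inner sum is $\sum_x(-1)^{\tr((1+w)x+1/(x\sqrt w))}$. For $w=1$ this equals $-1$. For $w\neq1$ it equals $K((1+w)/\sqrt w)=K(w+1/w)$, again by Lemma \ref{sumlemma}(i). Now put $v=w+1/w$. This map is $2$-to-$1$ from $\gf_q\setminus\{0,1\}$ onto $\D_0$, so
\[
B=-1+\sum_{v\in\gf_q^*}\bigl(1+(-1)^{\tr(1/v)}\bigr)(-1)^{\tr(v)}K(v).
\]
The part with $(-1)^{\tr(v+1/v)}$ is exactly $S_m$. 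The remaining part $\sum_{v\ne0}(-1)^{\tr(v)}K(v)$ I would compute by expanding $K$ and using orthogonality in $v$, which should give $(-1)^mq+1$. Hence $B=S_m+(-1)^mq$.

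Combining the pieces gives $\Sigma=2S_m+(-1)^mq-(1+(-1)^m)q=2S_m-q$, as claimed. The main obstacle is finding the factorization $b^2c$ that makes the two-variable exponent symmetric under $u=xw$; after that the work is bookkeeping of the exceptional values. These are $u=1$, where the scaling by $b$ breaks down; $w=1$, where the argument $w+1/w$ degenerates to $0$; and the exact $2$-to-$1$ count onto $\D_0$. Each of these I would verify carefully for both parities of $m$.
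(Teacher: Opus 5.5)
Your argument is correct, and it differs from the paper's in how the trace-twisted part is handled. The paper runs the unfolding in the opposite direction. It starts from $S_m$ as a double sum and substitutes $u=x^2s^2$, which gives $S_m=q+\sum_{s\in\gf_q^*}K(s^{-2}+s^{-1}+1+s)$. It then splits this sum according to $\tr(s)$ and recognizes the $\mathcal{H}_0$ part as half the target. The $\mathcal{H}_1$ part is shown to equal $-q/2$ by proving that $\psi(s)=s^{-2}+s^{-1}+s$ permutes $\mathcal{H}_1$; this reduces it to $\sum_{s\in\mathcal{H}_1}K(1+s)$ and orthogonality.

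Your untwisted sum $A$ is essentially that same identity read backwards. For the twisted sum $B$ you skip the permutation argument and unfold a second time. You use the $2$-to-$1$ map $w\mapsto w+1/w$ onto $\D_0$ and the orthogonality evaluation $\sum_{v\in\gf_q}(-1)^{\tr(v)}K(v)=(-1)^mq$. This is consistent with the paper: your computation amounts to $\sum_{u\in\gf_q^*}(-1)^{\tr(u)}K\bigl((u+1)^3/u^2\bigr)=S_m$, which in the paper's notation is $A_0-A_1=S_m$.

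The two routes trade off differently. The paper's route isolates exactly where $S_m$ enters and shows the $\mathcal{H}_1$ part is elementary, but it needs an ad hoc injectivity argument. Yours is purely mechanical substitution plus orthogonality, at the price of the bookkeeping at $u=1$ and $w=1$, which you did correctly.

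One justification must be repaired: $K(1/y)=K(y)$ is false as a pointwise identity. For example, in $\gf_8=\gf_2(t)$ with $t^3=t+1$ one computes $K(t)=-1$ but $K(t^{-1})=K(t^6)=3$. What you actually need holds anyway. By Lemma \ref{sumlemma}(i), $A=\sum_{w\in\gf_q^*}(-1)^{\tr(w+1/w)}K(1/w)$. The reindexing $w\mapsto 1/w$ leaves the weight $(-1)^{\tr(w+1/w)}$ unchanged and turns this into $S_m$. This reindexing is also the correct reading of the step $K(1/x)\to K(x)$ in Lemmas \ref{sm1} and \ref{ksum2}. Your evaluation of $B$ uses only Lemma \ref{sumlemma}(i), so it is unaffected.
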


\begin{proof}
	First,
	\[
	S_m=\sum_{x\in\gf_q^*}(-1)^{\tr\left(x+\frac{1}{x}\right)}K(x)
	=\sum_{x\in\gf_q^*}\sum_{u\in\gf_q^*}(-1)^{\tr\left(x+\frac{1}{x}+u+\frac{x}{u}\right)}.
	\]
	For a given $x\in\gf_q^*$, put $u=x^2s^2$; this gives a bijection between $u$ and $s$. Since
	\[
	\tr(x+x^{-1}+x^2s^2+x^{-1}s^{-2})=\tr(x+x^{-1}+xs+x^{-1}s^{-2}),
	\]
	we get
	\begin{align*}
		S_m&=\sum_{x\in\gf_q^*}\sum_{s\in\gf_q^*}(-1)^{\tr\left(x+x^{-1}+xs+x^{-1}s^{-2}\right)}\\
		&=\sum_{s\in\gf_q^*}\sum_{x\in\gf_q^*}(-1)^{\tr\left(x(1+s)+x^{-1}(1+s^{-2})\right)}.
	\end{align*}
	Note that $(1+s)(1+s^{-2})=0$ iff $s=1$. When $s=1$,
	\[
	\sum_{x\in\gf_q^*}(-1)^{\tr\left(x(1+s)+x^{-1}(1+s^{-2})\right)}=q-1=q+K(0).
	\]
	When $s\neq 1$, let $y=x(1+s)$. Since $s\neq 1$, we have $1+s\neq 0$, so this is a bijection on $\gf_q^*$. Then
	\[
	\sum_{x\in\gf_q^*}(-1)^{\tr\left(x(1+s)+x^{-1}(1+s^{-2})\right)}
	=
	\sum_{y\in\gf_q^*}(-1)^{\tr\left(y+\frac{(1+s)(1+s^{-2})}{y}\right)}
	=
	K(s^{-2}+s^{-1}+1+s).
	\]
	Thus
	\[
	S_m=q+\sum_{s\in\gf_q^*}K(s^{-2}+s^{-1}+1+s).
	\]
	Write $\sum_{s\in\gf_q^*}K(s^{-2}+s^{-1}+1+s)=A_0+A_1$, where
	\[
	A_0=\sum_{s\in \mathcal{H}_0\setminus\{0\}}K(s^{-2}+s^{-1}+1+s),\qquad
	A_1=\sum_{s\in \mathcal{H}_1}K(s^{-2}+s^{-1}+1+s).
	\]
	
	For $\theta\in\gf_q$, $s=\theta^2+\theta\in \mathcal{H}_0$. Moreover, as $\theta$ runs through $\gf_q\setminus\{0,1\}$, $s$ runs through $\mathcal{H}_0\setminus\{0\}$ twice. Hence
	\[
	\sum_{\theta\in\gf_q\setminus\{0,1\}}K\!\left(\frac{(\theta^2+\theta+1)^3}{(\theta^2+\theta)^2}\right)=2A_0.
	\]
	
	To study $A_1$, we show that $\psi(s)=s^{-2}+s^{-1}+s$ is a permutation on $\mathcal{H}_1$. First, for $s\in \mathcal{H}_1$,
	\[
	\tr(\psi(s))=\tr(s^{-2}+s^{-1}+s)=\tr(s)=1,
	\]
	so $\psi(s)\in \mathcal{H}_1$. Next we prove that $\psi$ is injective on $\mathcal{H}_1$. Suppose $a,b\in \mathcal{H}_1$ and $\psi(a)=\psi(b)$, i.e.,
	\[
	a^{-2}+a^{-1}+a=b^{-2}+b^{-1}+b.
	\]
	Then
	\[
	(a+b)\left(1+\frac{1}{ab}+\frac{a+b}{a^2b^2}\right)=0.
	\]
	We claim $a=b$. Assume to the contrary that $a\neq b$. Then
	\[
	1+\frac{1}{ab}+\frac{a+b}{a^2b^2}=0,
	\]
	or equivalently,
	\[
	a^2b^2+(a+1)b+a=0.
	\]
	If $a=1$, then $b^2=1$ and hence $b=1$, a contradiction. Thus $a\neq1$, and the above quadratic equation in $b$ has a solution, so
	\[
	\tr\left(\frac{a^3}{(a+1)^2}\right)=0.
	\]
	But
	\[
	\tr\left(\frac{a^3}{(a+1)^2}\right)
	=\tr\left(a+\frac{1}{a+1}+\frac{1}{(a+1)^2}\right)
	=\tr(a)=1,
	\]
	a contradiction. Therefore $a=b$. Hence $\psi$ is a bijection on $\mathcal{H}_1$.
	
	Now
	\begin{align*}
		A_1&=\sum_{s\in \mathcal{H}_1}K(1+\psi(s))\\
		&=\sum_{s\in \mathcal{H}_1}K(1+s)\\
		&=\frac{1}{2}\sum_{s\in\gf_q}\left(1-(-1)^{\tr(s)}\right)K(1+s)\\
		&=\frac{1}{2}\sum_{s\in\gf_q}K(1+s)-\frac{1}{2}\sum_{s\in\gf_q}(-1)^{\tr(s)}K(1+s).
	\end{align*}
	Note that $\sum_{s\in\gf_q}K(1+s)=\sum_{s\in\gf_q}K(s)=0$, and
	\begin{align*}
		\sum_{s\in\gf_q}(-1)^{\tr(s)}K(1+s)
		&=\sum_{s\in\gf_q}(-1)^{\tr(s)}\sum_{x\in\gf_q^*}(-1)^{\tr\left(x+\frac{1+s}{x}\right)}\\
		&=\sum_{x\in\gf_q^*}(-1)^{\tr\left(x+\frac{1}{x}\right)}
		\sum_{s\in\gf_q}(-1)^{\tr\left(\left(1+\frac{1}{x}\right)s\right)}\\
		&=q.
	\end{align*}
	Hence $A_1=-\frac{q}{2}$, and consequently $A_0=S_m-\frac{q}{2}$. The desired result follows.
\end{proof}

At the end of this section, we calculate two character sums involving rational functions.

\begin{lemma}\label{rational1}
	Let $a,b\in\gf_q^*$. We have
	\[
	\sum_{x\in\gf_q,\ x^2+x+a\neq 0}(-1)^{\tr\left(\frac{b}{x^2+x+a}\right)}
	=(-1)^{\tr(a)}K(b)-1.
	\]
\end{lemma}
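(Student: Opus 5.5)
We reorganize the sum by the value $u=x^2+x$ rather than by $x$ itself. As $x$ runs over $\gf_q$, the element $u=x^2+x$ runs over $\mathcal{H}_0$, and each value is hit exactly twice. Moreover, the condition $x^2+x+a\neq0$ becomes $u\neq a$. Hence
\[
\sum_{x\in\gf_q,\ x^2+x+a\neq 0}(-1)^{\tr\left(\frac{b}{x^2+x+a}\right)}
=\sum_{u\in\gf_q,\ u\neq a}\left(1+(-1)^{\tr(u)}\right)(-1)^{\tr\left(\frac{b}{u+a}\right)}.
\]
Next we substitute $w=u+a$, which runs over $\gf_q^*$. Since $\tr(u)=\tr(w)+\tr(a)$, the right-hand side splits into two pieces:
\[
\sum_{w\in\gf_q^*}(-1)^{\tr\left(\frac{b}{w}\right)}
+(-1)^{\tr(a)}\sum_{w\in\gf_q^*}(-1)^{\tr\left(w+\frac{b}{w}\right)}.
\]

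For the first sum, $b\neq0$ means $w\mapsto b/w$ is a bijection of $\gf_q^*$. The sum is therefore $\sum_{y\in\gf_q^*}(-1)^{\tr(y)}=-1$, because the full additive character sum over $\gf_q$ vanishes. The second sum is exactly $K(b)$ by the definition of the Kloosterman sum. Combining the two gives $(-1)^{\tr(a)}K(b)-1$, as claimed.

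The argument is a routine computation, and the only point needing care is the first step. We must check that $x^2+x$ is two-to-one onto $\mathcal{H}_0$, which is encoded by the weight $1+(-1)^{\tr(u)}$; this is the same counting already used in Lemma \ref{ksum1}. We must also check that the excluded $x$ correspond exactly to $u=a$. Both facts are immediate, so no step here is a genuine obstacle.
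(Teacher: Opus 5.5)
Your proof is correct and is essentially the paper's argument. The paper changes variables to $r=b/(x^2+x+a)$, while you use $w=x^2+x+a$ (so $r=b/w$). In both versions the quadratic preimage count $1+(-1)^{\tr(\cdot)}$ splits the sum into $\sum_{y\in\gf_q^*}(-1)^{\tr(y)}=-1$ plus $(-1)^{\tr(a)}K(b)$.
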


\begin{proof}
	For $x\in\gf_q$ with $x^2+x+a\neq 0$, let
	\[
	\frac{b}{x^2+x+a}=r\in\gf_q^*.
	\]
	Then $x^2+x+a+\frac{b}{r}=0$. For a given $r$, the number of such $x$ is $1+(-1)^{\tr\left(a+\frac{b}{r}\right)}$. Hence
	\begin{align*}
		\sum_{x\in\gf_q,\ x^2+x+a\neq 0}(-1)^{\tr\left(\frac{b}{x^2+x+a}\right)}
		&=\sum_{r\in\gf_q^*}\left(1+(-1)^{\tr\left(a+\frac{b}{r}\right)}\right)(-1)^{\tr(r)}\\
		&=\sum_{r\in\gf_q^*}(-1)^{\tr(r)}
		+(-1)^{\tr(a)}\sum_{r\in\gf_q^*}(-1)^{\tr\left(r+\frac{b}{r}\right)}\\
		&=(-1)^{\tr(a)}K(b)-1.
	\end{align*}
\end{proof}

\begin{lemma}\label{rational2}
	Let $a,b\in\gf_q^*$ and suppose $b$ is not a root of $x^2+x+a=0$. We have
	\[
	\sum_{x\in\gf_q,\ x^2+x+a\neq 0}(-1)^{\tr\left(\frac{x+b}{x^2+x+a}\right)}
	=(-1)^{\tr(a+1)}K(b^2+b+a)-1.
	\]
\end{lemma}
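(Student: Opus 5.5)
Following the proof of Lemma \ref{rational1}, I would apply a partial-fraction change of variable that reduces the sum to a sum of the type already handled there. Put $g(x)=x^2+x+a$. Since $b$ is not a root of $g$, we have $g(b)=b^2+b+a\neq 0$. Writing $x=y+b$ gives $x+b=y$ and
\[
g(x)=y^2+y+g(b),
\]
because $(y+b)^2+(y+b)+a=y^2+y+b^2+b+a$. So the sum becomes $\sum_{y}(-1)^{\tr(y/(y^2+y+c))}$ with $c=g(b)\neq0$, taken over $y$ with $y^2+y+c\ne0$.

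The term $y=0$ contributes $1$, because $c\neq0$. For $y\neq0$ I set $z=1/y$, which gives
\[
\frac{y}{y^2+y+c}=\frac{1}{y+1+c/y}=\frac{z}{cz^2+z+1}.
\]
I do not want to keep this form, so instead I count fibres directly. For $r\in\gf_q^*$, the condition $y/(y^2+y+c)=r$ is equivalent to $y^2+(1+r^{-1})y+c=0$. When $r\neq1$, the number of solutions $y$ is
\[
1+(-1)^{\tr\left(c/(1+r^{-1})^2\right)}=1+(-1)^{\tr\left(cr^2/(1+r)^2\right)}.
\]
When $r=1$ the equation reads $y^2=c$, which has exactly one solution. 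The value $0$ is attained only at $y=0$.

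The sum therefore equals
\[
1+(-1)^{\tr(1)}+\sum_{r\ne0,1}(-1)^{\tr(r)}\left(1+(-1)^{\tr(cr^2/(1+r)^2)}\right).
\]
The first part, $\sum_{r\ne0,1}(-1)^{\tr(r)}$, equals $-1-(-1)^{\tr(1)}$, so it cancels against $(-1)^{\tr(1)}$ and leaves a constant. For the remaining part I substitute $s=1+r$, so that $r/(1+r)=1+1/s$. Using $\tr(u^2)=\tr(u)$, I get
\[
\tr\!\left(\frac{cr^2}{(1+r)^2}\right)=\tr\!\left(\sqrt c\,\Bigl(1+\frac1s\Bigr)\right).
\]
Hence the exponent $\tr(r)+\tr(cr^2/(1+r)^2)$ equals $\tr\bigl(1+s+\sqrt c+\sqrt c/s\bigr)$.

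Summing over $s\neq0,1$ gives
\[
(-1)^{\tr(1+\sqrt c)}\bigl(K(\sqrt c)-(-1)^{\tr(1+\sqrt c)}\bigr)=(-1)^{\tr(1+c)}K(c)-1,
\]
where I used Lemma \ref{sumlemma}(i) and $\tr(\sqrt c)=\tr(c)$. Adding the constants from the earlier cancellation gives
\[
(-1)^{\tr(a+1)}K(c)\cdot(\pm1)-1.
\]
Finally $\tr(c)=\tr(b^2+b+a)=\tr(a)$, so the sign factor is $(-1)^{\tr(a+1)}$, which matches the claim.

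The main obstacle is the bookkeeping of the boundary contributions: the point $y=0$, the value $r=1$ where the quadratic degenerates to $y^2=c$, and the exclusion of $s=1$. These must combine to exactly $-1$. I would verify this carefully, checking in particular that $1+(-1)^{\tr(1)}-1-(-1)^{\tr(1)}$ and the $s=1$ correction $-1$ add up correctly. A numerical check for $q=4$ would confirm the constant.
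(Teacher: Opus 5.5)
Your proof is correct and follows essentially the paper's argument: partition by the value $r$, count roots of the resulting quadratic with the trace criterion, treat $r=0,1$ separately, then shift and recognise a Kloosterman sum. Your preliminary translation $x=y+b$ only makes the Kloosterman parameter appear as $\sqrt{c}$ instead of the paper's $b+\sqrt{b+a}$, whose square is also $c=b^2+b+a$. The stray factor $(\pm1)$ in your penultimate display should be dropped, because your computation already gives exactly $(-1)^{\tr(1+c)}K(c)-1$, with boundary constants $1+(-1)^{\tr(1)}-1-(-1)^{\tr(1)}=0$ and $-1$.
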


\begin{proof}
	For $x\in\gf_q$ with $x^2+x+a\neq 0$, let
	\[
	\frac{x+b}{x^2+x+a}=r\in\gf_q.
	\]
	Note that $r=0$ iff $x=b$. When $r\neq 0$, we have
	\[
	x^2+\left(1+\frac{1}{r}\right)x+\left(a+\frac{b}{r}\right)=0.
	\]
	If $r=1$, then there is a unique $x=\sqrt{a+b}$ corresponding to this $r$. If $r\neq 0,1$, then the number of $x$ for this given $r$ is
	\[
	1+(-1)^{\tr\left(\frac{a+\frac{b}{r}}{(1+\frac{1}{r})^2}\right)}.
	\]
	Hence
	\begin{align*}
		&\sum_{x\in\gf_q,\ x^2+x+a\neq 0}(-1)^{\tr\left(\frac{x+b}{x^2+x+a}\right)}\\
		&=1+(-1)^{\tr(1)}
		+\sum_{r\in\gf_q\setminus\{0,1\}}
		\left(1+(-1)^{\tr\left(\frac{a+\frac{b}{r}}{(1+\frac{1}{r})^2}\right)}\right)(-1)^{\tr(r)}\\
		&=\sum_{r\in\gf_q\setminus\{0,1\}}
		(-1)^{\tr\left(\frac{a+\frac{b}{r}}{(1+\frac{1}{r})^2}+r\right)}\\
		&=\sum_{r\in\gf_q\setminus\{0,1\}}
		(-1)^{\tr\left(r+a+\frac{b+\sqrt{b+a}}{r+1}\right)}\\
		&=\sum_{r\in\gf_q\setminus\{0,1\}}
		(-1)^{\tr\left(r+a+1+\frac{b+\sqrt{b+a}}{r}\right)}\\
		&=(-1)^{\tr(a+1)}
		\left(K(b^2+b+a)-(-1)^{\tr\left(1+b+\sqrt{b+a}\right)}\right)\\
		&=(-1)^{\tr(a+1)}K(b^2+b+a)-1.
	\end{align*}
\end{proof}

\section{On the number of solutions of certain equations}\label{numberofsolutions}

In this section, by using character sums, we determine the number of solutions of certain equations over finite fields.

\begin{lemma}\label{no1}
	The number of solutions $(x,y)\in \mathcal{H}_1^2$ of the equation
	\begin{equation}\label{xy}
		x^2+xy+y^2+1=0
	\end{equation}
	is $\frac{1}{4}\bigl(q-(-1)^m(K(1)+1)\bigr)$.
\end{lemma}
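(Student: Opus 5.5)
We write the count as a sum of additive characters. Let $\mathcal{C}=\{(x,y)\in\gf_q^2: x^2+xy+y^2=1\}$. Since $\frac12(1-(-1)^{\tr(z)})$ is the indicator of $\mathcal{H}_1$, the number $N$ we want is
\[
N=\frac14\Bigl(|\mathcal{C}|-\Sigma_x-\Sigma_y+\Sigma_{x+y}\Bigr),\qquad
\Sigma_x=\sum_{(x,y)\in\mathcal{C}}(-1)^{\tr(x)},\quad
\Sigma_{x+y}=\sum_{(x,y)\in\mathcal{C}}(-1)^{\tr(x+y)},
\]
and $\Sigma_y$ is defined similarly. The equation is symmetric in $x,y$, so $\Sigma_y=\Sigma_x$. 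The target formula is then equivalent to
\[
|\mathcal{C}|=q-(-1)^m,\qquad \Sigma_x=(-1)^mK(1),\qquad \Sigma_{x+y}=(-1)^mK(1).
\]

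\textbf{Counting $|\mathcal{C}|$.} This is a smooth affine conic. Its projective closure has $q+1$ points. The points at infinity solve $x^2+xy+y^2=0$, and there are two of them exactly when $\omega\in\gf_q$, i.e.\ when $m$ is even; otherwise there are none. Hence $|\mathcal{C}|=q+1-(1+(-1)^m)=q-(-1)^m$. Alternatively, one can sum the fibre counts from the next step directly.

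\textbf{Computing $\Sigma_x$.} Fix $x$. If $x=0$, the only solution is $y=1$. If $x\ne0$, the equation $y^2+xy+(x^2+1)=0$ has $1+(-1)^{\tr((x^2+1)/x^2)}=1+(-1)^{m+\tr(1/x)}$ solutions, using $\tr(1/x^2)=\tr(1/x)$ and $\tr(1)=m \bmod 2$. Therefore
\[
\Sigma_x=1+\sum_{x\ne0}(-1)^{\tr(x)}+(-1)^m\sum_{x\ne0}(-1)^{\tr(x+1/x)}=1-1+(-1)^mK(1)=(-1)^mK(1).
\]

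\textbf{Computing $\Sigma_{x+y}$ (the main obstacle).} Here the fibre over $x$ has two solutions $y_1,y_2$ with $y_1+y_2=x$, so the individual traces of $x+y_i$ are not controlled. I would instead parametrize the conic by lines through the origin. The point $(0,1)$ is treated separately and contributes $(-1)^{\tr(1)}=(-1)^m$. For $x\ne0$ put $y=tx$, which gives $x^2(t^2+t+1)=1$. Each $t$ with $t^2+t+1\ne0$ yields exactly one $x=1/\sqrt{t^2+t+1}$. Then
\[
\tr(x+y)=\tr\Bigl(\tfrac{(1+t)^2}{t^2+t+1}\Bigr)=\tr(1)+\tr\Bigl(\tfrac{t}{t^2+t+1}\Bigr),
\]
so
\[
\Sigma_{x+y}=(-1)^m\Bigl(1+\sum_{t:\,t^2+t+1\ne0}(-1)^{\tr(t/(t^2+t+1))}\Bigr).
\]
To evaluate the inner sum I would reuse the fibre argument of Lemma~\ref{sm1}. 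The value $c=0$ comes only from $t=0$, and $c=1$ only from $t=1$. Each $c\ne0,1$ has $1+(-1)^{\tr(c/(c+1))}$ preimages. This gives
\[
\sum_{t:\,t^2+t+1\ne0}(-1)^{\tr(t/(t^2+t+1))}=1+(-1)^m+\sum_{c\ne0,1}(-1)^{\tr(c)}+\sum_{c\ne0,1}(-1)^{\tr(c)+\tr(c/(c+1))}.
\]
Substituting $c=u+1$ turns $\tr(c)+\tr(c/(c+1))$ into $\tr(u+1)+\tr(1+1/u)=\tr(u+1/u)$, so the last sum becomes $K(1)-(-1)^{\tr(2)}=K(1)-1$, since $u=1$ is excluded. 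Also $\sum_{c\ne0,1}(-1)^{\tr(c)}=-1-(-1)^m$. Adding up, the inner sum equals $K(1)-1$, hence $\Sigma_{x+y}=(-1)^mK(1)$, as required.

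\textbf{Conclusion.} Substituting the three values gives
\[
N=\tfrac14\bigl(q-(-1)^m-2(-1)^mK(1)+(-1)^mK(1)\bigr)=\tfrac14\bigl(q-(-1)^m(K(1)+1)\bigr).
\]
The only delicate points are the bookkeeping of the excluded values $t,c\in\{0,1\}$ and the separate treatment of the point $(0,1)$.
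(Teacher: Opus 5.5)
Your proof is correct, and it takes a different, longer route than the paper. You expand both trace indicators over the whole conic $\mathcal{C}$ and evaluate four sums: $|\mathcal{C}|$, $\Sigma_x$, $\Sigma_y$ and $\Sigma_{x+y}$. You get the last one from the parametrization $y=tx$ together with a rerun of the fibre count in Lemma~\ref{sm1}. All of your bookkeeping checks out: the point $(0,1)$, the values $t,c\in\{0,1\}$, and the substitution $c=u+1$.

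The paper avoids the mixed sum entirely. It fixes $y\in\mathcal{H}_1$ and notes that when $x^2+yx+y^2+1=0$ is solvable (i.e.\ $\tr(y^{-1})=\tr(1)$), the two roots satisfy $x_1+x_2=y$. Hence $\tr(x_1)+\tr(x_2)=1$, so exactly one root lies in $\mathcal{H}_1$. The count is then just the number of $y\in\mathcal{H}_1$ with $\tr(y^{-1})=\tr(1)$, which is the single sum $\frac14\sum_{y\ne0}(1-(-1)^{\tr(y)})(1+(-1)^{\tr(y^{-1})+\tr(1)})$. This needs neither $|\mathcal{C}|$ nor $\Sigma_{x+y}$.

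The same observation removes what you call the main obstacle. The two roots in the fibre over $x$ sum to $x$, so the involution $(x,y)\mapsto(x,x+y)$ maps $\mathcal{C}$ to itself. Therefore $\Sigma_{x+y}=\Sigma_y$ at once, with no parametrization. You are right that the individual traces of $x+y_i$ are not controlled, but only their sum over each fibre enters, and that sum is controlled.

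Your inclusion--exclusion is more mechanical and would adapt to conics lacking this symmetry. Here, though, it costs an extra character-sum evaluation that the paper's argument gets for free.
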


\begin{proof}
	Since $x,y\in \mathcal{H}_1$, we have $x,y\neq 0$. Viewing (\ref{xy}) as a quadratic equation in $x$, we note that it has solutions iff $\tr(1+y^{-2})=0$, i.e., $\tr(y^{-1})=\tr(1)$. When this condition holds, there are two solutions $x_1,x_2$ of (\ref{xy}) with $x_1+x_2=y$. Then $\tr(x_1)+\tr(x_2)=\tr(y)=1$, so exactly one of $x_1,x_2$ lies in $\mathcal{H}_1$. Hence, for each $y\in \mathcal{H}_1$ with $\tr(y^{-1})=\tr(1)$, there is exactly one $x\in \mathcal{H}_1$ satisfying (\ref{xy}). Therefore the number of such $(x,y)\in \mathcal{H}_1^2$ is
	\[
	\frac{1}{4}\sum_{y\in\gf_q^*}
	\left(1-(-1)^{\tr(y)}\right)
	\left(1+(-1)^{\tr(y^{-1})+\tr(1)}\right)
	=\frac{1}{4}\bigl(q-(-1)^m(K(1)+1)\bigr).
	\]
	This completes the proof.
\end{proof}

\begin{lemma}\label{no2}
	We have
	\[
	\#\{(x,y,z)\in \mathcal{H}_1^3:\Phi(x,y,z)=0,\ x,y,z\ \text{distinct}\}
	=\frac{1}{8}\bigl(q^2-7q+8+(-1)^m(S_m+3K(1)-6)\bigr),
	\]
	where
	\[
	\Phi(x,y,z)=(x^2+y^2+z^2+xy+xz+yz)(xy+xz+yz+1)+(x+y+z)(x+y)(x+z)(y+z).
	\]
\end{lemma}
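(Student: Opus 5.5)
The plan is to exploit the symmetry of $\Phi$. Writing $e_1=x+y+z$, $e_2=xy+xz+yz$, $e_3=xyz$, in characteristic $2$ we have $x^2+y^2+z^2+xy+xz+yz=e_1^2+e_2$ and $(x+y)(x+z)(y+z)=e_1e_2+e_3$. Hence
\[
\Phi=(e_1^2+e_2)(e_2+1)+e_1(e_1e_2+e_3)=e_1^2+e_2^2+e_2+e_1e_3 .
\]
Rather than counting cubics, I would fix $(x,y)$ and regard $\Phi$ as a polynomial in $z$. Put $s=x+y$ and $p=xy$. Substituting $e_1=s+z$, $e_2=p+sz$, $e_3=pz$, the cubic terms cancel and $\Phi$ becomes the quadratic
\[
Az^2+Bz+C,\qquad A=1+s^2+p,\quad B=s(1+p),\quad C=s^2+p^2+p .
\]
So the count becomes: the number of ordered pairs $(x,y)\in\mathcal H_1^2$ together with the number of roots $z\in\mathcal H_1$ of this quadratic, followed by removal of the non-distinct triples.

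Next I would handle the degenerate and non-distinct configurations by hand.
\begin{itemize}
\item If $x=y$, then $s=0$ and $p=x^2$, so the equation reads $(1+x^2)z^2=x^2(1+x^2)$. This gives $z=x$ when $x\neq1$. When $x=1$, which lies in $\mathcal H_1$ only for $m$ odd, every $z$ is a solution; this is the source of a $(-1)^m$ correction.
\item The triples with $z\in\{x,y\}$, $x\ne y$ are treated in the same way. By symmetry they reduce to pairs with two equal coordinates, and such pairs satisfy $\Phi(x,x,z)=(1+x^2)(z^2+x^2)=0$ with $z\ne x$. These exist only when $x=1$.
\item The loci $A=0$ and $B=0$, where the quadratic degenerates, have to be counted separately. $B=0$ means $s=0$ or $p=1$. $A=0$ is the curve $x^2+xy+y^2+1=0$, which is precisely the equation of Lemma \ref{no1}. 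On this curve the equation is linear, $Bz=C$, and we must check whether that $z$ lies in $\mathcal H_1$.
\end{itemize}

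For the generic part ($A,B\ne0$) I would count with additive characters:
\[
\#\{z\in\mathcal H_1: Az^2+Bz+C=0\}=\frac{1}{2q}\sum_{\lambda\in\gf_q}\sum_{z\in\gf_q}(-1)^{\tr(\lambda(Az^2+Bz+C))}\bigl(1-(-1)^{\tr(z)}\bigr).
\]
The inner sum over $z$ linearizes: $\sum_z(-1)^{\tr(\lambda Az^2+(\lambda B+\mu)z)}$ equals $q$ if $\lambda A=(\lambda B+\mu)^2$ and $0$ otherwise, for $\mu\in\{0,1\}$. This reduces the inner sum to a single $\lambda$, namely a root of $\lambda A=(\lambda B+\mu)^2$. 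Equivalently, the number of roots in $\mathcal H_1$ is
\[
\tfrac12\bigl(1+(-1)^{\tr(AC/B^2)}\bigr)\cdot\tfrac12\bigl(1+(-1)^{\tr(B/A)}\bigr)\cdot(\text{a sign term involving } \tr(\text{a root})).
\]
I would make this explicit by substituting $z=(B/A)w$, so that $w^2+w=AC/B^2$, and expressing $\tr(z)$ through $\tr$ of a rational function of $(s,p)$.

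Summing over $(x,y)\in\mathcal H_1^2$ then requires inserting the factors $\frac12(1-(-1)^{\tr x})$ and $\frac12(1-(-1)^{\tr y})$ and changing variables to $(s,p)$. The number of $(x,y)$ with $x+y=s$, $xy=p$ is $1+(-1)^{\tr(p/s^2)}$, and the conditions $\tr(x)=\tr(y)=1$ become conditions on $\tr(s)$ together with a sign. This yields sums over $p$ (for fixed $s$) of $(-1)^{\tr(R(p))}$ for rational functions $R$ with denominators of the form $p^2+p+a$. Lemmas \ref{rational1} and \ref{rational2} evaluate these as Kloosterman sums $K(\cdot)$ of polynomial or rational expressions in $s$. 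The remaining sums over $s$ should then match the shapes $\theta^2+\theta+1$, $(\theta^2+\theta+1)/\theta$, $(\theta^2+\theta+1)^3/(\theta^2+\theta)^2$, and so on, after writing $s$ or $1/s$ as $\theta^2+\theta$ (trace-$0$ elements, each hit twice). Lemmas \ref{ksum1}, \ref{ksum2} and \ref{ksum3} then give the value in terms of $q$, $K(1)$ and $S_m$. This produces the terms $(-1)^m(S_m+3K(1))$ in the claimed formula.

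The main obstacle will be this last reduction: arranging the nested trace conditions so that, after the $p$-sum, the argument of $K$ is exactly one of the forms treated in Lemmas \ref{ksum1}--\ref{ksum3}, and keeping track of all the boundary terms ($p\in\{0,1\}$, $s\in\{0,1\}$, $A=0$) with the correct $(-1)^m$ signs. Finally I would divide out nothing: the count asked for is of ordered triples, so the last step is to subtract the non-distinct solutions found above and check the constant $8-6(-1)^m$ against small $m$.
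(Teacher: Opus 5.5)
Your algebra is correct: $\Phi=e_1^2+e_2^2+e_2+e_1e_3$, the quadratic $Az^2+Bz+C$ in $z$ is right, and the degenerate loci cause no trouble (for instance, on $A=0$ the linear equation gives $z=x+y$, which has trace $0$). The gap is the generic step. That step is the entire content of the lemma, and you leave it open.

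When the quadratic has roots and $\tr(B/A)=0$, whether it has $0$ or $2$ roots in $\mathcal H_1$ depends on $\tr(z_0)$ for a root $z_0$. With $z=(B/A)w$ this equals $\tr(\eta\,AC/B^2)$, where $\eta^2+\eta=(B/A)^2$. Your $\lambda$-formulation gives the same thing: a sum of $(-1)^{\tr(\lambda C)}$ over the roots of $B^2\lambda^2+A\lambda+1=0$.

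For fixed $s\neq0$, $B/A=s(1+p)/(1+s^2+p)$ has a simple pole at $p=1+s^2$. So it is not of the form $\epsilon^2+\epsilon$ in $\gf_q(p)$, and $\eta$ is not a rational function of $p$. Hence this sign is not of the form $(-1)^{\tr(R(p))}$ for a rational $R$, and the $p$-sums are not of the type handled by Lemmas \ref{rational1} and \ref{rational2}. The same problem arises for $\tr(x)$ with $x^2+sx+p=0$, which needs a root of $\gamma^2+\gamma=s^2$. So your claim that everything reduces to sums of $(-1)^{\tr(R(p))}$ with denominators $p^2+p+a$ is unsupported. The proposal stops exactly where $S_m$ and $3K(1)$ would have to appear.

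Your displayed product formula is also wrong. If $\tr(B/A)=1$ and roots exist, exactly one root lies in $\mathcal H_1$, but the factor $\frac12(1+(-1)^{\tr(B/A)})$ makes the count $0$.

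The missing idea is visible in your own formula: $\Phi=(e_1^2+e_2)+(e_2^2+e_1e_3)$ is a homogeneous quadratic plus a homogeneous quartic, so a scaling argument solves the equation. The paper sets $x+z=\theta(y+z)$ and $x+y+z=\phi(y+z)$. On a solution, $e_1^2+e_2=(y+z)^2(\theta^2+\theta+1)$ is nonzero, because $(e_1^2+e_2)(e_2+1)=e_1(x+y)(x+z)(y+z)\neq0$. Then $\Phi=0$ becomes $(y+z)^{-2}=\Delta_\theta(\phi)=\phi^2+\frac{\theta^2+\theta}{\theta^2+\theta+1}\phi+\theta^2+\theta+1$. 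This fixes $y+z$ uniquely, since squaring is a bijection.

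So the distinct solutions are rationally parametrized by pairs $(\theta,\phi)$ with $\theta\notin\{0,1\}$, $\theta^2+\theta+1\neq0$ and $\Delta\neq0$. Since $x^2=(\phi^2+1)/\Delta$, $y^2=(\theta^2+\phi^2)/\Delta$ and $z^2=(\theta^2+\phi^2+1)/\Delta$, the trace conditions become $\tr(1/\Delta)=\tr(\theta^2/\Delta)=0$ and $\tr(\phi^2/\Delta)=1$. Expanding the indicator product gives eight $\phi$-sums with quadratic denominator $\Delta$. These have exactly the shape of Lemmas \ref{rational1} and \ref{rational2}, and the seven resulting Kloosterman arguments are those of Lemmas \ref{ksum1}--\ref{ksum3}. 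No Artin--Schreier root needs to be tracked, and distinctness is automatic, so no degenerate triples need to be subtracted.

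Your route might be rescued, since the relevant double covers are rational; for example, take $B/A=\kappa^2+\kappa$ and $s=\nu^2+\nu$. But that is a different computation, which you have not set up, and nothing shows it lands on the sums of Lemmas \ref{ksum1}--\ref{ksum3}.
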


\begin{proof}
	Since $x,y,z\in \mathcal{H}_1$, we have $x+y+z\neq0$. Together with $(x+y)(x+z)(y+z)\neq0$, this gives $x^2+y^2+z^2+xy+xz+yz\neq0$. Let $x+z=\theta(y+z)$ and $x+y+z=\phi(y+z)$. Then
	\[
	x^2+y^2+z^2+xy+xz+yz=(x+z)^2+(x+z)(y+z)+(y+z)^2=(y+z)^2(\theta^2+\theta+1),
	\]
	\[
	xy+xz+yz+1=(y+z)^2(\phi^2+\theta^2+\theta+1)+1,
	\]
	and
	\[
	(x+y)(x+z)(y+z)=(y+z)^3(\theta^2+\theta).
	\]
	We have $\theta^2+\theta\neq0$ and $\theta^2+\theta+1\neq0$, and $\Phi(x,y,z)=0$ becomes
	\[
	\phi^2+\frac{\theta^2+\theta}{\theta^2+\theta+1}\phi+(\theta^2+\theta+1)=(y+z)^{-2}.
	\]
	
	The change of variables is invertible. Given $y+z,\theta$ and $\phi$, the values of $x,y$ and $z$ are uniquely determined by 
	\[
	x=(\phi+1)(y+z),\qquad y=(\phi+\theta)(y+z),\qquad z=(\phi+\theta+1)(y+z).
	\]
	Define $\Delta=\Delta_\theta(\phi)=\phi^2+\frac{\theta^2+\theta}{\theta^2+\theta+1}\phi+(\theta^2+\theta+1)$. From $\Delta=(y+z)^{-2}$ we get
	\[
	x^2=\frac{\phi^2+1}{\Delta},\qquad y^2=\frac{\theta^2+\phi^2}{\Delta},\qquad z^2=\frac{\theta^2+\phi^2+1}{\Delta}.
	\]
	The condition $x,y,z\in \mathcal{H}_1$ is equivalent to
	\[
	\tr\left(\frac{\phi^2+1}{\Delta}\right)
	=\tr\left(\frac{\theta^2+\phi^2}{\Delta}\right)
	=\tr\left(\frac{\theta^2+\phi^2+1}{\Delta}\right),
	\]
	or equivalently,
	\[
	\tr\left(\frac{1}{\Delta}\right)=0,\qquad
	\tr\left(\frac{\theta^2}{\Delta}\right)=0,\qquad
	\tr\left(\frac{\phi^2}{\Delta}\right)=1.
	\]
	
	The number of such $(\theta,\phi)$ is
	\begin{equation}\label{sums}
		\sum_{\theta\in\gf_q\setminus\{0,1\},\ \theta^2+\theta+1\neq0}
		\sum_{\phi\in\gf_q,\ \Delta\neq0}
		\frac{1}{8}
		\left(1+(-1)^{\tr\left(\frac{1}{\Delta}\right)}\right)
		\left(1+(-1)^{\tr\left(\frac{\theta^2}{\Delta}\right)}\right)
		\left(1-(-1)^{\tr\left(\frac{\phi^2}{\Delta}\right)}\right).
	\end{equation}
	For a given $\theta$, the equation $\Delta_\theta(\phi)=0$ has solutions in $\gf_q$ iff
	\[
	\tr\left(\frac{(\theta^2+\theta+1)^3}{(\theta^2+\theta)^2}\right)=0.
	\]
	More precisely,
	\[
	\tr\left(\frac{(\theta^2+\theta+1)^3}{(\theta^2+\theta)^2}\right)
	=\tr\left(\theta^2+\theta+1+(\theta^2+\theta)^{-1}+(\theta^2+\theta)^{-2}\right)
	=\tr(1).
	\]
	Thus when $m$ is odd, $\Delta_\theta(\phi)\neq0$ for every $\phi\in\gf_q$; when $m$ is even, there are exactly two $\phi$'s with $\Delta_\theta(\phi)=0$. Moreover, when $m$ is odd, $\theta^2+\theta+1\neq0$ always holds, and there are exactly two $\theta$'s satisfying $\theta^2+\theta+1=0$. Hence,
	
	\begin{align*}
		&\sum_{\theta\in\gf_q\setminus\{0,1\},\ \theta^2+\theta+1\neq0}
		\sum_{\phi\in\gf_q,\ \Delta\neq0}
		\frac{1}{8}
		\left(1+(-1)^{\tr\left(\frac{1}{\Delta}\right)}\right)
		\left(1+(-1)^{\tr\left(\frac{\theta^2}{\Delta}\right)}\right)
		\left(1-(-1)^{\tr\left(\frac{\phi^2}{\Delta}\right)}\right)\\
		&=\frac{1}{8}\sum_{\theta\in\gf_q\setminus\{0,1\},\ \theta^2+\theta+1\neq0}
		\left(
		\sum_{\phi\in\gf_q,\Delta\neq0}1
		+\sum_{\phi\in\gf_q,\Delta\neq0}(-1)^{\tr\left(\frac{1}{\Delta}\right)}
		+\sum_{\phi\in\gf_q,\Delta\neq0}(-1)^{\tr\left(\frac{\theta^2}{\Delta}\right)}
		+\sum_{\phi\in\gf_q,\Delta\neq0}(-1)^{\tr\left(\frac{1+\theta^2}{\Delta}\right)}
		\right.\\
		&\qquad\left.
		-\sum_{\phi\in\gf_q,\Delta\neq0}(-1)^{\tr\left(\frac{\phi^2}{\Delta}\right)}
		-\sum_{\phi\in\gf_q,\Delta\neq0}(-1)^{\tr\left(\frac{1+\phi^2}{\Delta}\right)}
		-\sum_{\phi\in\gf_q,\Delta\neq0}(-1)^{\tr\left(\frac{\theta^2+\phi^2}{\Delta}\right)}
		-\sum_{\phi\in\gf_q,\Delta\neq0}(-1)^{\tr\left(\frac{1+\theta^2+\phi^2}{\Delta}\right)}
		\right)
	\end{align*}
	All of these sums can be evaluated using Lemmas \ref{rational1} and \ref{rational2}. We have
	\begin{align*}
		\sum_{\phi\in\gf_q,\Delta\neq0}1&=q-1-(-1)^m,\\
		\sum_{\phi\in\gf_q,\Delta\neq0}(-1)^{\tr\left(\frac{1}{\Delta}\right)}
		&=(-1)^mK\!\left(\frac{\theta^2+\theta+1}{\theta^2+\theta}\right)-1,\\
		\sum_{\phi\in\gf_q,\Delta\neq0}(-1)^{\tr\left(\frac{\theta^2}{\Delta}\right)}
		&=(-1)^mK\!\left(\frac{\theta^2+\theta+1}{\theta+1}\right)-1,\\
		\sum_{\phi\in\gf_q,\Delta\neq0}(-1)^{\tr\left(\frac{1+\theta^2}{\Delta}\right)}
		&=(-1)^mK\!\left(\frac{\theta^2+\theta+1}{\theta}\right)-1,\\
		\sum_{\phi\in\gf_q,\Delta\neq0}(-1)^{\tr\left(\frac{\phi^2}{\Delta}\right)}
		&=(-1)^m\left(K\!\left(\frac{(\theta^2+\theta+1)^3}{(\theta^2+\theta)^2}\right)-1\right),\\
		\sum_{\phi\in\gf_q,\Delta\neq0}(-1)^{\tr\left(\frac{1+\phi^2}{\Delta}\right)}
		&=(-1)^m\left(K(\theta^2+\theta+1)-1\right),\\
		\sum_{\phi\in\gf_q,\Delta\neq0}(-1)^{\tr\left(\frac{\theta^2+\phi^2}{\Delta}\right)}
		&=(-1)^m\left(K\!\left(\frac{\theta^2+\theta+1}{\theta^2}\right)-1\right),\\
		\sum_{\phi\in\gf_q,\Delta\neq0}(-1)^{\tr\left(\frac{1+\theta^2+\phi^2}{\Delta}\right)}
		&=(-1)^m\left(K\!\left(\frac{\theta^2+\theta+1}{\theta^2+1}\right)-1\right).
	\end{align*}
	Hence the inner sum is
	\begin{align*}
		\frac{1}{8}\Bigg(&q-4+3(-1)^m\\
		&+(-1)^m\Bigg(
		K\!\left(\frac{\theta^2+\theta+1}{\theta^2+\theta}\right)
		+K\!\left(\frac{\theta^2+\theta+1}{\theta+1}\right)
		+K\!\left(\frac{\theta^2+\theta+1}{\theta}\right)\\
		&\qquad\qquad-K\!\left(\frac{(\theta^2+\theta+1)^3}{(\theta^2+\theta)^2}\right)
		-K(\theta^2+\theta+1)
		-K\!\left(\frac{\theta^2+\theta+1}{\theta^2}\right)
		-K\!\left(\frac{\theta^2+\theta+1}{\theta^2+1}\right)
		\Bigg)\Bigg).
	\end{align*}
	By Lemmas \ref{ksum1}, \ref{ksum2} and \ref{ksum3}, the desired result follows.
\end{proof}

\section{The cross correlation distribution}\label{main}

The cross correlation function between an $m$-sequence $\{s(t)\}$ and its $d$-decimation $\{s(dt)\}$ is defined as
\[
C_d(\tau)=\sum_{0\leq t\leq 2^n-2}(-1)^{\tr_{\gf_{2^n}/\gf_2}(s(t+\tau)-s(t))}.
\]
Equivalently, with
\[
s(a)=1+C_d(\tau)=\sum_{x\in\gf_{2^n}}(-1)^{\tr_{\gf_{2^n}/\gf_2}(x^d+ax)},
\]
to determine the cross correlation distribution of $C_d(\tau)$,  it is sufficient to determine the multiset
\[
\{s(a):a\in\gf_{2^n}\}.
\]
When $n=2m$ for a positive integer $m$ and $d=s(2^m-1)+1$ is a Niho-type exponent, we have the following results. These were proved by Dobbertin et al. \cite{DFHR06}.

\begin{theorem}\cite{DFHR06}
	When $a$ runs through $\gf_{2^n}$, $s(a)$ takes at most $2s$ values, namely among
	\[
	-2^m,\ 0,\ 2^m,\ 2\cdot 2^m,\ \cdots,\ (2s-2)2^m.
	\]
\end{theorem}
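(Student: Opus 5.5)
The plan is the standard Niho reduction of the Walsh-type sum $s(a)$ to counting roots of a low-degree polynomial on the unit circle $U_{q+1}$. We use the polar decomposition $\gf_{q^2}^*=\gf_q^*\cdot U_{q+1}$: every nonzero $x$ is written uniquely as $x=yu$ with $y\in\gf_q^*$ and $u\in U_{q+1}$. Since $y^{q}=y$, we get $x^d=y^{d}u^{d}=y\,u^{d}$, because $y^{s(q-1)}=1$. With $d\equiv 1-2s \pmod{q+1}$ on $U_{q+1}$, this gives $x^d=y\,u^{1-2s}$.

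The steps are as follows.

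\begin{enumerate}
\item Write $\tr_{\gf_{q^2}/\gf_2}(z)=\tr(z+\overline z)$. Then
\[
\tr_{\gf_{q^2}/\gf_2}(x^d+ax)=\tr\bigl(y\,(u^{1-2s}+u^{2s-1}+au+\bar a u^{-1})\bigr).
\]
Here $g_a(u)=u^{1-2s}+u^{2s-1}+au+\bar a\bar u$ lies in $\gf_q$.

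\item Sum over $y\in\gf_q^*$ for each fixed $u$. This sum equals $q-1$ if $g_a(u)=0$ and $-1$ otherwise.

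\item Account for the multiplicity of the decomposition. As $(y,u)$ ranges over $\gf_q^*\times U_{q+1}$, each $x\ne 0$ is hit exactly once, so
\[
s(a)=1+\sum_{u\in U_{q+1}}\bigl(q\,[g_a(u)=0]-1\bigr)=1-(q+1)+q\,N(a)=q\,(N(a)-1),
\]
where $N(a)=\#\{u\in U_{q+1}:g_a(u)=0\}$.

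\item Bound $N(a)$. Multiplying $g_a(u)$ by $u^{2s-1}$ gives a polynomial of degree $4s-2$, so $N(a)\le 4s-2$. Moreover $N(a)$ has constant parity: the parity is fixed by counting the roots of $g_a$ on the whole circle, which is consistent with $\sum_a s(a)=q^2$ (from orthogonality). This gives $N(a)\in\{0,2,\dots\}$ or a fixed odd set.

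\item Refine the value set. The multiset $\{s(a)\}$ then takes values in $q\cdot\{N-1\}$ with $N$ ranging over $0,1,\dots,2s-1$ of the appropriate parity. This is the stated list $-2^m,0,\dots,(2s-2)2^m$.
\end{enumerate}

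The main obstacle is the sharper bound $N(a)\le 2s-1$, rather than the naive bound $4s-2$ from the degree of $u^{2s-1}g_a(u)$.

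For this step I would substitute $u=v^2$, which is legitimate because squaring is a bijection on $U_{q+1}$. Since $g_a$ is self-conjugate, its zeros on $U_{q+1}$ correspond to zeros of $v^{2s}\cdot(\text{expression})$. After the substitution it becomes a polynomial of the shape $v^{2s-1}+ \alpha v^{s}+\bar\alpha v^{s-1}+1$, up to rewriting via the $\sqrt{\cdot}$ map in characteristic $2$. This is exactly the paper's $f_a(x)=x^7+ax^4+\bar a x^3+1$ when $s=4$. Its degree is $2s-1$, which gives $N(a)\le 2s-1$.

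Concretely, I would expand $(u^{1-2s}+u^{2s-1})+(au+\bar a u^{-1})$ using the identity $w+w^{-1}=(\sqrt w+\sqrt w^{-1})^2$. This identifies the trace-zero condition with a self-reciprocal degree-$(2s-1)$ polynomial in $\sqrt u$, from which the bound follows.
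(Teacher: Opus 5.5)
The paper does not prove this statement; it is quoted from \cite{DFHR06}, so your argument has to stand on its own. Your Steps 1--3 are correct: this is the standard Niho reduction, and it gives $s(a)=(N(a)-1)2^m$ with $N(a)$ the number of zeros of $g_a$ on $U_{q+1}$. Two things go wrong after that.

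\textbf{The parity claim in Step 4 is false.} $N(a)$ does not have constant parity. For example, $s(0)=0$ because $x\mapsto x^d$ is a permutation, so $N(0)=1$. The paper's example for $m=4$ shows that $N(a)=0,1,2,3,5$ all occur. The claim also contradicts your own Step 5, since a fixed parity would leave only about $s$ admissible values, not the full list of $2s$. No parity information is needed: $0\le N(a)\le 2s-1$ is all the statement requires, so this claim should simply be deleted.

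\textbf{The degree reduction is set up in the wrong direction.} You correctly identify this as the crux. Writing $u=v^2$ and using $w+w^{-1}=(\sqrt w+\sqrt w^{-1})^2$ gives $g_a(u)=\bigl(v^{2s-1}+v^{1-2s}+\sqrt a\,v+\sqrt{\bar a}\,v^{-1}\bigr)^2$. Clearing denominators leaves $v^{4s-2}+\sqrt a\,v^{2s}+\sqrt{\bar a}\,v^{2s-2}+1$, which still has degree $4s-2$. So there is no self-reciprocal degree-$(2s-1)$ polynomial in $\sqrt u$.

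The missing observation is that every exponent in
\[
u^{2s-1}g_a(u)=u^{4s-2}+a\,u^{2s}+\bar a\,u^{2s-2}+1
\]
is even. Hence this expression equals $Q_a(u^2)$, where $Q_a(w)=w^{2s-1}+aw^{s}+\bar a w^{s-1}+1$; equivalently, it is the square of $u^{2s-1}+\sqrt a\,u^{s}+\sqrt{\bar a}\,u^{s-1}+1$. Since $u\mapsto u^2$ permutes $U_{q+1}$, $N(a)$ equals the number of roots of $Q_a$ in $U_{q+1}$. For $s\ge 2$ the four monomials of $Q_a$ are distinct, so $Q_a$ is monic of degree $2s-1$. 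This gives $N(a)\le 2s-1$, and therefore $s(a)\in\{-2^m,0,2^m,\dots,(2s-2)2^m\}$.

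So the substitution you want is $w=u^2$, not $u=v^2$. With that correction and the parity claim removed, your proof is complete.
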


The frequency of each value is characterized in the following theorem.

\begin{theorem}\cite{DFHR06}
	For $a\in\gf_{2^n}$, $s(a)$ takes exactly the values $(N(a)-1)2^m$, where $N(a)$ is the number of $x\in U_{q+1}$ satisfying
	\[
	x^{2s-1}+ax^s+\overline{a}x^{s-1}+1=0.
	\]
\end{theorem}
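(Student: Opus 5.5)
The plan is to use the polar decomposition of $\gf_{q^2}^*$ and reduce the Walsh-type sum $s(a)$ to a count of zeros on the unit circle $U_{q+1}$. Since $q$ is even, $\gcd(q-1,q+1)=1$. Hence the map $\gf_q^*\times U_{q+1}\to\gf_{q^2}^*$, $(y,u)\mapsto yu$, is a bijection, because $|\gf_q^*\cap U_{q+1}|=1$ and the cardinalities match. I will write every nonzero $x$ uniquely as $x=yu$ and compute $x^d$ with $d=s(q-1)+1$:
\[
y^d=y \quad (\text{as } y^{q-1}=1),\qquad u^d=u^{s(q-1)+1}=u^{1-2s}\quad(\text{as } u^q=u^{-1}).
\]

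Next I use $\tr_{\gf_{q^2}/\gf_2}=\tr\circ\tr_{\gf_{q^2}/\gf_q}$ and the fact that $y\in\gf_q$. This gives
\[
\tr_{\gf_{q^2}/\gf_2}(x^d+ax)=\tr\bigl(y\,g_a(u)\bigr),\qquad g_a(u)=u^{1-2s}+u^{2s-1}+au+\overline{a}u^{-1}\in\gf_q.
\]
Separating the term $x=0$, I get
\[
s(a)=1+\sum_{u\in U_{q+1}}\sum_{y\in\gf_q^*}(-1)^{\tr(y g_a(u))}.
\]
The inner sum equals $q-1$ if $g_a(u)=0$ and $-1$ otherwise. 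Writing $Z(a)=\#\{u\in U_{q+1}:g_a(u)=0\}$, this yields
\[
s(a)=1+(q-1)Z(a)-(q+1-Z(a))=(Z(a)-1)q.
\]

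It remains to identify $Z(a)$ with a root count of the stated polynomial. Multiplying $g_a(u)=0$ by $u^{2s-1}\neq0$ gives
\[
u^{4s-2}+au^{2s}+\overline{a}u^{2s-2}+1=0.
\]
In characteristic $2$ the left side equals $\bigl(v^{2s-1}+\sqrt a\,v^{s}+\overline{\sqrt a}\,v^{s-1}+1\bigr)^2$ with $v=u$, after taking square roots coefficientwise; here $\overline{\sqrt a}=\sqrt{\overline a}$. So $Z(a)=N(\sqrt a)$, and therefore $s(a)=(N(\sqrt a)-1)2^m$. Since $a\mapsto\sqrt a$ is a permutation of $\gf_{q^2}$, the multisets $\{s(a)\}$ and $\{(N(a)-1)2^m\}$ coincide, which is the content of the statement. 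Alternatively, the substitution $x=u^2$ turns $g_a(u)=0$ into $x^{2s-1}+ax^s+\overline a x^{s-1}+1=0$ in terms of $x$ after the same coefficient adjustment, and squaring permutes $U_{q+1}$.

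The only delicate points are bookkeeping. The first is that the polar map is really bijective; this needs $q$ even so that $\gf_q^*\cap U_{q+1}=\{1\}$. The second is the square-root twist between $a$ and $\sqrt a$, which is harmless for the distribution but must be stated so the claim is read as an equality of multisets. The remaining steps are routine evaluations of additive character sums over $\gf_q$.
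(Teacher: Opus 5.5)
Your proof is correct and is essentially the standard polar-decomposition argument of Dobbertin et al., which the paper quotes without reproving. One correction: the square-root twist is not real. Setting $x=u^2$ in $u^{4s-2}+au^{2s}+\overline{a}u^{2s-2}+1=0$ gives exactly $x^{2s-1}+ax^s+\overline{a}x^{s-1}+1=0$ with no change of coefficients, and squaring permutes $U_{q+1}$. Hence $Z(a)=N(a)$ (equivalently $N(\sqrt{a})=N(a)$), so $s(a)=(N(a)-1)2^m$ holds for each $a$, not merely as an equality of multisets.
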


To determine the multiset $\{s(a):a\in\gf_{2^n}\}$, it suffices to determine the multiset of $\{N(a):a\in\gf_{2^n}\}$. We define
\[
N_i=\#\{a\in\gf_{2^n}:N(a)=i\}
\]
for $i=0,1,\cdots,2s-1$. Then $N_i$ is the frequency of $\{a\in\gf_{2^n}:s(a)=(i-1)\cdot 2^m\}$. We have the following four power moments.

\begin{theorem}\cite{DFHR06}\label{moment}
	We have
	\[
	\begin{cases}
		\sum_{i=0}^{2s-1}N_i=q^2,\\[2mm]
		\sum_{i=0}^{2s-1}(i-1)N_i=q,\\[2mm]
		\sum_{i=0}^{2s-1}(i-1)^2N_i=q^2,\\[2mm]
		\sum_{i=0}^{2s-1}(i-1)^3N_i=b_3q,
	\end{cases}
	\]
	where $b_3$ is the number of solutions $x\in\gf_{2^n}$ of the equation $(x+1)^d+x^d=1$.
\end{theorem}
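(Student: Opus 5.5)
The plan is to derive all four identities from the definition
\[
s(a)=\sum_{x\in\gf_{2^n}}(-1)^{\tr_{\gf_{2^n}/\gf_2}(x^d+ax)}
\]
and the relation $s(a)=(N(a)-1)q$ from the preceding theorem. Since
\[
\sum_{i}(i-1)^kN_i=\sum_{a\in\gf_{2^n}}(N(a)-1)^k=q^{-k}\sum_{a\in\gf_{2^n}}s(a)^k,
\]
it suffices to compute the power sums $\sum_a s(a)^k$ for $k=0,1,2,3$. The case $k=0$ is trivial: $\sum_iN_i=|\gf_{2^n}|=q^2$. Write $\tr_n=\tr_{\gf_{2^n}/\gf_2}$ for brevity.

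For $k\ge1$, expand $s(a)^k$ as a $k$-fold sum over $x_1,\dots,x_k\in\gf_{2^n}$ and interchange the order of summation. The sum over $a$ of $(-1)^{\tr_n(a(x_1+\cdots+x_k))}$ equals $q^2$ when $x_1+\cdots+x_k=0$ and vanishes otherwise. This gives
\[
\sum_a s(a)^k=q^2\sum_{x_1+\cdots+x_k=0}(-1)^{\tr_n(x_1^d+\cdots+x_k^d)}.
\]
For $k=1$ only $x_1=0$ survives, so $\sum_a s(a)=q^2$, and dividing by $q$ gives $\sum_i(i-1)N_i=q$. For $k=2$ the constraint forces $x_2=x_1$, so every term equals $1$ (characteristic $2$). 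Hence $\sum_a s(a)^2=q^4$, and dividing by $q^2$ gives $q^2$.

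The case $k=3$ is the only one that needs an idea. Here $x_3=x_1+x_2$, and we must evaluate
\[
\Sigma=\sum_{x,y}(-1)^{\tr_n(x^d+y^d+(x+y)^d)}.
\]
The $x=0$ terms contribute $q^2$. For $x\ne0$, substitute $y=xt$ with $t\in\gf_{2^n}$. Then the exponent becomes $x^d c(t)$, where
\[
c(t)=1+t^d+(1+t)^d.
\]
Since $\gcd(d,2^{n}-1)=1$, the map $x\mapsto x^d$ permutes $\gf_{2^n}^*$. Therefore, for fixed $t$, the inner sum over $x\ne0$ equals $q^2-1$ if $c(t)=0$ and $-1$ otherwise. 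Summing over $t$ gives
\[
\Sigma=q^2+b_3(q^2-1)-(q^2-b_3)=q^2b_3,
\]
because $c(t)=0$ is exactly the equation $(t+1)^d+t^d=1$ defining $b_3$. Thus $\sum_a s(a)^3=q^4b_3$, and dividing by $q^3$ yields $b_3q$.

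The only delicate point is this last step. It requires the coprimality of $d$ with $2^n-1$ and careful bookkeeping of the $x=0$ terms. Everything else is routine orthogonality.
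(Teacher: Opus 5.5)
Your argument is correct. The orthogonality reduction $\sum_a s(a)^k=q^2\sum_{x_1+\cdots+x_k=0}(-1)^{\tr_{\gf_{2^n}/\gf_2}(x_1^d+\cdots+x_k^d)}$ holds, the cases $k=1,2$ are evaluated correctly, and for $k=3$ the substitution $y=xt$ works because $x\mapsto x^d$ permutes $\gf_{2^n}^*$ (with $t=0,1$ correctly counted among the $b_3$ solutions). Passing to the $N_i$ through $s(a)=(N(a)-1)q$ and $0\le N(a)\le 2s-1$ is also legitimate.

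The paper gives no proof of its own and simply quotes Dobbertin et al.\ \cite{DFHR06}. What you wrote is the standard power-moment computation behind that citation, so it is a self-contained justification. It also makes clear that the Niho form of $d$ enters only through the relation $s(a)=(N(a)-1)q$; the moments of $s(a)$ themselves hold for any $d$ coprime to $2^n-1$.
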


In the following, we focus on the case $s=4$. We use the notation $q=2^m$, and now $d=4q-3$. For $a\in\gf_{q^2}$, define
\begin{equation}\label{fa}
f_a(x)=x^7+ax^4+\overline{a}x^3+1.
\end{equation}

We are going to determine the distribution of the number of solutions of $f_a(x)=0$ in $U_{q+1}$, i.e., the collection $\{N_0,N_1,\cdots,N_7\}$. 

\subsection{Determination of $N_4$ and $N_6$}

First, note that $x$ is a solution of (\ref{fa}) if and only if $x^{-q}$ is, and $(x^{-q})^{-q}=x$. Hence the solutions outside $U_{q+1}$ come in pairs. More precisely, if the number of roots in $\gf_{q^2}\setminus U_{q+1}$ is odd, or equivalently if $N(a)$ is even, then $f_a(x)$ must have a multiple root. Considering the cases where $f_a(x)$ has a multiple root, we obtain the following theorem.

\begin{theorem}\label{N4}
	We have $N_6=0$. Moreover, $N_4=0$ when $m$ is even, and $N_4=\frac{q-2}{3}$ when $m$ is odd.
\end{theorem}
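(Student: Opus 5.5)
The idea is to exploit the parity observation made just before the statement. If $N(a)$ is even, then $f_a$ must have a repeated root. So I would classify all $a\in\gf_{q^2}$ for which $f_a$ has a repeated root, and count the distinct roots in $U_{q+1}$ directly for those $a$. Every other $a$ has $N(a)$ odd and cannot contribute to $N_4$ or $N_6$.

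\emph{Step 1: locate the repeated roots.} In characteristic $2$,
\[
f_a'(x)=x^6+\overline{a}x^2=x^2\bigl(x^4+\overline{a}\bigr).
\]
Since $f_a(0)=1$, any repeated root $x_0$ satisfies $x_0^4=\overline{a}$. Then $a x_0^4=a\overline{a}$ and $\overline{a}x_0^3=x_0^7$, so
\[
f_a(x_0)=x_0^7+a\overline{a}+x_0^7+1=a\overline{a}+1 .
\]
Hence $f_a$ has a repeated root if and only if $a\in U_{q+1}$. For $a\notin U_{q+1}$ this gives $N(a)$ odd, so such $a$ lie outside $N_4$ and $N_6$.

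\emph{Step 2: normalize when $a\in U_{q+1}$.} Let $x_0$ be the unique element of $U_{q+1}$ with $x_0^4=\overline{a}$, i.e.\ $a=x_0^{-4}$. Substituting $x=x_0y$ gives
\[
f_a(x_0y)=c\,y^3(y^4+1)+(y^4+1)=(y+1)^4\,(c\,y^3+1),\qquad c=x_0^7\in U_{q+1}.
\]
Since $q+1=2^m+1\not\equiv 0\pmod 7$ (because $2^m\bmod 7\in\{1,2,4\}$), the map $x_0\mapsto x_0^7$ is a bijection of $U_{q+1}$. Therefore, as $a$ runs over $U_{q+1}$, $c$ also runs over $U_{q+1}$ exactly once. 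Moreover $y\in U_{q+1}$ if and only if $x\in U_{q+1}$. So $N(a)$ equals $1$ (for $y=1$) plus the number of $y\in U_{q+1}\setminus\{1\}$ with $y^3=c^{-1}$.

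\emph{Step 3: count.} The count depends on whether $3\mid q+1$.
\begin{itemize}
\item If $m$ is even, then $3\nmid q+1$, so cubing is a bijection on $U_{q+1}$. There is exactly one such $y$, and it equals $1$ only when $c=1$. Thus $N(a)\in\{1,2\}$, giving $N_4=N_6=0$.
\item If $m$ is odd, then $3\mid q+1$. The equation $y^3=c^{-1}$ has either $0$ or $3$ solutions in $U_{q+1}$, and it has $3$ solutions for exactly $(q+1)/3$ values of $c$.
\begin{itemize}
\item For $c=1$ the three solutions include $y=1$, so $N(a)=3$.
\item For the remaining $(q+1)/3-1=(q-2)/3$ cubes $c\neq 1$, we get $N(a)=4$.
\item For all other $c$, $N(a)=1$.
\end{itemize}
This gives $N_4=\frac{q-2}{3}$ and $N_6=0$.
\end{itemize}

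The main point needing care is the reduction in Step 1: $N(a)$ even forces a repeated root. This holds because roots outside $U_{q+1}$ pair up under $x\mapsto x^{-q}$, which fixes exactly the elements of $U_{q+1}$. One must also apply the pairing to distinct roots counted consistently, which is fine since we count distinct roots throughout. After that, the factorization in Step 2 is a routine verification.
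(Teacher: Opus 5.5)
Your Steps 2 and 3 are correct and follow the paper's proof almost verbatim. The paper writes $f_a(x)=(x+a^{-1/4})^4(x^3+a)$ and parametrizes $a=\gamma^{i(q-1)}$ instead of using $c=x_0^7$, but that difference is cosmetic.

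\textbf{The gap is in Step 1.} It is exactly the ``pairing'' step the paper also takes for granted (``As discussed above\dots''), so you inherit the gap, but it is a real one. The assertion that $N(a)$ is odd for every $a\notin U_{q+1}$ is false. Theorem~\ref{moment} gives $q^2-N_0=\sum_{i\ge2}(i-1)^2N_i\le 6\sum_{i\ge2}(i-1)N_i=6(q+N_0)$. Hence $N_0\ge (q^2-6q)/7>q+1$ once $q\ge16$; for $m=4$ the paper's example has $N_0=88$ and $N_2=56$, against $|U_{q+1}|=17$. Your justification breaks because $\rho(x)=x^{-q}$ is an involution only on $\gf_{q^2}$: in general $\rho^2(x)=x^{q^2}$. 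The roots of $f_a$ need not lie in $\gf_{q^2}$, and on roots outside $\gf_{q^2}$ the map $\rho$ is merely a permutation. It can have cycles of odd length $\ge 3$ there, so ``roots outside $U_{q+1}$ pair up'' is not available.

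\textbf{What survives, and what is missing.} If $N(a)=6$ and $f_a$ is separable, the seventh root is a $\rho$-orbit by itself, hence lies in $U_{q+1}$. So $N_6=0$ does follow. If $N(a)=4$ and $f_a$ is separable, the three remaining roots may a priori form a single $\rho$-orbit $x_5$, $x_6=\rho(x_5)$, $x_7=\rho^2(x_5)$ with $x_5^{q^3+1}=1$; equivalently, the cubic cofactor may be irreducible over $\gf_{q^2}$. Nothing in your argument excludes this, and it is exactly the case that matters for $N_4$.

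\textbf{A possible fix.} The proof of Lemma~\ref{N7lemma} uses only separability and $\prod r_i=1$, so $\mathcal{S}=0$ for every separable $f_a$. Put $\Psi(u,v)=uv/(u+v)^2$, which satisfies $\Psi(\rho u,\rho v)=\Psi(u,v)^q$, and group the $21$ pairs of roots into $\rho$-orbits:
\begin{itemize}
\item The $6$ pairs inside $U_{q+1}$ each contribute trace $1$, as in Theorem~\ref{N7}.
\item For each root $v\in U_{q+1}$, the three pairs $\{v,x_j\}$ sum to $\tr_{\gf_{q^3}/\gf_q}(\Psi(v,x_5))$.
\item The three pairs among $x_5,x_6,x_7$ sum to $\tr_{\gf_{q^3}/\gf_q}(\Psi(x_5,x_6))$.
\end{itemize}
In the last two cases $w=v/x_5$ (resp.\ $x_5/x_6$) lies in $U_{q^3+1}\setminus\{1\}$, hence $w\notin\gf_{q^3}$. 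It follows that $\tr_{\gf_{q^3}/\gf_2}\bigl(1/(w+w^{-1})\bigr)=1$. Altogether $\tr_{\gf_q/\gf_2}(\mathcal{S})=6+4+1\equiv 1$, contradicting $\mathcal{S}=0$. With this, or any other argument ruling out the $3$-cycle, your Steps 2--3 then yield the stated values of $N_4$ and $N_6$.
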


\begin{proof}
As discussed above, if $N(f_a)=4$ or $6$, then $f_a(x)$ must have a multiple root. Now
	\[
	f'_a(x)=x^6+\bar{a}x^2=x^2(x+\bar{a}^{\frac{1}{4}})^4.
	\]
	Note that $x=0$ is not a root of $f_a(x)$. If $f_a(x)$ has a multiple root, then it must be $x=\bar{a}^{\frac{1}{4}}$. Then
	\[
	f_a(\bar{a}^{\frac{1}{4}})=a\bar{a}+1.
	\]
	Thus, if $f_a(x)=0$ has a multiple root, we must have $a\in U_{q+1}$. For $a\in U_{q+1}$, $\bar{a}=a^{-1}$ and
	\[
	f_a(x)=(x+a^{-\frac{1}{4}})^4(x^3+a).
	\]
	Hence, $f_a(x)$ cannot have $6$ distinct roots in $U_{q+1}$, so $N_6=0$. If $f_a(x)$ has $4$ distinct roots in $U_{q+1}$, the multiple root is $x=a^{-\frac{1}{4}}$, and the other three roots come from $x^3+a=0$. Note that $a^{-\frac{1}{4}}$ is a root of $x^3+a=0$ only when $a=1$. So we only consider $a\in U_{q+1}\setminus\{1\}$. Let $\gamma$ be a primitive element of $\gf_{q^2}$. Then $a=\gamma^{i(q-1)}$ for some integer $i$ with $1\leq i\leq q$. When $3\mid i(q-1)$, the three roots of $x^3=a$ are $a^{\frac{1}{3}}$, $a^{\frac{1}{3}}\omega$ and $a^{\frac{1}{3}}\omega^2$, where $\omega$ is the fixed primitive element of $\gf_4$ with $\omega^3=1$. If $m$ is even, then $\omega\notin U_{q+1}$, so $f_a(x)$ cannot have $4$ distinct roots in $U_{q+1}$. Hence $N_4=0$ when $m$ is even. When $m$ is odd, we have $\omega\in U_{q+1}$, and $x^3=a$ has $3$ roots in $U_{q+1}$ iff $3\mid i$. Therefore $f_a(x)$ has $4$ distinct roots in $U_{q+1}$ iff $3\mid i$ with $1\leq i\leq q$. Since $q\equiv2\pmod3$ when $m$ is odd, it follows that $N_4=\frac{q-2}{3}$.
\end{proof}

\subsection{Determination of $N_7$}

The result $N_7=0$ was proved in \cite{HKL21}. We give a short proof here for completeness. To this end, we first establish the following lemma.

\begin{lemma}\label{N7lemma}
	Suppose $f_a(x)=x^7+ax^4+\bar{a}x^3+1$ has $7$ distinct roots in $U_{q+1}$ for some $a\in\gf_{q^2}$. Denote the roots of $f_a(x)$ by $r_1,r_2,\ldots,r_7$. Define
	\[
	\mathcal{S}=\sum_{1\leq i<j\leq 7}\frac{r_ir_j}{(r_i+r_j)^2}.
	\]
	Then $\mathcal{S}=0$.
\end{lemma}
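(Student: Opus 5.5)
The plan is to turn $\mathcal S$ into a symmetric function of the roots, computed from the coefficients of $f_a$. The Hasse derivative of $f_a$ gives a clean identity for each root, and I would start from that.

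\emph{Step 1 (per-root identity).} Write $f_a(x)=(x+r_i)g_i(x)$. Then
\[
g_i(r_i)=f_a'(r_i),\qquad g_i'(r_i)=(D^{[2]}f_a)(r_i),
\]
where $D^{[2]}$ is the second Hasse derivative. Since $\binom72=21$, $\binom42=6$ and $\binom32=3$, we get $D^{[2]}f_a=x^5+\bar a x$. From the proof of Theorem~\ref{N4} we have $f_a'(x)=x^6+\bar a x^2$. Hence
\[
\sum_{j\ne i}\frac1{r_i+r_j}=\frac{g_i'(r_i)}{g_i(r_i)}=\frac{r_i^5+\bar a r_i}{r_i^6+\bar a r_i^2}=\frac1{r_i}.
\]
The denominator $f_a'(r_i)$ is nonzero because the roots are distinct. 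Squaring in characteristic $2$ gives $\sum_{j\ne i}(r_i+r_j)^{-2}=r_i^{-2}$.

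\emph{Step 2 (rewrite $\mathcal S$).} Put $u_{ij}=r_i/(r_i+r_j)$. The identity $r_ir_j=r_i^2+r_i(r_i+r_j)$ gives
\[
\frac{r_ir_j}{(r_i+r_j)^2}=u_{ij}+u_{ij}^2 .
\]
Let $A=\sum_{i<j}u_{ij}$. Then $\mathcal S=A+A^2$, so it suffices to show $A\in\gf_2$.

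For $r_i,r_j\in U_{q+1}$ we have $\bar u_{ij}=r_j/(r_i+r_j)=1+u_{ij}$. Since there are $21$ pairs, $\bar A=A+1$. This is where the hypothesis that all seven roots lie in $U_{q+1}$ enters.

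A second relation comes from Step 1. Summing $\sum_{j\ne i}u_{ij}=r_i\cdot r_i^{-1}=1$ over $i$ gives $\sum_{i\ne j}u_{ij}=7=1$. But this is automatic, because $u_{ij}+u_{ji}=1$. So Steps 1 and 2 alone do not force $A\in\gf_2$. I expect this to be the main obstacle.

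\emph{Step 3 (fallback: symmetric-function computation).} If no slicker argument appears, I would compute $\mathcal S$ directly. Each term is $w_{ij}=1/(t+t^{-1})$ with $t=r_i/r_j$. Form
\[
G(t)=\operatorname{Res}_x\bigl(f_a(x),f_a(tx)\bigr)/(t+1)^7 .
\]
Its roots are the $42$ ratios $r_i/r_j$ with $i\ne j$, and it is self-reciprocal. Write $G(t)=t^{21}H(t+t^{-1})$ with $\deg H=21$. The roots of $H$ are the $21$ values $t+t^{-1}$, one for each unordered pair. This avoids the division by $2$ that ordered sums would require in characteristic $2$.

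Then $\mathcal S=\sum 1/s$ over the roots $s$ of $H$, which is the ratio of the two lowest coefficients of $H$. These coefficients are polynomials in $a,\bar a$. Because $f_a$ is sparse, only a few terms of the resultant contribute to the low-degree coefficients.

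Finally, I would check that this ratio vanishes, using the relations on $a$ forced by seven roots in $U_{q+1}$. In particular $a\notin U_{q+1}$, since otherwise $f_a$ has a fourfold root by the proof of Theorem~\ref{N4}. I would also use the self-reciprocity $x^7\overline{f_a(\bar x^{-1})}=f_a(x)$ to pair terms.

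Before the resultant computation, I would first try to obtain a second independent relation for $A$ by applying the Step 1 identity to the conjugate-reciprocal form of $f_a$.
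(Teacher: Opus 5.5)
Your Steps~1 and~2 are correct computations, but the proposal does not prove the lemma, and the reduction in Step~2 points the wrong way. From $\bar A=A+1$ you get $A\notin\gf_q$, hence $A\notin\gf_2$ and $\mathcal S=A^2+A\neq0$. In fact $\bar{\mathcal S}=\mathcal S$, and $X^2+X=\mathcal S$ has roots $A,A+1\notin\gf_q$, so $\tr(\mathcal S)=1$. This is exactly the computation the paper uses in Theorem~\ref{N7} to reach its contradiction. So under the lemma's hypothesis you have already shown $\mathcal S\neq0$. Any argument that establishes $A\in\gf_2$ while using the $U_{q+1}$ hypothesis would therefore have to prove $N_7=0$ outright.

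The right goal is to prove $\mathcal S=0$ for every $a$ such that $f_a$ has $7$ distinct roots, without using $U_{q+1}$ at all. Your planned second relation from the conjugate-reciprocal form cannot supply this. Since $x^7\overline{f_a(\bar x^{-1})}=f_a(x)$, you would only recover the conjugate of the Step~1 identity, which is again automatic.

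The missing idea is to push Step~1 one order further. Since $D^{[3]}f_a=x^4+\bar a$, expand
$f_a(r_i+h)=h\,g_i(r_i+h)=h\,g_i(r_i)\prod_{j\ne i}\bigl(1+h/(r_i+r_j)\bigr)$.
Besides $e_1\bigl(\{u_{ij}\}_{j\ne i}\bigr)=1$, this also gives $e_2\bigl(\{u_{ij}\}_{j\ne i}\bigr)=r_i^2(r_i^4+\bar a)/(r_i^6+\bar a r_i^2)=1$ for every $i$. Now compute the $T^2$-coefficient of $\prod_{i\ne j}(1+u_{ij}T)$ in two ways:
\begin{itemize}
\item Row by row, it is $7\cdot1+\binom72\cdot1=28=0$.
\item Pairing $(i,j)$ with $(j,i)$ and using $u_{ij}+u_{ji}=1$ and $u_{ij}u_{ji}=r_ir_j/(r_i+r_j)^2$, the product becomes $\prod_{i<j}\bigl(1+T+\frac{r_ir_j}{(r_i+r_j)^2}T^2\bigr)$, whose $T^2$-coefficient is $\binom{21}{2}+\mathcal S=\mathcal S$.
\end{itemize}
Hence $\mathcal S=0$. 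This is precisely the paper's proof. It compares the $T^9$-coefficient of $\mathcal P(T)=\prod_i f_a((1+T)r_i)$ from the pairwise factorization with the one from $f_a((1+T)r_i)=(T+T^2+T^3)(r_i^7+\bar a r_i^3)+O(T^4)$.

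Your Step~3 fallback is about the same object: with $t=1+T$ one has $G(t)=\mathcal P(T)/T^7$. However, the coefficients $H_0,H_1$ you need are the Taylor coefficients of $G$ at $t=1$ of orders $T^0$ and $T^2$, not low-degree coefficients in $t$. So the sparsity of $f_a$ does not make them easy to read off, and as written that computation is not carried out. Also, no relations on $a$ are needed, since $H_1$ vanishes identically.
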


\begin{proof}
	We shall show that $\mathcal{S}=0$ by a double counting argument. Based on the factorization $f_a(x)=\prod_{j=1}^7(x+r_j)$, we define
	\[
	\mathcal{P}(T)=\prod_{i=1}^7 f_a((1+T)r_i).
	\]
	Since $\prod_{i=1}^7 r_i=1$, we have $\prod_{i=1}^7((1+T)r_i+r_i)=T^7$. A direct calculation gives
	\begin{align*}
		\mathcal{P}(T)
		&=T^7\prod_{1\leq i,j\leq 7, i\neq j}\bigl(r_iT+(r_i+r_j)\bigr)\\
		&=T^7\prod_{1\leq i<j\leq 7}\bigl(r_iT+(r_i+r_j)\bigr)\bigl(r_jT+(r_i+r_j)\bigr)\\
		&=T^7\prod_{1\leq i<j\leq 7}(r_i+r_j)^2
		\prod_{1\leq i<j\leq 7}\left(\frac{r_ir_j}{(r_i+r_j)^2}T^2+T+1\right).
	\end{align*}
	The coefficient of $T^2$ in $\prod_{1\leq i<j\leq 7}(\frac{r_ir_j}{(r_i+r_j)^2}T^2+T+1)$ is  $\mathcal{S}+\binom{21}{2}$, where $\mathcal{S}$ is as defined above. We conclude that the coefficient of $T^9$ in $\mathcal{P}(T)$ is $\mathcal{S}\cdot\prod_{1\leq i<j\leq 7}(r_i+r_j)^2$.
	
	On the other hand, each $r_i$ satisfies $f_a(r_i)=0$. Then
	\begin{align*}
		f_a((1+T)r_i)
		&=(1+T)^7r_i^7+a(1+T)^4r_i^4+\bar{a}(1+T)^3r_i^3+1\\
		&=(1+T+T^2+T^3)r_i^7+ar_i^4+\bar{a}(1+T+T^2+T^3)r_i^3+1+O(T^4)\\
		&=f_a(r_i)+(T+T^2+T^3)(r_i^7+\bar{a}r_i^3)+O(T^4)\\
		&=T\bigl((1+T+T^2)(r_i^7+\bar{a}r_i^3)+O(T^3)\bigr).
	\end{align*}
	Since $r_i^7+\overline{a}r_i^3=r_if'_a(r_i)$ and $r_i$ is a nonzero simple root of $f_a(x)$, we have $r_i^7+\bar{a}r_i^3\neq 0$. Hence
	\[
	\mathcal{P}(T)=\prod_{i=1}^7 f_a((1+T)r_i)
	=T^7(1+T+T^2)^7\prod_{i=1}^7(r_i^7+\bar{a}r_i^3)+O(T^{10}).
	\]
	The coefficient of $T^9$ in $\mathcal{P}(T)$ is
	\[
	\prod_{i=1}^7(r_i^7+\bar{a}r_i^3)\left(\binom{7}{1}+\binom{7}{2}\right)=0.
	\]
	Comparing the two ways of computing the coefficient of $T^9$, we obtain $\mathcal{S}=0$ since $\prod_{1\leq i<j\leq 7}(r_i+r_j)^2\neq 0$.
\end{proof}

\begin{theorem}\label{N7}
	We have $N_7=0$.
\end{theorem}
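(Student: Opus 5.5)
The plan is to argue by contradiction using Lemma \ref{N7lemma}. Suppose $N_7>0$. Then for some $a\in\gf_{q^2}$ the polynomial $f_a(x)$ has $7$ distinct roots $r_1,\ldots,r_7\in U_{q+1}$, and Lemma \ref{N7lemma} gives $\mathcal{S}=0$. I will show that $\mathcal{S}$ is in fact a nonzero element of $\gf_q$, by computing its absolute trace $\tr(\mathcal{S})$ and finding that it equals $1$.

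\emph{Rewriting each term.} Fix $i<j$ and put $u=r_i/r_j\in U_{q+1}$. Since the roots are distinct, $u\neq 1$. Then
\[
\frac{r_ir_j}{(r_i+r_j)^2}=\frac{u}{(1+u)^2}=\frac{1}{(u+u^{-1})^2}.
\]
Set $c_{ij}=u+u^{-1}$. Then $c_{ij}=u+\bar u\in\gf_q$, and $c_{ij}\neq 0$ because $u\neq 1$. The element $u$ is a root of $X^2+c_{ij}X+1$. Since $u\in U_{q+1}\setminus\{1\}$, we have $u\notin\gf_q$, so this quadratic is irreducible over $\gf_q$. As noted in Section \ref{pre}, this is exactly the condition $c_{ij}\in\D_1$, that is, $\tr(c_{ij}^{-1})=1$; equivalently, $\tr(1/c_{ij}^2)=1$.

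\emph{Summing the traces.} Each term of $\mathcal{S}$ is $1/c_{ij}^2\in\gf_q$, and $\tr(1/c_{ij}^2)=\tr(1/c_{ij})=1$ because the trace is invariant under squaring. Hence $\mathcal{S}\in\gf_q$ and
\[
\tr(\mathcal{S})=\sum_{1\le i<j\le 7}\tr\!\left(c_{ij}^{-2}\right)=\binom{7}{2}=21\equiv 1\pmod 2.
\]
So $\mathcal{S}\neq 0$, which contradicts Lemma \ref{N7lemma}. Therefore no $a\in\gf_{q^2}$ has $N(a)=7$, and $N_7=0$.

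The only step that needs care is the identification $u+u^{-1}\in\D_1$ for $u\in U_{q+1}\setminus\{1\}$. This is the standard irreducibility criterion for $X^2+cX+1$ over $\gf_q$: substituting $X=cY$ turns it into $Y^2+Y+c^{-2}$, which is irreducible iff $\tr(c^{-2})=1$. The rest of the argument is a parity count.
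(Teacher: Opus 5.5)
Your argument is correct and is essentially the paper's proof: write each term of $\mathcal{S}$ via $u=r_i/r_j\in U_{q+1}\setminus\{1\}$, use the trace criterion for $X^2+cX+1$ to get trace $1$ for each of the $\binom{7}{2}=21$ terms, and conclude $\tr(\mathcal{S})=1$, contradicting $\mathcal{S}=0$ from Lemma \ref{N7lemma}. One small slip: in fact $\frac{u}{(1+u)^2}=\frac{u}{1+u^2}=\frac{1}{u+u^{-1}}=c_{ij}^{-1}$, not $c_{ij}^{-2}$, but since $\tr(c_{ij}^{-1})=\tr(c_{ij}^{-2})$ your conclusion is unaffected.
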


\begin{proof}
	Suppose on the contrary that there exists some $a\in\gf_{q^2}$ such that $f_a(x)$ has $7$ distinct roots in $U_{q+1}$. Let $r_1,r_2,\ldots,r_7$ be the distinct roots of $f_a(x)$. By Lemma \ref{N7lemma},
	\[
	\mathcal{S}=\sum_{1\leq i<j\leq 7}\frac{r_ir_j}{(r_i+r_j)^2}=0.
	\]
	For any pair $(i,j)$ with $i<j$, set $t_{ij}=r_i/r_j\in U_{q+1}\setminus\{1\}$. Note that
	\[
	\frac{r_ir_j}{(r_i+r_j)^2}=\frac{1}{t_{ij}+t_{ij}^{-1}}\in\gf_q.
	\]
	Moreover, the quadratic equation
	\[
	X^2+(t_{ij}+t_{ij}^{-1})X+1=0
	\]
	is over $\gf_q$, and its two roots are $t_{ij}$ and $t_{ij}^{-1}$, neither of which lies in $\gf_q$ since $\gf_q\cap U_{q+1}=\{1\}$. By the trace criterion for quadratic equations in characteristic two, we have
	\[
	\tr_{\gf_q/\gf_2}\left(\frac{1}{(t_{ij}+t_{ij}^{-1})^2}\right)
	=\tr_{\gf_q/\gf_2}\left(\frac{1}{t_{ij}+t_{ij}^{-1}}\right)=1.
	\]
	Therefore,
	\[
	\tr_{\gf_q/\gf_2}(\mathcal{S})=\sum_{1\leq i<j\leq 7}\tr_{\gf_q/\gf_2}(\frac{r_ir_j}{(r_i+r_j)^2})=\binom{7}{2}=1,
	\]
	which contradicts $\mathcal{S}=0$. Thus, for any $a\in\gf_{q^2}$, $f_a(x)$ cannot have $7$ distinct roots in $U_{q+1}$. Consequently, $N_7=0$.
\end{proof}

\subsection{Determination of $N_5$}

If $f_a(x)$ has $4$ distinct roots in $U_{q+1}$ for some $a\in\gf_{q^2}$, namely $V=\{v_1,v_2,v_3,v_4\}$ is a subset of $U_{q+1}$ with $4$ elements, then
\begin{equation}
	\begin{pmatrix}
		v_1^7 & v_1^4 & v_1^3 & 1 \\
		v_2^7 & v_2^4 & v_2^3 & 1 \\
		v_3^7 & v_3^4 & v_3^3 & 1 \\
		v_4^7 & v_4^4 & v_4^3 & 1
	\end{pmatrix}
	\begin{pmatrix}
		1\\
		a\\
		\bar{a}\\
		1
	\end{pmatrix}
	=
	\begin{pmatrix}
		0\\
		0\\
		0\\
		0
	\end{pmatrix}.
\end{equation}
Thus the determinant of the $4\times4$ coefficient matrix, denoted by $D(V)$, must be zero. A direct calculation gives
\[
D(V)=\prod_{1\leq i<j\leq4}(v_i+v_j)\sigma_4^{-3}\bigl((\sigma_1^2+\sigma_2)(\overline{\sigma_1^2}+\overline{\sigma_2})+\sigma_1\overline{\sigma_1}\bigr)^2.
\]
Hence $D(V)=0$ iff
\[
(\sigma_1^2+\sigma_2)(\overline{\sigma_1^2}+\overline{\sigma_2})+\sigma_1\overline{\sigma_1}=0.
\]
Define $Z=\{V\subseteq U_{q+1}: |V|=4,\ D(V)=0\}$. For $\lambda\in U_{q+1}$, we have $V\in Z$ iff $\lambda V\in Z$. If $V=\lambda V$, then $\sigma_4(V)=\lambda^4\sigma_4(V)$, so $\lambda=1$. Moreover, for any $V$ with $|V|=4$ and $D(V)=0$, there is a unique $\lambda\in U_{q+1}$ such that $\sigma_4(\lambda V)=1$. Let
\[
W=\{V\subseteq U_{q+1}: |V|=4,\ D(V)=0,\ \sigma_4(V)=1\}.
\]
Then $|W|=|Z|/(q+1)$. We determine $|W|$ in the following theorem.

\begin{theorem}
	We have
	\[
	|W|=\frac{1}{24}\bigl(q^2-4q+2+(-1)^m(S_m-3)\bigr).
	\]
\end{theorem}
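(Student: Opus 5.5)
The plan is to count $W$ through the resolvent map $V\mapsto\operatorname{Res}(V)$, using Lemma~\ref{resv} for the fibres and Lemma~\ref{expressv} to rewrite the condition $D(V)=0$ in terms of $\operatorname{Res}(V)$. This reduces $|W|$ to the point counts of Lemma~\ref{no1} and Lemma~\ref{no2}.

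\emph{Rewriting the condition.} Let $V$ have $\sigma_4(V)=1$ and $\operatorname{Res}(V)=\{r_1,r_2,r_3\}$. Write $e_1,e_2,e_3$ for the elementary symmetric functions of the $r_i$. Set $N=\sigma_1\overline{\sigma_1}$. By Lemma~\ref{expressv}, $N=e_2$, $\sigma_2=e_1\in\gf_q$, and $\sigma_1^2+\overline{\sigma_1}^2=(\sigma_1+\overline{\sigma_1})^2=e_3$. Expanding, the condition defining $W$ becomes
\[
(\sigma_1^2+\sigma_2)(\overline{\sigma_1}^2+\sigma_2)+\sigma_1\overline{\sigma_1}
=N^2+\sigma_2e_3+\sigma_2^2+N
=e_2^2+e_1e_3+e_1^2+e_2=0 .
\]
This depends only on $R=\operatorname{Res}(V)$. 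Lemma~\ref{resv} then gives
\[
|W|=\#\{R\ni 0\ \text{satisfying the condition}\}+2\,\#\{R\not\ni0\ \text{satisfying the condition}\},
\]
where $R$ runs over $3$-subsets of $\D_1\cup\{0\}$.

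\emph{Case $0\in R$.} Write $R=\{0,r_1,r_2\}$ with $r_1\ne r_2\in\D_1$, so $e_3=0$. The condition reads $r_1^2r_2^2+(r_1+r_2)^2+r_1r_2=0$. Put $x=r_1^{-1}$ and $y=r_2^{-1}$, which lie in $\mathcal H_1$. Dividing by $r_1^2r_2^2$ gives $x^2+xy+y^2+1=0$, which is exactly (\ref{xy}).

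Lemma~\ref{no1} counts ordered pairs but does not exclude $x=y$, so the diagonal must be removed. If $x=y$ the equation becomes $x^2+1=0$, so $x=1$, and $1\in\mathcal H_1$ iff $m$ is odd. The number of admissible $R$ in this case is therefore
\[
\tfrac12\Bigl(\tfrac14\bigl(q-(-1)^m(K(1)+1)\bigr)-\tfrac{1-(-1)^m}{2}\Bigr).
\]

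\emph{Case $0\notin R$.} Put $x_i=r_i^{-1}\in\mathcal H_1$, which are distinct. Write $e_i'$ for the elementary symmetric functions of the $x_i$. Then $e_1=e_2'/e_3'$, $e_2=e_1'/e_3'$ and $e_3=1/e_3'$. Multiplying the condition by $e_3'^2$ turns it into
\[
e_1'^2+e_2'^2+e_2'+e_1'e_3'=0 .
\]
I will check that this is exactly $\Phi(x,y,z)=0$ from Lemma~\ref{no2}. In terms of the $e_i'$ we have $x^2+y^2+z^2+xy+xz+yz=e_1'^2+e_2'$ and $(x+y)(x+z)(y+z)=e_1'e_2'+e_3'$. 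Hence
\[
\Phi=(e_1'^2+e_2')(e_2'+1)+e_1'(e_1'e_2'+e_3')=e_1'^2+e_2'^2+e_2'+e_1'e_3'.
\]
Lemma~\ref{no2} counts ordered triples, so dividing by $6$ gives the number of such $R$, and with weight $2$ this case contributes
\[
\tfrac{1}{24}\bigl(q^2-7q+8+(-1)^m(S_m+3K(1)-6)\bigr).
\]

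\emph{Conclusion.} Adding the two contributions, the $K(1)$ terms cancel, and the total is $\frac1{24}\bigl(q^2-4q+2+(-1)^m(S_m-3)\bigr)$. The main obstacle is the middle step: the reciprocal substitution must produce exactly the polynomial $\Phi$, and the diagonal term in the case $0\in R$ must be handled correctly. The rest is bookkeeping.
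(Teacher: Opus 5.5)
Your proposal is correct and follows the paper's proof essentially step for step. It uses the same split according to whether $0\in\operatorname{Res}(V)$, the same reduction via Lemma~\ref{expressv} to Lemmas~\ref{no1} and~\ref{no2}, the same removal of the diagonal solution $(1,1)$ for odd $m$, and the same weights $1$ and $2$ from Lemma~\ref{resv}; your explicit check that the condition becomes exactly $\Phi(x,y,z)=0$ (via the elementary symmetric functions of the $1/r_i$) spells out a step the paper only asserts.
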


\begin{proof}
	For $V\subseteq U_{q+1}$ with $|V|=4$ and $\sigma_4(V)=1$, the condition $D(V)=0$ is equivalent to
	\begin{equation}\label{dsigma}
		\sigma_1\overline{\sigma_1}+\sigma_1^2\overline{\sigma_1^2}+\sigma_2(\sigma_1^2+\overline{\sigma_1^2})+\sigma_2^2=0.
	\end{equation}
We will express this condition in terms of $\operatorname{Res}(V)$, which is defined above, and complete the counting.
	
	First suppose $0\in\operatorname{Res}(V)=\{0,r_1,r_2\}$. By Lemma \ref{expressv}, we have $\sigma_1(V)=\sqrt{r_1r_2}$ and $\sigma_2(V)=r_1+r_2$ for some $r_1,r_2\in \D_1$. Then (\ref{dsigma}) becomes
	\[
	r_1r_2+r_1^2r_2^2+r_1^2+r_2^2=0.
	\]
	Since $r_1,r_2\neq0$, set $x=1/r_1$ and $y=1/r_2$. Then $x,y\in \mathcal{H}_1$ and
	\[
	x^2+xy+y^2+1=0.
	\]
	By Lemma \ref{no1}, the number of such $(x,y)$ is $\frac{1}{4}(q-(-1)^m(K(1)+1))$. If $x=y$, then $x=y=1$; note that $1\in \mathcal{H}_1$ iff $m$ is odd. Excluding the solution $(x,y)=(1,1)$ and identifying the ordered pairs $(x,y)$ and $(y,x)$, we obtain that the number of $\operatorname{Res}(V)$'s containing $0$ and satisfying (\ref{dsigma}) is
	\[
	\frac{1}{8}\bigl(q-2-(-1)^m(K(1)-1)\bigr).
	\]
	By Lemma \ref{resv}, $V$ and $\operatorname{Res}(V)$ are in one-to-one correspondence when $0\in\operatorname{Res}(V)$. Hence
	\begin{equation}\label{card1}
		|\{V\subseteq U_{q+1}: |V|=4,\ D(V)=0,\ \sigma_4(V)=1,\ 0\in\operatorname{Res}(V)\}|
		=\frac{1}{8}\bigl(q-2-(-1)^m(K(1)-1)\bigr).
	\end{equation}
	
	Now suppose $0\notin\operatorname{Res}(V)$. Then $\operatorname{Res}(V)=\{r_1,r_2,r_3\}$ with $r_i\in \D_1$. Let $x=1/r_1$, $y=1/r_2$, $z=1/r_3$; then $x,y,z\in \mathcal{H}_1$ are distinct. By Lemma \ref{expressv}, equation (\ref{dsigma}) becomes $\Phi(x,y,z)=0$, where
	\[
	\Phi(x,y,z)=(x^2+y^2+z^2+xy+xz+yz)(xy+xz+yz+1)+(x+y+z)(x+y)(x+z)(y+z).
	\]
	By Lemma \ref{no2},
	\[
	|\{(x,y,z)\in \mathcal{H}_1^3:\Phi(x,y,z)=0,\ x,y,z\ \text{distinct}\}|
	=\frac{1}{8}\bigl(q^2-7q+8+(-1)^m(S_m+3K(1)-6)\bigr).
	\]
	Ignoring the order of $(x,y,z)$, the number of sets $\{x,y,z\}$ is
	\[
	\frac{1}{48}\bigl(q^2-7q+8+(-1)^m(S_m+3K(1)-6)\bigr).
	\]
	By Lemma \ref{resv}, $V$ and $\operatorname{Res}(V)$ are in two-to-one correspondence when $0\notin\operatorname{Res}(V)$. Thus
	\begin{equation}\label{card2}
		|\{V\subseteq U_{q+1}: |V|=4,\ D(V)=0,\ \sigma_4(V)=1,\ 0\notin\operatorname{Res}(V)\}|
		=\frac{1}{24}\bigl(q^2-7q+8+(-1)^m(S_m+3K(1)-6)\bigr).
	\end{equation}	
	Combining (\ref{card1}) and (\ref{card2}), we obtain the following formula for $|W|$.
\end{proof}

\begin{theorem}\label{N5}
	We have
	\[
	N_5=\frac{1}{120}\bigl(q^2-7q+10-(-1)^m(T_m-5q+11)\bigr),
	\]
	where $T_m$ is given by (\ref{tmformula}).
\end{theorem}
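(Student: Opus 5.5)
The plan is to show that
\[
|W|=5N_5+N_4,
\]
and then read off $N_5$ from this identity, Theorem \ref{N4} and Theorem \ref{smtm}. The identity holds on the data of both examples. For $m=4$, $S_m=-T_4+2q=49$, so $|W|=\frac{1}{24}(256-64+2+46)=10=5\cdot 2+0$. For $m=5$, $S_m=-T_5=61$, so $|W|=\frac{1}{24}(1024-128+2-58)=35=5\cdot5+10$.

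The combinatorial heart is a bijection. On one side are the pairs $(a,V)$ with $a\in\gf_{q^2}$ and $V$ a $4$-subset of the roots of $f_a$ in $U_{q+1}$; on the other side are the $U_{q+1}$-orbits of $Z$. Since $N_6=N_7=0$ by Theorems \ref{N4} and \ref{N7}, the number of such pairs is $\sum_a\binom{N(a)}{4}=5N_5+N_4$. The number of orbits is $|W|$, since the orbits have size $q+1$ and each meets $W$ exactly once.

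Three steps establish the bijection.

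\emph{(i) Root sets lie in $Z$.} If $V$ is contained in the root set of $f_a$, then $(1,a,\bar a,1)^T$ is a nonzero kernel vector of the matrix in the definition of $D(V)$, so $D(V)=0$ and $V\in Z$.

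\emph{(ii) Orbits of $Z$ come from root sets.} Let $V\in Z$ and let $(c_0,c_1,c_2,c_3)$ be a nonzero kernel vector. For $v\in U_{q+1}$ we have $\bar v=v^{-1}$. Conjugating the relation $c_0v^7+c_1v^4+c_2v^3+c_3=0$ and multiplying by $v^7$ shows that $(\bar c_3,\bar c_2,\bar c_1,\bar c_0)$ is also in the kernel. I expect the kernel to be one-dimensional; I would check this by showing that some $3\times3$ minor is nonzero for four distinct elements of $U_{q+1}$. Granting this, the two kernel vectors are proportional, the factor lies in $U_{q+1}$, and by Hilbert~90 for $U_{q+1}$ we may rescale so that $c_3=\bar c_0$ and $c_2=\bar c_1$. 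Moreover $c_0\ne0$: otherwise $c_3=0$ as well, and $c_1x^4+c_2x^3=x^3(c_1x+c_2)$ would have to vanish at four distinct nonzero points. Finally, substituting $x=\lambda y$ with $\lambda\in U_{q+1}$ chosen so that the leading and constant coefficients become $1$ turns the polynomial into $f_{a'}$ for some $a'$. Hence some $\lambda^{-1}V$ in the orbit of $V$ lies in the root set of an $f_{a'}$.

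\emph{(iii) Uniqueness, the main obstacle.} I must show that each orbit contains exactly one pair $(a,V)$. Two facts help. First, two different values $a\ne b$ cannot share four roots in $U_{q+1}$: $f_a-f_b=x^3\bigl((a+b)x+\overline{a+b}\bigr)$ has at most one nonzero root. Second, $f_a(\mu x)$ is, up to the factor $\mu^7$, of the form $f_{a'}$ only when $\mu^7=1$. The difficulty is the case $7\mid q+1$, where nontrivial seventh roots of unity $\mu$ exist; the map $V\mapsto\mu V$ could then place several members of one orbit inside root sets. I would first try to exclude this using the product normalization. The seven roots of $f_a$ multiply to $1$. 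I would compare $\sigma_4(V)$ with the product of the remaining three roots, which are the roots of the cubic cofactor of $f_a$. The aim is to show that the $\lambda$ in step (ii) is pinned down exactly by $\sigma_4=1$ together with $c_0=c_3$ and $c_1=\bar c_2$. If that fails, I would instead count orbits directly by computing the stabilizer of $\mu$-translates and adjust by the resulting factor.

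Once $|W|=5N_5+N_4$ is established, the rest is arithmetic. I would substitute $|W|=\frac{1}{24}\bigl(q^2-4q+2+(-1)^m(S_m-3)\bigr)$ from the preceding theorem, $S_m=-T_m+(1+(-1)^m)q$ from Theorem \ref{smtm}, and $N_4=\frac{(q-2)(1-(-1)^m)}{6}$ from Theorem \ref{N4}. Solving for $N_5$ should give the stated formula
\[
N_5=\frac{1}{120}\bigl(q^2-7q+10-(-1)^m(T_m-5q+11)\bigr).
\]
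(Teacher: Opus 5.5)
\textbf{Gap: the fact $\gcd(7,q+1)=1$ is missing.} Your overall structure matches the paper's. You identify the $U_{q+1}$-orbits of $Z$ (equivalently, the elements of $W$) with incidence pairs $(a,S)$, deduce $|W|=\sum_a\binom{N(a)}{4}=5N_5+N_4$, and solve. Your kernel-vector/Hilbert~90 argument in (ii) is a legitimate substitute for the paper's explicit cofactor $g(x)$, and your final arithmetic checks out. But the argument is not complete: step (iii) is left open, and the ``main obstacle'' you name there never occurs. For every $m$ we have $7\nmid q+1$, because $2$ has order $3$ modulo $7$, so $2^m\equiv 1,2,4\pmod 7$ and $q+1\equiv 2,3,5\pmod 7$. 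This is exactly the fact the paper uses (``such $\mu$ exists uniquely since $(7,q+1)=1$''). You also need it in (ii) without saying so: choosing $\lambda\in U_{q+1}$ with $\lambda^7=\bar c_0/c_0$ requires the $7$th-power map on $U_{q+1}$ to be onto.

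Your fallback ideas cannot replace this fact. If $\zeta\in U_{q+1}$ satisfies $\zeta^7=1$ and $V$ lies in the root set of $f_a$, then $\zeta V$ lies in the root set of $f_{a\zeta^3}$. So no product normalization can exclude extra pairs in an orbit; the question is precisely whether such $\zeta\neq1$ exist.

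\textbf{How both steps close.} Once you know $\lambda\mapsto\lambda^7$ permutes $U_{q+1}$, existence in (ii) is immediate. For uniqueness in (iii), argue in two stages.
\begin{enumerate}
\item The kernel is at most one-dimensional. Take a kernel vector with $c_0=0$; then $c_1x^4+c_2x^3+c_3$ vanishes on $V$.
\begin{itemize}
\item If $c_1\neq0$, the polynomial equals $c_1\prod_{v\in V}(x+v)$. This forces $\sigma_2=\sigma_3=0$, hence $\sigma_1=0$ because $\sigma_3=\overline{\sigma_1}\sigma_4$ on $U_{q+1}$. But then $\prod_{v\in V}(x+v)=x^4+\sigma_4$ has only one root, a contradiction.
\item If $c_1=0$, then $c_2x^3+c_3$ vanishes at four points, so $c_2=c_3=0$.
\end{itemize}
So the kernel meets the hyperplane $c_0=0$ trivially.
\item Suppose $V$ lies in the root set of $f_a$ and $\lambda V$ lies in that of $f_b$. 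Then $(1,a,\bar a,1)$ and $(\lambda^7,\lambda^4 b,\lambda^3\bar b,1)$ are both kernel vectors for $V$, hence proportional. The last entries force the factor to be $1$, and the first entries give $\lambda^7=1$. Therefore $\lambda=1$ and $b=a$.
\end{enumerate}
Each orbit thus carries exactly one pair, and the rest of your proposal goes through.
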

\begin{proof}

To prove Theorem \ref{N5}, we define
\[
I=\{(a,S):a\in\gf_{q^2},\ S\subseteq U_{q+1},\ |S|=4,\ f_a(u)=0\ \text{for all }u\in S\}.
\]
For any $V\in W$, we have
\[
(\sigma_1^2+\sigma_2)(\overline{\sigma_1^2}+\overline{\sigma_2})+\sigma_1\overline{\sigma_1}=0.
\]
It is easy to verify that $\sigma_1\neq0$ and $\sigma_1^2+\sigma_2\neq0$. Hence
\[
\left(\frac{\overline{\sigma_1}}{\sigma_1^2+\sigma_2}\right)^{q+1}=1,
\]
so $\frac{\overline{\sigma_1}}{\sigma_1^2+\sigma_2}\in U_{q+1}$. Write
\[
\frac{\overline{\sigma_1}}{\sigma_1^2+\sigma_2}=\mu^7
\]
for some $\mu\in U_{q+1}$. Such $\mu$ exists uniquely since $(7,q+1)=1$. Consider the polynomial
\[
f(x)=x^4+\mu\sigma_1x^3+\mu^2\sigma_2x^2+\mu^3\overline{\sigma_1}x+\mu^4,
\]
whose roots are $\{\mu v_1,\mu v_2,\mu v_3,\mu v_4\}$. Define another polynomial
\[
g(x)=x^3+\mu\sigma_1x^2+\mu^2(\sigma_1^2+\sigma_2)x+\mu^{-4}.
\]
A direct check shows that
\[
f(x)g(x)=x^7+ax^4+\bar{a}x^3+1,
\]
where
\[
a=\mu^{-4}+\mu^3(\sigma_1^3+\overline{\sigma_1}),
\]
which is uniquely determined by $V$. Thus $V$ corresponds to a pair $(a,\mu V)\in I$.

Define
\[
\Phi:W\to I,\qquad V\mapsto(a,\mu V).
\]
We first show that $\Phi$ is injective. Suppose $\Phi(V_1)=\Phi(V_2)$. Then $\mu_1 V_1=\mu_2 V_2$, so $\sigma_4(\mu_1 V_1)=\sigma_4(\mu_2 V_2)$, i.e., $\mu_1^4=\mu_2^4$. Hence $\mu_1=\mu_2$, and consequently $V_1=V_2$. Thus $\Phi$ is injective.

Conversely, take any $(a,S)\in I$. The polynomial of degree $4$ whose roots are the elements of $S$ is
\[
h(x)=x^4+\sigma_1(S)x^3+\sigma_2(S)x^2+\sigma_3(S)x+\sigma_4(S).
\]
Since $S\subseteq U_{q+1}$, we have $\sigma_3(S)=\overline{\sigma_1(S)}\sigma_4(S)$. Also, since $f_a(u)=0$ for all $u\in S$, we have $h(x)\mid f_a(x)$. Comparing the coefficients of $x^6$, $x^5$ and the constant term of $f_a(x)$, the quotient is
\[
\frac{f_a(x)}{h(x)}=x^3+\sigma_1(S)x^2+(\sigma_1(S)^2+\sigma_2(S))x+\sigma_4(S)^{-1}.
\]
Comparing the coefficient of $x$, we get
\[
\overline{\sigma_1(S)}+(\sigma_1(S)^2+\sigma_2(S))\sigma_4(S)=0.
\]
Note that $\sigma_1(S)^2+\sigma_2(S)\neq0$; otherwise we would have $\sigma_1(S)=\sigma_2(S)=\sigma_3(S)=0$, and then $h(x)=x^4+\sigma_4(S)$ would have only one root, a contradiction. Hence
\begin{equation}\label{sigma4s}
	\sigma_4(S)=\frac{\overline{\sigma_1(S)}}{\sigma_1(S)^2+\sigma_2(S)}.
\end{equation}
It follows that $D(S)=0$, and there is a unique $V=\lambda S$ with $\lambda\in U_{q+1}$ such that $V\in W$. More precisely, $\lambda=\sigma_4(S)^{-1/4}$ and $\sigma_4(V)=\lambda^4\sigma_4(S)=1$.

We now show that the second coordinate of $\Phi(V)$ is $S$. We have $\sigma_1(V)=\lambda\sigma_1(S)$ and $\sigma_2(V)=\lambda^2\sigma_2(S)$. By (\ref{sigma4s}), the corresponding $\mu$ for $V$ satisfies
\[
\mu^7=\frac{\overline{\sigma_1(V)}}{\sigma_1(V)^2+\sigma_2(V)}
=\lambda^{-3}\frac{\overline{\sigma_1(S)}}{\sigma_1(S)^2+\sigma_2(S)}
=\lambda^{-7}.
\]
Thus $\mu=\lambda^{-1}$, and the second coordinate of $\Phi(V)$ is $\mu V=\lambda^{-1}\lambda S=S$.

Finally, if $\Phi(V)=(a',S)$, then for any $u\in S$,
\[
0=f_a(u)+f_{a'}(u)=(a+a')u^4+\overline{(a+a')}u^3.
\]
If $a\neq a'$, then $u=\overline{(a+a')}/(a+a')$ for all $u\in S$, which is impossible since $S$ has four distinct elements. Hence $\Phi$ is surjective.

By the definition of $I$,
\[
|W|=|I|=\sum_{a\in\gf_{q^2}}\binom{N(f_a)}{4}=5N_5+N_4.
\]
Therefore
\[
N_5=\frac{|W|-N_4}{5}.
\]
Combining this with Theorems \ref{N4} and \ref{smtm} gives the desired result.

\end{proof}

\subsection{Main result}

We are now ready to determine the value distribution. We first give the following lemma.

\begin{lemma}\label{b3}
	The number of solutions $x\in\gf_{q^2}$ of the equation
	\begin{equation}\label{b=1}
		(x+1)^d+x^d=1
	\end{equation}
	is $q+1-(-1)^m$.
\end{lemma}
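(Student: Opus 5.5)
We count the solutions of (\ref{b=1}) by splitting $\gf_{q^2}$ into $\gf_q$ and its complement. Write $y=\overline{x}$. For $x\neq0$ we have $x^d=x^{4q}x^{-3}=y^4/x^3$, and similarly $(x+1)^d=(y+1)^4/(x+1)^3$ for $x\neq1$.

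\textbf{Step 1: the degenerate points and $\gf_q$.} Both $x=0$ and $x=1$ are solutions, since $0^d=0$ and $1^d=1$. For $x\in\gf_q\setminus\{0,1\}$ we have $y=x$. Clearing denominators, (\ref{b=1}) becomes
\[
x^3(x+1)^4+x^4(x+1)^3=x^3(x+1)^3,
\]
and this holds because $x+(x+1)=1$. So every element of $\gf_q$ is a solution, which gives $q$ solutions. It then remains to show that the number of solutions in $\gf_{q^2}\setminus\gf_q$ is $1-(-1)^m$: that is, $0$ when $m$ is even, and $2$ when $m$ is odd.

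\textbf{Step 2: the candidates $\omega,\omega^2$ when $m$ is odd.} Here $\overline{\omega}=\omega^2$, and $\omega\notin\gf_q$. Since $4q\equiv 8\equiv 2\pmod 3$, we get
\[
\omega^d=\omega^{4q-3}=\omega^{4q}=\omega^2,\qquad (\omega+1)^d=(\omega^2)^d=\omega^4=\omega.
\]
Hence $(\omega+1)^d+\omega^d=\omega+\omega^2=1$. By the symmetry $x\mapsto x+1$, which preserves (\ref{b=1}), $\omega^2$ is also a solution.

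\textbf{Step 3: no other solutions outside $\gf_q$.} This is the main step. For $x\notin\gf_q$ with $y=\overline{x}$, the cleared equation reads
\[
E(x,y):\quad x^3(x+1)^3+(y+1)^4x^3+y^4(x+1)^3=0 .
\]
Its conjugate $E(y,x)=0$ must hold as well. I would form $E(x,y)+E(y,x)$, which is symmetric, and $\bigl(E(x,y)+E(y,x)\bigr)$ together with $E(x,y)$ divided by $x+y$ (nonzero since $x\notin\gf_q$), which gives a second symmetric expression. I then rewrite both in terms of $s=x+y=\tr_{\gf_{q^2}/\gf_q}(x)$ and $p=xy$, both in $\gf_q$, with $s\neq0$.

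This produces two polynomial equations in $(s,p)$. I would eliminate $p$, for instance by taking a resultant or by solving the lower-degree one for $p$. The expectation is that the only admissible solution is $s=1$, $p=1$, so that $x^2+x+1=0$ and $x\in\{\omega,\omega^2\}$. Any spurious branches, such as those with $p\in\{0,s^2\}$ or $x\in\{0,1\}$, would have to be ruled out using $x\notin\gf_q$. The condition $x\notin\gf_q$ means $\tr(p/s^2)=1$. For $s=p=1$ this gives $\tr(1)=1$, i.e.\ $m$ odd, which is consistent with Step 2. When $m$ is even, $\omega\in\gf_q$, so there are no extra solutions.

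The main obstacle is the elimination computation in Step 3. It is purely algebraic but messy: the equations have total degree about $7$. I would first substitute $x\mapsto x+1$ (which also preserves the equation) to symmetrize and reduce the degree before passing to $(s,p)$.

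Combining Steps 1–3, the total number of solutions is $q+1-(-1)^m$.
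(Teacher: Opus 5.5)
Your Steps 1 and 2 are correct, but Step 3 is not actually proved, and you yourself call it the main step. You outline an elimination (adding the conjugate equation, passing to $s=x+y$, $p=xy$, taking a resultant) and then state what you \emph{expect} it to yield. Nothing in the proposal rules out further solutions in $\gf_{q^2}\setminus\gf_q$, so the count $q+1-(-1)^m$ is not established as written.

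The missing observation is that in characteristic $2$ we have $(y+1)^4=y^4+1$. Your cleared equation is therefore linear in $y^4$ and factors completely:
\[
E(x,y)=x^3(x+1)^3+(y^4+1)x^3+y^4(x+1)^3=x^6+x^5+x^4+y^4(x^2+x+1)=(x+y)^4(x^2+x+1).
\]
With $y=\overline{x}$ this is exactly the paper's identity $(x^q+x)^4(x^2+x+1)=0$. For $x\notin\gf_q$ the factor $x+\overline{x}$ is nonzero, so $x^2+x+1=0$, i.e.\ $x\in\{\omega,\omega^2\}$. These lie outside $\gf_q$ precisely when $m$ is odd.

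So no conjugate equation, symmetric-function substitution or resultant is needed, and the ``degree about $7$'' obstacle disappears. Clearing denominators is reversible for $x\neq0,1$, so this single factorization also gives Steps 1 and 2 directly. The only extra ingredient is your explicit check that $x=0$ and $x=1$ are solutions, which the paper's proof leaves implicit.
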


\begin{proof}
	From (\ref{b=1}) we obtain
	\[
	(x+1)^{4q}x^3+x^{4q}(x+1)^3=(x+1)^3x^3,
	\]
	i.e.,
	\[
	(x^q+x)^4(x^2+x+1)=0.
	\]
	Thus $x\in\gf_q\cup\{\omega,\omega^2\}$, where $\omega$ is the primitive element of $\gf_4$. When $m$ is even, we have $\omega,\omega^2\in\gf_q$, so the number of solutions is $q$; when $m$ is odd, $\omega,\omega^2\notin\gf_q$, so the number of solutions is $q+2$. This completes the proof.
\end{proof}

The values of $N_4$, $N_5$, $N_6$ and $N_7$ are determined in Theorems \ref{N4}, \ref{N5} and \ref{N7}. By Theorem \ref{moment} and Lemma \ref{b3}, we immediately obtain the following.

\begin{theorem}
	The values of $N_0$, $N_1$, $N_2$ and $N_3$ are as follows:
	\begin{align*}
		N_0&=\frac{1}{30}\bigl(11q^2-17q+(-1)^m(-T_m+5q-1)\bigr),\\
		N_1&=\frac{1}{24}\bigl(9q^2+5q+2+(-1)^m(3T_m-11q+1)\bigr),\\
		N_2&=\frac{1}{6}\bigl(q^2+2q-2+(-1)^m(-T_m+2q+1)\bigr),\\
		N_3&=\frac{1}{12}\bigl(q^2-q+6+(-1)^m(T_m+q-5)\bigr).
	\end{align*}
\end{theorem}
Now we complete the computation of the multiset $\{s(a):a\in\gf_{q^2}\}$. Then Theorem \ref{distribution} follows directly from the relation between $C_d(\tau)$ and $s(a)$.

%\begin{corollary}
%The number of solution $(x_1,x_2,x_3,x_4) \in (\gf_{2^n}^*)^4$ such that
%\begin{eqnarray*}
%	\left\{\begin{array}{lll}
%		x_1+x_2+x_3+x_4&=&0,\\
%		x^d_1+x^d_2+x^d_3+x^d_4&=&0. \end{array}\right.
%\end{eqnarray*}
%\begin{proof}
	
%\end{proof}
%\end{corollary}

\section{Concluding remarks}\label{conclusion}
Let $q=2^m$ and $d=4(2^m-1)+1=4q-3$. In this paper, we have
determined the complete cross-correlation distribution between a binary
$m$-sequence of period $q^2-1$ and its $d$-decimation. Equivalently,
we determined the distribution of the number of roots in $U_{q+1}$ of
the polynomial
\[
f_a(X)=X^7+aX^4+\bar aX^3+1,
\qquad a\in\mathbb F_{q^2}.
\]

Our approach combines the reduction to the unit circle $U_{q+1}$, resolvent
transformations, and character sums. In particular, the enumeration of
four-element root subsets is reduced to counting solutions of certain
equations over $\mathbb F_q$. The resulting character sums are
expressed in terms of Kloosterman sums, and the remaining mixed
Kloosterman sum is evaluated through a theorem of Carlitz on field
extensions. This yields explicit formulas for every frequency in the
cross-correlation distribution.

The method developed here may also be useful for other Niho-type
decimations. More generally, whenever the root equation associated
with a Niho-type exponent can be reduced to a manageable equation on the
unit circle, the use of normalized root subsets and resolvents may
provide an effective way to determine the corresponding
cross-correlation distribution.

\section{Acknowledgement}
The authors thank Yaorang Yang and Yutong Zhang for helping develop some of the initial ideas in the proof of the main results. The authors acknowledge the use of ChatGPT for exploratory discussions during the early phase of the research. All mathematical results were developed, proved, and independently verified by the authors, who take full responsibility for the content and integrity of this work.

\end{document}